\newtheorem{theorem}{Theorem}[section]
\newtheorem{lemma}[theorem]{Lemma}
\newtheorem{definition}[theorem]{Definition}
\newtheorem{example}[theorem]{Example}
\newtheorem{corollary}[theorem]{Corollary}
\newtheorem{proposition}[theorem]{Proposition}
\newtheorem{remark}[theorem]{Remark}
\newtheorem{fact}[theorem]{Fact}
\newtheorem{note}[theorem]{Note}
\renewcommand{\iff}{\Leftrightarrow}
\newcommand{\tbSigma}{\mathbf{\Sigma}}
\newcommand{\tbPi}{\mathbf{\Pi}}
\newcommand{\tbDelta}{\mathbf{\Delta}}
\newcommand{\tbGamma}{\mathbf{\Gamma}}
\newcommand{\tbD}{\mathbf{D}}
\newcommand{\tlD}{\mbox{D}}
\newcommand{\bGd}{\mathbf{G}_\delta}
\newcommand{\bFs}{\mathbf{F}_\sigma}
\newcommand{\N}{\mathbb{N}}
\newcommand{\Z}{\mathbb{Z}}
\newcommand{\Q}{\mathbb{Q}}
\newcommand{\R}{\mathbb{R}}
\newcommand{\cantor}{\alphabet^\omega}
\newcommand{\alphabet}{{\bf 2}}
\newcommand{\mix}{\alphabet^{\leq \omega}}
\newcommand{\Bmix}{\N^{\leq \omega}}
\newcommand{\Baire}{\N^{\omega}}
\newcommand{\nil}{\textit{nil}}
\newcommand{\uup}{\twoheaduparrow\!\!}
\newcommand{\ddown}{\twoheaddownarrow\!\!}
\newcommand{\co}{\textit{co-}}
\newcommand{\CK}{\omega_1^{\textit{\scriptsize CK}}}
\renewcommand{\P}{\+P}
\newcommand{\PN}{\+P\N}
\newcommand{\FIN}{\+P_{\!<\omega}\N}
\begin{document}

\title[Wadge Hardness in Scott Spaces and Its Effectivization]
 {Wadge Hardness in Scott Spaces\\ and Its Effectivization}

\author[Ver\'onica Becher \& Serge Grigorieff]
{Ver\'onica Becher $^{1}$\thanks{Members of the Laboratoire Internationale Associ\'e INFINIS, Universidad de Buenos Aires-Universit\'e Paris Diderot-Paris 7. This research was partially done whilst the first author was a visiting fellow at the Isaac Newton Institute for Mathematical Sciences in the programme `Semantics \& Syntax'.
}
and Serge Grigorieff $^{2,\dagger}$\\
$^1$ FCEyN,Universidad de Buenos Aires \& CONICET, Argentina.
\addressbreak
$^2$ LIAFA, CNRS \& Universit\'e Paris Diderot - Paris 7, France.}
\date{24 September, 2012}
\maketitle
\bibliographystyle{plain}

\begin{abstract}
We prove some results on the Wadge order  on the space of sets of natural numbers
endowed with Scott topology,
and more generally, on omega-continuous domains.
Using alternating decreasing  chains we characterize the 
property of Wadge hardness for the classes of the Hausdorff difference hierarchy
(iterated differences of open sets). 
A similar characterization holds for Wadge one-to-one  
and finite-to-one completeness. 
We consider the same questions  for the effectivization of the Wadge relation.
We also show that for the space of sets of natural numbers endowed with the Scott topology,
 in each class  of the Hausdorff difference hierarchy
there are two strictly increasing chains of Wadge degrees of
sets properly in that class.
The length of these chains is the rank of the considered class, and each
element in one chain is incomparable with all the elements in the other chain.
\end{abstract}

\section{Introduction}

Wadge reducibility between subsets of possibly uncountable
topological spaces is the topological version of the classical
 many-one reducibility between 
subsets of discrete countable spaces like words or integers.
Since Wadge reducibility defines a preorder on subsets of the topological space, 
one can consider the associated equivalence relation,
whose classes are called Wadge degrees.
The theory of Wadge degrees is  remarkable
in the case of Polish
(i.e. completely metrizable with some countable dense subset)
zero-dimensional
(i.e. admitting some basis of clopen sets)
topological spaces,
for instance the Cantor space $\cantor$
and the Baire space $\Baire$ with the product of discrete topologies.
William~W.~Wadge~(1972) developed such theory by looking 
at reductions between subsets of $\N^\omega$ as winning strategies
in adequate infinite games, the now so-called Wadge games.
The strength of Wadge's theory comes from Martin's  
result (1975) about determinacy of Borel games.
The Wadge order on Borel subsets of $\Baire$
is well-founded and almost total:
every antichain has only two elements, 
and it is constituted by the Wadge
degree of a set and that of its complement.
In spaces that are not  totally disconnected this beauty breaks down. 
For instance, in the real line $\R$,  the Wadge order  is not 
well founded~(Hertling 1996a,1996b),
for each  $2\leq\xi<\omega_1$
there are chains  of proper $\tbSigma^0_\xi$ sets indexed by any countable ordinal~\cite{ikegami2010}
and, letting $\subseteq^*$ denote inclusion up to a finite set,
the preorder $(\PN,\subseteq^*)$
(hence also any partial order of cardinality $\aleph_1$)
can be embedded in the Wadge preorder
of Borel subsets of $\R$~\cite{ikegamischlichttanaka2012}.
Also in every metric space which is positive-dimensional
(i.e. at some point there is no clopen local base), there are
$2^{\aleph_0}$ Wadge incomparable sets in the Hausdorff
class $\tbD_2$ of differences of two open sets~\cite{schlicht2012}.

Outside zero-dimensional Polish spaces,
Wadge theory has been first considered in the space $\PN$ of sets of natural numbers 
endowed with the Scott topology (which is $T_0$ but not $T_2$, i.e., not Hausdorff), 
by A. Tang (1979,1981).
Then it has been extensively studied by V. Selivanov in a series of papers
(Selivanov 2004, 2005a, 2005b, 2006, 2008 and references there)
  in the context of $\omega$-continuous domains.
Although much of Wadge theory  fails in spaces with Scott topology, 
a non trivial part remains. 
For instance, Selivanov  showed  the non existence of self-dual degrees
in the Hausdorff difference hierarchy.

Another problem is to understand what part of Wadge theory remains in effective topological spaces by considering effective reductions.   Since Borel determinacy gives highly 
non computable game strategies, the main tool used in classical 
Wadge  theory is not available.
As pointed out to us by an anonymous referee, 
it  follows from
Fokina, Friedman and Tornquist 2010 that 
the structure of effective Wadge
degrees is very different, even for the Baire space:
the ordered structure of inclusion on computably enumerable subsets of $\N$
can be embedded into the effective Wadge hierarchy of
$\Sigma^0_2$ subsets of the Baire space;
cf. Proposition~\ref{p:embedding inclusion in effective degrees} infra.

In this paper we concentrate on Wadge theory on the classes  the Hausdorff difference hierarchy, i.e. on $\tbDelta^0_2$ Borel sets.
We also consider the same problems but using only effective reductions. 
With this study we provide the general framework for
the particular cases of effective Wadge hardness
we used  to define highly random reals in (Becher and Grigorieff 2005, 2009).
The paper is organized as follows. 
\S\ref{s:preliminaries} recalls the basic 
definitions of
the Borel and Hausdorff hierarchies and some material on domains
and effective versions of various notions. 
In \S\ref{s:Wadge}  we present  Wadge reductions  in general topological spaces, 
and their effective versions.
We make explicit the relation between Wadge completeness and universality,
and we recall what is already known about Wadge theory outside zero-dimensional spaces.  
In \S\ref{s:chains} we prove the main results of the paper.
Theorem~\ref{thm:Dalphachains} characterizes
Wadge hardness for the classes of the Hausdorff difference hierarchy
in terms of alternating decreasing  chains.
Theorem~\ref{thm:one to one} gives a similar characterization
for Wadge one-to-one  and  finite-to-one completeness.  
We  prove effective versions of these results.

Theorem~\ref{thm:chain degrees} shows that,
for the space of sets of natural numbers endowed with the Scott topology,
 in each class  of the Hausdorff hierarchy there are two 
strictly increasing chains of Wadge degrees of sets properly in that class.
The length of these chains is the rank of the considered class, and each
element in one chain is incomparable with all the elements in the other chain.
This result builds on the construction done  by Selivanov (2005b)
of two incomparable Wadge degrees   
 which properly belong to an
arbitrarily given Hausdorff class.
Finally in \S\ref{s:PN versus} we prove  results on 
Wadge theory for the Borel class~$\tbSigma^0_2$.

\section{Preliminaries}\label{s:preliminaries}

For any set $A$, $\+P A$ denotes the powerset of $A$
and $\+P_{\!<\omega} A$ denotes the family of all finite subsets of $A$.
We write $\N$ for the set of natural numbers.
When $X$ is a subset of $\N$, and $i\in\N$,  $X+i=\{ n+i \ | \ n\in X\}$ and 
$iX=\{ i\ n \ | \ n\in X\}$.
Subsets of $\N$ are simply called  {\it sets},
subsets of $\PN$ are called {\it families},
and   collections of subsets of $\PN$ are called  {\it classes}.
As usual, $\Baire$ is the set of infinite sequences of natural numbers
 (the Baire space),
and $\Z, \Q, \R$ denote the sets of integer, rational and real numbers, respectively.
Greek letters $\alpha,\beta, \gamma, \delta$ are used to denote ordinals. 
We write 
$\omega$ for the first infinite ordinal, $\omega_1$ for the first uncountable ordinal,
and $\CK$ for the least non computable ordinal (the Church-Kleene ordinal).
For any two ordinals $\alpha,\beta$,  
$\alpha\sim\beta$ means that they have the same parity.

\subsection{Borel and Hausdorff Hierarchies}\label{ss:hierarchies}

Following (Selivanov 2005b, 2008),
we consider the Borel hierarchy and the Hausdorff-Kuratowski difference hierarchy in 
a general topological space $P$.
The Borel hierarchy consists of classes of subsets of $P$, namely
$\tbSigma^0_\alpha, \tbPi^0_\alpha, \tbDelta^0_\alpha$, where $1\leq\alpha<\omega_1$.
They are defined by induction on $\alpha$~:
$\tbSigma^0_1$ is the class of open subsets of $P$,
$\tbSigma^0_2$ is the class of countable unions of differences of open subsets of $P$,
and, for $\alpha>2$, $\tbSigma^0_\alpha$ is the class of countable unions of
sets in  $\bigcup_{\beta<\alpha}\tbPi^0_\beta$.
The class $\tbPi^0_\alpha$ is the class of complements
of sets in $\tbSigma^0_\alpha$ and
$\tbDelta^0_\alpha=\tbSigma^0_\alpha\cap\tbPi^0_\alpha$.
The class $\bGd$ (respectively $\bFs$) is the family of
countable intersections of open sets (respectively countable unions of closed sets).
In general, it is a proper subclass of $\tbPi^0_2$
(respectively $\tbSigma^0_2$).

For an ordinal $1\leq\alpha<\omega_1$, the operation $D_{\alpha}$
sends an $\alpha$-sequence of sets $(A_\beta)_{\beta<\alpha}$ to the set
$D_{\alpha}((A_\beta)_{\beta<\alpha})= 
\bigcup \{ A_{\beta}\setminus
\cup_{\gamma<\beta} A_\gamma \mid \beta<\alpha,\ \beta\not\sim\alpha\}$
(recall that $\beta\not\sim\alpha$ means that $\alpha,\beta$ have different parities).
The Hausdorff hierarchy is constituted by the Hausdorff classes
$\tbD_\alpha, \co\tbD_\alpha$ where $1\leq\alpha<\omega_1$.
The class $\tbD_\alpha$ consists of all subsets of $P$ of the form 
$D_\alpha((A_\beta)_{\beta<\alpha})$, 
where $(A_\beta)_{\beta<\alpha}$ is an $\alpha$-sequence of open subsets of $P$
(with no loss of generality, this $\alpha$-sequence can be supposed
increasing with respect to the inclusion relation).
The class $\co\tbD_\alpha$ is the class of complements of sets in $\tbD_\alpha$.
The Hausdorff-Kuratowski hierarchy is obtained by replacing
open sets by sets in a $\tbSigma^0_\xi$ class, $\xi<\omega_1$~:
its classes are denoted by
$\tbD_\alpha(\tbSigma^0_\xi)$ and $\co\tbD_\alpha(\tbSigma^0_\xi)$.
Hausdorff-Kuratowski theorem ensures that
$\bigcup_{1\leq\alpha<\omega_1}\tbD_\alpha(\tbSigma^0_\xi)=~\tbDelta^0_{\xi+1}$ for any $1\leq\xi<\omega_1$.

When the space $P$ is not obvious from the context, or when we want to make it explicit,
we will add a parenthesized reference to the space~$P$, writing 
$\tbSigma^0_\alpha(P)$, 
$\tbPi^0_\alpha(P)$,
$\tbDelta^0_\alpha(P)$,
$\bGd(P)$, 
$\bFs(P)$,
$\tbD_\alpha(P)$,
$\tbD_\alpha(\tbSigma^0_\xi(P))$,
$\co\tbD_\alpha(\tbSigma^0_\xi(P))$.
Let us mention the following classical result.

\begin{fact}\label{f:inverse image Borel}
If  $f:P\to Q$ is continuous then the inverse image of a subset of $Q$
in some Borel or Hausdorff-Kuratowski class of~$Q$
is in the corresponding class of subsets of~$P$.
\end{fact}

\subsection{Domains}\label{ss:domains}

We briefly recall the main definitions 
and refer the reader to classical papers and books,
for instance~\cite{abramsky1994,edalat1997,BookDomains}.
A {\it directed complete partial order (dcpo)} is a partially ordered set
$(P,\sqsubseteq)$ such that every non empty directed subset $S$ has a
least upper bound (denoted by $\sqcup S$).
A subset $X$ of $P$ is an upset (respectively downset) if, for all $x,y\in P$,
if $x\in X$ and $x\sqsubseteq y$ 
(respectively $y\sqsubseteq x$) then $y\in X$.
The {\it Scott topology} on a dcpo $P$ admits as closed sets
all downsets closed under suprema of directed subsets of $P$.
Thus, open subsets are upsets $O$ such that every directed set
with supremum in $O$ has an element in $O$.
The Scott topology is $T_0$
(if $x\neq y$ then there exists an open set which contains
only one of the two points $x,y$).
The order relation $\sqsubseteq$ can be recovered from the Scott topology
as the specialization order: $x\sqsubseteq y$ if and only
if every open set which contains $x$ also contains $y$.
A function $f:P\to Q$ between two dcpo's is {\em continuous} with respect
to the Scott topologies if and only if it is increasing
and preserves suprema of directed subsets: if $S\subseteq P$ is directed
then $f(\sqcup S)=\sqcup f(S)$.

The approximation (or way-below) relation 
in  a dcpo $(P,\sqsubseteq)$ is
defined as follows: $x\ll y$ if, for all directed subsets $S$,
$y \sqsubseteq \sqcup S$ implies $x\sqsubseteq s$ for some $s\in S$.
Thus, $x\ll y \Rightarrow x\sqsubseteq y$ and 
$x' \sqsubseteq x\ll y\sqsubseteq y'\ \Rightarrow\ x'\ll y'$.
An element $x\in P$ is {\it compact (or finite)} if $x\ll x$.
If $x$ is compact then $x\ll y \iff x\sqsubseteq y$.
The set of compact elements is denoted $K(P)$.
A {\it continuous domain} is a dcpo such that,
for every $x\in P$, the set $\ddown x =\{z\in P \mid z\ll x\}$
is directed and $x=\sqcup \ddown x$.
A basis of a continuous domain is any set $B$ such that, for all~$x$,
$B\cap\!\ddown x$ is directed and $x=\sqcup(B\cap\!\ddown x)$.
An {\it $\omega$-continuous domain} is a dcpo admitting a
countable basis.
An {\it $\omega$-algebraic domain} is a dcpo for which $K(P)$
is a countable basis.
If $B$ is a basis of a continuous domain then the sets 
$\uup b$,  for $ b\in B$, form  a topological basis,
where $\uup x=\{y\mid x\ll y\}$.
Let us mention the classical interpolation property.

\begin{fact}\label{f:interpolation}
Let $B$ be a basis of the continuous domain $P$.
Suppose $y\in P$  and $X\subseteq P$ is finite such that for each $x\in X$, $x\ll y$. Then, there exists $b\in B$ such that for each $x\in X$ \ $x\ll b$ and~$b\ll y$.
\end{fact}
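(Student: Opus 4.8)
The plan is to first establish the single-element case --- that $x\ll y$ yields some $b\in B$ with $x\ll b\ll y$ --- and then derive the finite case from it almost for free, using throughout the directedness built into the notion of a basis: for every $z\in P$ the set $B\cap\ddown z$ is directed with $\sqcup(B\cap\ddown z)=z$.

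For the single-element case I would work entirely inside the basis and introduce the auxiliary set
\[
 D=\{\, b\in B \mid \exists c\in B,\ b\ll c\ll y \,\}.
\]
The whole argument then reduces to showing that $D$ is directed and that $\sqcup D=y$. Granting this, since $x\ll y=\sqcup D$ and $D$ is directed, there is $b\in D$ with $x\sqsubseteq b$; unfolding the definition of $D$ gives $c\in B$ with $b\ll c\ll y$, whence $x\sqsubseteq b\ll c$ forces $x\ll c$, and $c\in B$ with $c\ll y$ is exactly the interpolant we want.

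These two properties of $D$ are the heart of the matter, and the place where I expect the only genuine subtlety. For directedness, given $b_1\ll c_1\ll y$ and $b_2\ll c_2\ll y$ in $D$, I would first use directedness of $B\cap\ddown y$ to find $c\in B$ with $c_1,c_2\sqsubseteq c\ll y$, so that $b_1,b_2\ll c=\sqcup(B\cap\ddown c)$; a second appeal, this time to directedness of $B\cap\ddown c$, produces $d\in B$ with $b_1,b_2\sqsubseteq d\ll c\ll y$, and this $d$ lies in $D$ and dominates both. For $\sqcup D=y$, the inequality $\sqcup D\sqsubseteq y$ is immediate since every element of $D$ is way-below $y$; for the converse, each $c\in B\cap\ddown y$ satisfies $c=\sqcup(B\cap\ddown c)$ with every member of $B\cap\ddown c$ lying in $D$, so $c\sqsubseteq\sqcup D$, and therefore $y=\sqcup(B\cap\ddown y)\sqsubseteq\sqcup D$. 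This ``interpolate twice'' use of directedness is the only delicate step; everything else is bookkeeping with the way-below relation.

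Finally, for the finite set $X=\{x_1,\dots,x_n\}$ with each $x_i\ll y$, I would apply the single-element case to each $x_i$ separately, obtaining $b_i\in B$ with $x_i\ll b_i\ll y$, and then invoke directedness of $B\cap\ddown y$ one last time to find a single $b\in B$ with $b_1,\dots,b_n\sqsubseteq b\ll y$. Since $x_i\ll b_i\sqsubseteq b$ gives $x_i\ll b$ for every $i$, and $b\ll y$, this $b$ satisfies the conclusion of the fact.
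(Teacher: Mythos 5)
Your proof is correct. Note that the paper itself gives no proof of this fact: it is stated as ``the classical interpolation property'' and implicitly deferred to the domain-theory literature it cites (Abramsky \& Jung; Gierz et al.), so there is no in-paper argument to compare against. What you have written is essentially the standard textbook proof: form the auxiliary set $D=\{b\in B \mid \exists c\in B,\ b\ll c\ll y\}$, show it is directed with $\sqcup D=y$ (your two appeals to directedness, of $B\cap\ddown y$ and then of $B\cap\ddown c$, are exactly the right steps), apply the definition of $x\ll y$ to the directed set $D$, and reduce the finite case to the single-element case by one more use of directedness of $B\cap\ddown y$. All steps check out against the paper's definitions of basis and of the way-below relation, including the repeated use of $x'\sqsubseteq x\ll y\sqsubseteq y'\Rightarrow x'\ll y'$.
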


\subsection{The Scott Domain $\PN$}\label{ss:PN}

The ordered set $(\PN,\subseteq)$ is an $\omega$-algebraic domain,
its compact elements are the finite sets
and $X\ll Y$ if and only if $X$ is finite and included in $Y$.
A topological basis of the Scott  topology is the class $\{\+B_A\mid A\in\FIN\}$
where $\+B_A=\{X\in\PN\mid A\subseteq X\}$.
The Scott topology gives ``positive information'' about sets,
and contrasts with the Cantor topology on $\cantor$ which gives positive
and negative information about sets via  their characteristic functions.
As recalled before, the Scott topology on $\PN$ is $T_0$:
if~$i\in X\setminus Y$, the open set $\+B_{\{i\}}$ contains $X$ but not~$Y$.

As observed by Selivanov (2005b)
the classes of finite rank of the Scott Borel hierarchy in $\PN$ do not coincide
with the corresponding ones in the Cantor space $\cantor$~: for all~$n\in\N$,
$n\geq1$,
$\tbSigma^0_n(\PN)
\subsetneq \tbSigma^0_n(\cantor)
\subsetneq \tbSigma^0_{n+1}(\PN)$.
For instance,
$\+X=\PN\setminus\{\N\}$, defined by the formula
$\exists x\ (x\notin X)$, is $\tbSigma^0_1(\cantor)$ and
$\tbSigma^0_2(\PN)$ but neither Scott open nor Scott closed.
However,  the classes of infinite rank of the Borel hierarchy 
in $\PN$ and $\cantor$ coincide.
The only subfamilies of $\PN$ which are both open and $\bFs$
are $\emptyset$ and $\PN$.
To see why, suppose $\+O$ is open, $\+O\neq\emptyset$,
and $\+X$ is $\bFs$,  $\+X\neq\PN$.
Let $\+X=\bigcup_{i\in\N}(\PN\setminus\+O_i)$
where $\+O_i$ is open. Observe that $\+O_i$ is non empty.
Choose finite sets $A$ and $B_i$, for $i\in\N$,  such that
$\+B_A\subseteq\+O$ and $\+B_{B_i}\subseteq\+O_i$.
Let $C=A\cup\bigcup_{i\in\N}B_i$. 
Then $C\in\+O$ since $C\supseteq A$ 
and $C\in\+O_i$ since $C\supseteq B_i$.
Thus, $C$ is in $\+O\setminus\+X$,
showing $\+O\neq\+X$.
A map $f:\PN\to\PN$ is Scott continuous if and only if
$f(X)=\bigcup_{A\subseteq X, A\in\FIN} f(A)$
for all $X$.
If $f:\PN\to \PN$ is a continuous bijection
then there exists a permutation $\theta:\N\to\N$ such that
for all $X$, $f(X)=\{\theta(x)\mid x\in X\}.$
In particular, $f^{-1}$ is also continuous.
Indeed, the inclusion order is a topological notion
(namely, the specialization order)
hence $f$ is an automorphism of $(\PN,\subseteq)$ and
as such, respects unions and maps singleton sets into singleton sets.
Let us mention a classical result which shows that $\PN$ is ``universal" for
$\omega$-continuous domains.

\begin{fact}\label{f:PN}
Let $(b_n)_{n\in\N}$ enumerate a countable basis
of the $\omega$-continuous domain $(P,\sqsubseteq)$.
Let $\varphi:P\to\PN$ be such that $\varphi(x)=\{n\mid b_n\ll x\}$.
\begin{enumerate}
\item
The map $\varphi$ is a dcpo embedding
(i.e. an isomorphism between
$(P,\sqsubseteq)$ and $(\varphi(P),\subseteq)$ such that
$\{x\in P\mid \varphi(x)\subseteq Z\}$ is directed for any
$Z\in\PN$).

\item
The map $\varphi$ is is a topological embedding
of $P$ into $\PN$ (i.e. an homeomorphism from $P$
to the subspace $\varphi(P)$ of $\PN$).
Then, for any Borel or Hausdorff-Kuratowski class $\tbGamma$,
$\{\varphi(Z)\mid Z\in\tbGamma(P)\}
=\tbGamma(\varphi(P))=\{\+X\cap\varphi(P) \mid \+X\in\tbGamma(\PN)\}$.

\item
$\varphi(P)$ is $\tbPi^0_2$ in $\PN$.
\end{enumerate}
\end{fact}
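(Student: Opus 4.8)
The plan is to handle the three items in order, using as the two workhorses the interpolation property (Fact~\ref{f:interpolation}) and the basis identity $x=\sqcup(B\cap\ddown x)$, where I write $B=\{b_n\mid n\in\N\}$ so that $\varphi(x)=\{n\mid b_n\ll x\}$ records exactly the indices of the basis elements way-below $x$, i.e. $\{b_n\mid n\in\varphi(x)\}=B\cap\ddown x$. For item~(1) I would first show $\varphi$ is an order isomorphism onto its image by proving $x\sqsubseteq y\iff\varphi(x)\subseteq\varphi(y)$: the forward direction is $b\ll x\sqsubseteq y\Rightarrow b\ll y$, while for the converse, $\varphi(x)\subseteq\varphi(y)$ makes every $b\in B\cap\ddown x$ satisfy $b\ll y$, hence $b\sqsubseteq y$, so $y$ bounds $B\cap\ddown x$ and $x=\sqcup(B\cap\ddown x)\sqsubseteq y$; injectivity follows. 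Next I would prove $\varphi$ preserves directed suprema, $\varphi(\sqcup S)=\bigcup_{s\in S}\varphi(s)$ for directed $S$: the inclusion $\supseteq$ is monotonicity, and for $\subseteq$, given $b_n\ll\sqcup S$ interpolation yields a basis element $c$ with $b_n\ll c\ll\sqcup S$, whence $c\sqsubseteq s$ for some $s\in S$ and thus $b_n\ll s$, i.e. $n\in\varphi(s)$. This makes $\varphi(P)$ closed under directed unions and $\varphi$ a dcpo isomorphism onto it. The fibre condition in the statement serves to recover $\varphi^{-1}$: when $Z=\varphi(w)$ lies in the image, the order isomorphism gives $\{x\mid\varphi(x)\subseteq Z\}=\{x\mid x\sqsubseteq w\}$, a principal ideal, which is directed with supremum $w=\varphi^{-1}(Z)$, so $\varphi^{-1}$ is continuous.

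For item~(2) I would check that $\varphi$ is a homeomorphism onto the subspace $\varphi(P)$ by testing it on bases. Since the sets $\uup b_n$ form a basis of $P$ and the sets $\+B_{\{n\}}$ a subbasis of $\PN$, it suffices to observe $\varphi^{-1}(\+B_{\{n\}})=\{x\mid b_n\ll x\}=\uup b_n$, giving continuity, and $\varphi(\uup b_n)=\{W\in\varphi(P)\mid n\in W\}=\+B_{\{n\}}\cap\varphi(P)$, giving that $\varphi$ is open onto its image; together with injectivity from~(1) this is a homeomorphism. The class equalities then split in two. The equality $\{\varphi(Z)\mid Z\in\tbGamma(P)\}=\tbGamma(\varphi(P))$ is immediate from the homeomorphism with Fact~\ref{f:inverse image Borel} applied to $\varphi$ and $\varphi^{-1}$. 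The equality $\tbGamma(\varphi(P))=\{\+X\cap\varphi(P)\mid\+X\in\tbGamma(\PN)\}$ is the standard fact that the Borel and Hausdorff-Kuratowski classes of a subspace are the traces of the ambient classes: the inclusion $\supseteq$ is Fact~\ref{f:inverse image Borel} for the inclusion map, and $\subseteq$ follows by induction along the hierarchy from the base case (the subspace topology) using that traces commute with countable unions, relative complements and differences.

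The heart of the proof is item~(3), and I would first identify the image as the set of rounded ideals of the abstract basis $(B,\ll)$: namely $Z\in\varphi(P)$ if and only if $Z$ is nonempty, directed ($\forall i,j\in Z\ \exists k\in Z$ with $b_i\ll b_k$ and $b_j\ll b_k$) and downward closed ($b_n\ll b_m$ and $m\in Z$ imply $n\in Z$). Necessity uses interpolation and transitivity of $\ll$; for sufficiency one sets $x=\sqcup\{b_m\mid m\in Z\}$ and checks $\varphi(x)=Z$, the inclusion $\varphi(x)\subseteq Z$ requiring interpolation followed by downward closure. I expect the main obstacle to be spotting that downward closure is indispensable here: directedness and roundedness alone do not pin down the image, since a directed rounded $Z$ may omit indices of smaller basis elements lying below its supremum, giving $\varphi(x)\supsetneq Z$.

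Once the characterization is in hand the complexity count is routine, on recalling from the definitions that $\tbPi^0_2$ is exactly the class of countable intersections of sets of the form (closed $\cup$ open). Each downward-closure clause, for a pair $(n,m)$ with $b_n\ll b_m$, defines $\{m\notin Z\}\cup\{n\in Z\}$, a set of that form whose complement is the difference $\+B_{\{m\}}\setminus\+B_{\{n\}}\in\tbD_2$; each directedness clause $i,j\in Z\Rightarrow\exists k\in Z(\dots)$ is again closed $\cup$ open because the existential defines an open set; and nonemptiness $\bigcup_n\{n\in Z\}$ is open. Taking the countable intersection over all indices keeps us inside $\tbPi^0_2$, so $\varphi(P)\in\tbPi^0_2(\PN)$.
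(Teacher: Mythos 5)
Your proposal is correct and, for the two items the paper actually proves, it follows essentially the same route as the paper: item 2 via the computation $\varphi(\uup b_n)=\+B_{\{n\}}\cap\varphi(P)$ on basic opens, and item 3 via characterizing $\varphi(P)$ as the directed, $\ll$-downward-closed sets of indices and writing the clauses as countable intersections of (closed $\cup$ open) families --- your formulas are literally the paper's, with nonemptiness added as a harmless extra open clause that does not affect the $\tbPi^0_2$ count. Your additional detail (the trace/relativization induction behind the class equalities in item 2, the verification that $x=\sqcup\{b_m\mid m\in Z\}$ recovers $Z$ in item 3, and all of item 1) fills in material the paper leaves implicit or skips outright, since the paper's proof begins ``We just prove items 2 and 3.''

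One point deserves flagging, though it is a defect of the statement rather than of your argument. In item 1 you verify the fibre-directedness clause only for $Z=\varphi(w)$ in the image, whereas the statement asserts it for \emph{any} $Z\in\PN$. You could not have done better: the literal clause is false. Take $P=\PN$ itself with basis all finite sets, enumerated as $(D_n)_{n\in\N}$, and let $Z=\{n\mid D_n\in\{\emptyset,\{0\},\{1\}\}\}$. Then $\{X\in\PN\mid\varphi(X)\subseteq Z\}=\{\emptyset,\{0\},\{1\}\}$, which is not directed, since $\{0\}$ and $\{1\}$ have no upper bound inside it (any upper bound contains $\{0,1\}$, whose index lies outside $Z$). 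So your restriction to $Z$ in the image --- exactly what is needed to recover $\varphi^{-1}$ and to make $\varphi$ a dcpo isomorphism onto $\varphi(P)$ --- is the only tenable reading of the clause; it would have been better to say explicitly that the general form fails rather than to pass over the discrepancy silently.
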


\begin{proof}
We just prove items 2 and 3.  Since $\varphi$ is injective,
it suffices to consider the case of basic open sets $\uup b_n$'s of $P$.
Observe that
$\varphi(\uup b_n)=\{\varphi(x) \mid x\in P,\ b_n\ll x\}
=\+B_{\{n\}}\cap\varphi(P)$. This proves item 2.
For item 3 observe that $X\in\varphi(P)$ if and only if,
relative to the approximation relation $\ll$ of $P$,
$\{b_i\mid i\in X\}$ is a directed downset. 
Letting $R=\{(i,j)\mid b_i\ll b_j\}$,
this means that $X$ satisfies the formulas
$\forall i,j\ \exists k\ (i,j\in X\Rightarrow
                    (k\in X\wedge(i,k)\in R\wedge(j,k)\in R))$
and
$\forall i,j\ (j\in X\wedge(i,j)\in R\Rightarrow i\in X)$,
i.e., 
$X\in\varphi(P)$ if and only if $X$ is in the $\tbPi^0_2$ families
$\bigcap_{i,j\in\N}\ 
\bigcup_{k\in Z_{i,j}}(\+B_{\{k\}}\cup(\PN\setminus \+B_{\{i,j\}}))$
and
$\bigcap_{(i,j)\in R}(\+B_{\{i\}}\cup(\PN\setminus \+B_{\{j\}}))$,
where   $Z_{i,j}=\{k\mid \ (i,k)\in R \mbox{ and }  (j,k)\in R\}$.
\end{proof}

\begin{remark}\label{rk:transfer PN pb}
Fact \ref{f:PN} allows us to transfer properties from $\PN$
to $\omega$-continuous domains.
\end{remark}

\subsection{Effectivization}\label{ss:effective}
\subsubsection{Presentations of topological spaces.}
The following definition slightly departs from (Weihrauch 2000, page 63).

\begin{definition}\label{def:effective space}
Let $P,Q$ be $T_0$ second-countable topological spaces.
\begin{enumerate}
\item 
A {\it presentation of the topology of $P$} is an enumeration 
(not necessarily injective)
$(U_n)_{n\in\N}$ of some topological basis of $P$.
For convenience, we shall sometimes consider enumerations indexed by $\N^2$ or $\FIN$.
\item
{\em A presentation $(U_n)_{n\in\N}$ is  effective}
if $\{(i,j) \mid U_i\subseteq U_j\}$ is computably enumerable.
An effective topological space consists of a
$T_0$ second-countable topological space $P$ together with an effective presentation of $P$.
\item
Relative to a presentation $(U_n)_{n\in\N}$ of the topology of $P$,
{\em an element $x$ of $P$ is effective} (respectively  {\it computable}) if the set
$\{i\in\N\mid x\in U_i\}$
 is computably enumerable (respectively computable).
Let $\alpha<\CK$ be an infinite computable ordinal.
A sequence $(x_\beta)_{\beta<\alpha}$ of elements of $P$ is {\it effective (respectively computable)}
if there exists a computable relation $R\subseteq\N^2$,
isomorphic to the ordering of $\alpha$, such that,
letting $\rho:\N\to\alpha$ be the isomorphism between
$(\N,R)$ and $(\alpha,<)$, the relation
$\{(n,i) \mid x_{\rho(n)}\in U_i \}$ is computably enumerable (respectively computable).
In that case, each element of the $\alpha$-sequence is effective (respectively computable).
\item
Relative to presentations $(U_n)_{n\in\N}$, $(V_n)_{n\in\N}$ of 
the topologies of $P$ and $Q$,
{\em a continuous map $f:P\to Q$ is  effective}
if the set $\{(i,j) \mid f(U_i)\subseteq V_j\}$ is computably enumerable.
Effective $\alpha$-sequences of maps are defined as above.
\end{enumerate}
\end{definition}

\begin{fact}
Effective maps are closed under composition.
They send effective elements onto effective elements,
idem with $\alpha$-sequences for $\alpha<\CK$.
\end{fact}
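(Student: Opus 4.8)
The plan is to reduce all three assertions to one consequence of continuity, namely that for a continuous $f\colon P\to Q$ and any basic open $V_j$ of $Q$ one has $f^{-1}(V_j)=\bigcup\{U_i\mid f(U_i)\subseteq V_j\}$ (the inclusion $\supseteq$ is trivial, and $\subseteq$ holds because $f^{-1}(V_j)$ is open, so each of its points lies in some basic $U_i$ with $f(U_i)\subseteq V_j$), together with the remark that the c.e. relations of Definition~\ref{def:effective space} compose by existential projection.

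For the preservation of effective elements, fix an effective $f\colon P\to Q$ and an effective $x\in P$. The displayed identity gives
\[\{j\mid f(x)\in V_j\}=\{j\mid \exists i\ (x\in U_i\ \wedge\ f(U_i)\subseteq V_j)\}.\]
Since $\{i\mid x\in U_i\}$ and $\{(i,j)\mid f(U_i)\subseteq V_j\}$ are c.e., the right-hand side is c.e., so $f(x)$ is effective (the computable case being analogous). The same computation made uniform settles the $\alpha$-sequence case: keeping the witnessing computable order $R$ and isomorphism $\rho$ of $(x_\beta)_{\beta<\alpha}$ unchanged, the set $\{(n,j)\mid f(x_{\rho(n)})\in V_j\}=\{(n,j)\mid\exists i\ (x_{\rho(n)}\in U_i\wedge f(U_i)\subseteq V_j)\}$ is c.e. from $\{(n,i)\mid x_{\rho(n)}\in U_i\}$ and the relation of $f$, so $(f(x_\beta))_{\beta<\alpha}$ is effective with the same $R,\rho$.

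For closure under composition the goal is that $\{(i,k)\mid g(f(U_i))\subseteq W_k\}$ be c.e., and the natural route is the factorisation
\[g(f(U_i))\subseteq W_k\iff \exists j\ (f(U_i)\subseteq V_j\ \wedge\ g(V_j)\subseteq W_k),\]
whose right-hand side is c.e. from the two defining relations. The implication $\Leftarrow$ is immediate since direct images are monotone. The implication $\Rightarrow$ is the one real obstacle: it says that once $g$ sweeps the whole image $f(U_i)$ into $W_k$, a single intermediate basic open already contains $f(U_i)$ and is sent into $W_k$. For a single point this is just continuity, but $f(U_i)$ is the image of a basic open and may a priori be cofinally spread across infinitely many $V_j$'s, so the step genuinely uses the compactness/interpolation features of the presentations. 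In $\PN$ this is transparent: writing $U_i=\+B_A$ and $W_k=\+B_C$, monotonicity gives $g(f(\+B_A))\subseteq\+B_C\iff C\subseteq g(f(A))$, and continuity of $g$ expresses $g(f(A))=\bigcup\{g(B)\mid B\in\FIN,\ B\subseteq f(A)\}$ as a directed union; as $C$ is finite it lies in some $g(B)$ with finite $B\subseteq f(A)$, and then $V_j=\+B_B$ witnesses the factorisation. For a general $\omega$-continuous domain one argues identically, replacing this ad hoc computation by Fact~\ref{f:interpolation}. This is the step I expect to require the most care, precisely because it is where the domain structure of the basic neighbourhoods $\uup b$ is indispensable.

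Finally, the ``idem with $\alpha$-sequences'' clause for sequences of maps follows by carrying the uniform index $n$ of the sequence through the same existential projections used above.
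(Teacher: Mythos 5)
Your treatment of elements and of $\alpha$-sequences is correct: the identity $f^{-1}(V_j)=\bigcup\{U_i\mid f(U_i)\subseteq V_j\}$, valid for any continuous map and any presentations, reduces both claims to an existential projection of c.e.\ relations, uniformly in the sequence index. (Note that the paper states this Fact without proof, so there is no official argument to compare against. Note also that your parenthetical ``the computable case being analogous'' is not: a projection of computable relations is in general only c.e.; but the Fact claims preservation of effectiveness only, so nothing is lost.)

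The genuine gap is exactly at the step you flagged, and the proposed repair does not work. Your $\PN$ computation is correct, but what it really uses is not interpolation: it is that the basic open $\+B_A$ contains a \emph{least} element, namely $A\in\+B_A$, so that containment of the whole image $f(\+B_A)$ in an open set reduces to membership of the single point $f(A)$; this is the compact-basis (algebraicity) property that the paper itself isolates in Fact~\ref{f:effective algebraic domain}, there in the form $p_i\in\,\uup{p_i}$. Interpolation (Fact~\ref{f:interpolation}) holds in \emph{every} continuous domain and cannot substitute for it. Concretely, take the $\omega$-continuous domain $[0,1]$ with basis $\Q\cap[0,1]$, whose basic opens $\uup q=(q,1]$ (for $q>0$) have no least element. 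Fix a computable irrational $a$ (say $a=\sqrt{2}/2$) and the rational $p=1/2$, and let $f(y)=\min(1,a+y-p)$ for $y>p$ and $f(y)=ay/p$ for $y\leq p$, and $g(y)=y$ for $y>a$ and $g(y)=0$ for $y\leq a$. Both maps are Scott continuous and effective, and $f(\uup p)=(a,1]\subseteq(a,1]=g^{-1}(\uup r)$ for $r=1/2$, so $g(f(\uup p))\subseteq\uup r$; yet a witnessing basic open $(q,1]$ would have to satisfy both $f(\uup p)\subseteq(q,1]$, i.e.\ $q\leq a$, and $g((q,1])\subseteq\uup r$, i.e.\ $q>a$ --- impossible. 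So the forward implication of your factorisation is simply false beyond presentations by compact elements, and your argument covers neither general $\omega$-continuous domains nor the full generality in which the Fact is stated (arbitrary $T_0$ second-countable spaces, Definition~\ref{def:effective space}). Worse, the defect is not specific to this proof strategy: refining the example so that the gap between $\inf f(\uup{p'})$ and the thresholds of $g$ is an exact tie or a positive amount keyed to halting times, one obtains effective $f,g$ on $[0,1]$ whose composite is \emph{not} effective; hence closure under composition genuinely requires an additional hypothesis of the compact-basis kind that your $\PN$ computation exploits, and cannot be recovered by a cleverer argument at this level of generality.
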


\begin{remark}\label{rk:product presentation}
If $(U_n)_{n\in\N}$, $(V_n)_{n\in\N}$ are presentations of the topological spaces
$P$ and $Q$, then $(U_i\times V_j)_{i,j\in\N}$
is a presentation of the product topological space $P\times Q$.
\end{remark}

\subsubsection{Effective Borel and Hausdorff hierarchies.}

Fixing a presentation $(O_n)_{n\in\N}$ of the topology of $P$,
one defines the effective versions of
the Borel and Hausdorff-Kuratowski hierarchies:
 the classes
$\Sigma^0_\alpha$, $\Pi^0_\alpha$, $\Delta^0_\alpha$,
$\tlD_\alpha$, $\co\tlD_\alpha$,
$\tlD_\alpha(\Sigma^0_\xi)$, $\co\tlD_\alpha(\Sigma^0_\xi)$,
for $1\leq\alpha,\xi<~\CK$.
The class $\Sigma^0_1$ consists of those open sets which are of the form
$\bigcup_{i\in I}O_i$ where $I$ is a c.e.
(computably enumerable) subset of $\N$.
The class $\Sigma^0_2$ consists of those sets of the form
$\bigcup_{n\in\N}(\bigcup_{i\in I_n}O_i)
\setminus(\bigcup_{j\in J_n}O_j)$
where $\{(n,i)\mid i\in I_n\}$ and $\{(n,j)\mid i\in J_n\}$ are c.e.
All Borel classes of finite rank are obtained in a similar way.
For the infinite ranks, the definition involves a more
complex machinery of Borel codes which can be developed in several ways,
 cf. (Moschovakis 1979/2009,~\S3H, 7B),
(Marker 2002,~\S7), (Selivanov 2008, \S3.2),
or (Becher and Grigorieff 2012, \S5.3 to 5.5).
The effective Hausdorff-Kuratowski classes are obtained similarly. 
Fact~\ref{f:inverse image Borel} has an effective version.

\begin{fact}\label{f:effective inverse image Borel}
If  $f:P\to Q$ is effective then the inverse image of a subset of~$Q$
in an effective Borel or Hausdorff-Kuratowski class of~$Q$
is in the corresponding effective class of~$P$.
\end{fact}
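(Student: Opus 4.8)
The plan is to follow the proof of the classical Fact~\ref{f:inverse image Borel} step by step, checking at each stage that the index sets witnessing membership in an effective class stay computably enumerable. Fix presentations $(U_m)_{m\in\N}$ of $P$ and $(V_n)_{n\in\N}$ of $Q$ witnessing the effectivity of $f$ in the sense of Definition~\ref{def:effective space}, so that the relation $E=\{(m,n)\mid f(U_m)\subseteq V_n\}$ is c.e. The base case is $\Sigma^0_1$, and everything hinges on the identity
\[
f^{-1}(V_n)=\bigcup\{U_m\mid (m,n)\in E\},
\]
valid for every $n$: the inclusion $\supseteq$ is immediate, while $\subseteq$ uses that $f$ is continuous (so $f^{-1}(V_n)$ is open) together with the fact that $(U_m)_{m\in\N}$ is a basis, so any $x$ with $f(x)\in V_n$ lies in some basic $U_m\subseteq f^{-1}(V_n)$, i.e. with $(m,n)\in E$. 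Hence, if $S=\bigcup_{n\in I}V_n$ with $I$ c.e., then $f^{-1}(S)=\bigcup\{U_m\mid \exists n\ (n\in I\wedge (m,n)\in E)\}$, whose index set is c.e. by closure of c.e. sets under projection along the c.e. relation $E$. Thus $f^{-1}(S)\in\Sigma^0_1(P)$, and, crucially, uniformly: a c.e. index for $f^{-1}(S)$ is computed from c.e. indices for $S$ and for $E$.

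I would then push this through the finite ranks. Because $f^{-1}$ commutes with unions, intersections, complements and set differences, the effective $\Sigma^0_2(Q)$ form $S=\bigcup_k\big((\bigcup_{n\in I_k}V_n)\setminus(\bigcup_{n\in J_k}V_n)\big)$ pulls back to
\[
f^{-1}(S)=\bigcup_k\Big(\big(\bigcup\{U_m\mid\exists n\ (n\in I_k\wedge (m,n)\in E)\}\big)\setminus\big(\bigcup\{U_m\mid\exists n\ (n\in J_k\wedge (m,n)\in E)\}\big)\Big),
\]
and the two relations $\{(k,m)\mid\exists n\ (n\in I_k\wedge(m,n)\in E)\}$ and $\{(k,m)\mid\exists n\ (n\in J_k\wedge(m,n)\in E)\}$ are c.e., so $f^{-1}(S)\in\Sigma^0_2(P)$. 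The same bookkeeping, driven solely by the single relation $E$ and by closure of c.e. sets under projection and uniform countable unions, handles every finite Borel rank. Since the operation $D_\alpha$ is built from unions, intersections and differences and therefore commutes with $f^{-1}$, and since the uniformity established above lets an effective $\alpha$-sequence of $\Sigma^0_\xi(Q)$ sets be pulled back termwise to an effective $\alpha$-sequence of $\Sigma^0_\xi(P)$ sets, the Hausdorff-Kuratowski classes $\tlD_\alpha(\Sigma^0_\xi)$ and $\co\tlD_\alpha(\Sigma^0_\xi)$ of finite $\alpha,\xi$ are covered as well.

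For the infinite ranks I would argue by effective transfinite recursion on Borel codes, in whichever coding system one adopts (cf. the references in \S\ref{ss:effective}). The pullback acts on a code by replacing each leaf $V_n$ with the c.e. union $\bigcup\{U_m\mid (m,n)\in E\}$ while leaving the countable-union and complement structure untouched; the induced transformation on codes is computable and sends effective codes to effective codes, since at every node the new index data is obtained from the old by projection along $E$ and effective combination. I expect the only real difficulty to lie in this last step, and to be one of bookkeeping rather than of ideas: one must verify that the code-transformation is uniformly computable and that the c.e.-ness of the index data is preserved all along the transfinite recursion. No new insight beyond the base-case identity $f^{-1}(V_n)=\bigcup\{U_m\mid (m,n)\in E\}$ and the uniformity of the constructions should be required.
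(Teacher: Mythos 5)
Your proof is correct. The paper states this Fact without any proof, treating it as routine (it only points to the references on Borel codes for the infinite ranks), and your argument --- the key identity $f^{-1}(V_n)=\bigcup\{U_m\mid f(U_m)\subseteq V_n\}$, valid by continuity of $f$ plus the basis property of $(U_m)_{m\in\N}$, followed by c.e.\ bookkeeping through the finite ranks, the $D_\alpha$ operations, and the leaf-replacement transformation on Borel codes --- is exactly the standard argument the paper intends.
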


\subsubsection{Effective $\omega$-continuous domains.}
Definition~\ref{def:effective space} extends to domains. 
\begin{definition}\label{def:effective domain}
\begin{enumerate}
\item[1]
{\em A presentation of an $\omega$-continuous domain $P$}
is an enumeration of some basis of $P$.

\item[2]
{\em A presentation $(p_n)_{n\in\N}$ is  effective}
if $\{(i,j) \mid p_i\ll p_j\}$ is computably enumerable.
An effective $\omega$-continuous domain consists of an
$\omega$-continuous domain $P$ together with an effective presentation of $P$.
\end{enumerate}
Fix some presentations $(p_n)_{n\in\N}$, $(q_n)_{n\in\N}$ of
the (not necessarily effective) $\omega$-continuous domains $P$ and $Q$.
\begin{enumerate}
\item[3]
{\em An element $x\in P$ is domain-effective}
(respectively {\em domain-computable})
if the set $\{i\in\N\mid p_i\ll x\}$ 
is computably enumerable (respectively computable).
For $\alpha<\CK$, domain-effective (respectively domain-computable)
$\alpha$-sequences are defined as in Definition~\ref{def:effective space}
using the set $\{(n,i) \mid p_i\ll x_{\rho(n)}\}$.
Each element of a domain-effective (respectively domain-computable)
$\alpha$-sequence is domain-effective (respectively domain-computable).

\item[4]
{\em A continuous map $f:P\to Q$ is domain-effective} if
$\{(i,j) \mid q_j\ll f(p_i)\}$ is computably enumerable.
For $\alpha<\CK$, domain-effective $\alpha$-sequences of maps
are defined as above.
\end{enumerate}
\end{definition}

\begin{example}\label{ex:PN effective}
Consider the sequence $(\+B_A)_{A\in\FIN}$,
where $\+B_A=\{X\in\PN\mid A\subseteq X\}$.
It is an effective presentation of $\PN$,
(called the canonical presentation).
An element $X\in\PN$ is domain-effective 
(respectively domain-computable)
relative to this presentation
 if and only if it is
a computably enumerable (respectively computable) subset of $\N$.
A~map $f:\PN\to\PN$ is domain-effective
if and only if there exists a computable function $g:\FIN \to \N$
such that, for all $X$, 
$f(X)=\bigcup_{A\subseteq X, A\in\FIN} W_{g(A)}$
where $W_e$ denotes the computably enumerable subset of $\N$
with code~$e$.
\end{example}

\begin{remark}\label{rk:product domain presentation}
If $(p_n)_{n\in\N}$, $(q_n)_{n\in\N}$ are presentations of the
$\omega$-continuous domains $P$ and $Q$
then $(p_i,q_j)_{i,j\in\N}$ is a presentation of the product
$\omega$-continuous domain $P\times Q$.
\end{remark}

\begin{fact}\label{f:effective map and effective element}
Let $(p_n)_{n\in\N}$, $(q_n)_{n\in\N}$ be presentations of the
$\omega$-continuous domains $P,Q$.
A domain-effective  map $f:P\to Q$ between $\omega$-continuous domains
sends any domain-effective element of $P$ to 
a domain-effective element of $Q$.
Idem with domain-effective $\alpha$-sequences for $\alpha<\CK$.
\end{fact}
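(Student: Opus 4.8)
The goal is to show, for a domain-effective $f:P\to Q$ and a domain-effective $x\in P$, that $f(x)$ is domain-effective, i.e.\ that $\{j\in\N\mid q_j\ll f(x)\}$ is computably enumerable. The two pieces of data available are the c.e.\ set $E=\{i\mid p_i\ll x\}$ (because $x$ is domain-effective) and the c.e.\ set $F=\{(i,j)\mid q_j\ll f(p_i)\}$ (because $f$ is domain-effective). The plan is to reduce everything to a single equivalence,
$$q_j\ll f(x)\iff \exists i\,\big(p_i\ll x \ \wedge\ q_j\ll f(p_i)\big),$$
whose right-hand side is exactly the projection $\{j\mid \exists i\,(i\in E\wedge (i,j)\in F)\}$ of the intersection of two c.e.\ sets, and hence c.e.; establishing this equivalence is the whole content of the proof.

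For the direction from right to left I would argue purely order-theoretically. If $p_i\ll x$ then $p_i\sqsubseteq x$, so monotonicity of the Scott-continuous $f$ gives $f(p_i)\sqsubseteq f(x)$; combined with $q_j\ll f(p_i)$ and the rule $x'\sqsubseteq x\ll y\sqsubseteq y'\Rightarrow x'\ll y'$ recalled in \S\ref{ss:domains}, this yields $q_j\ll f(x)$.

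The left-to-right direction is the main obstacle. First I would use continuity of $f$ together with the fact that, for the basis $(p_n)$, the set $\{p_i\mid p_i\ll x\}$ is directed with supremum $x$; applying $f$ shows $\{f(p_i)\mid p_i\ll x\}$ is directed with supremum $f(x)$. The naive move is then to invoke the definition of $\ll$ for $q_j\ll f(x)=\sqcup\{f(p_i)\mid p_i\ll x\}$, but this only delivers some $i_0$ with $p_{i_0}\ll x$ and $q_j\sqsubseteq f(p_{i_0})$, i.e.\ an inclusion rather than the required way-below relation. To upgrade $\sqsubseteq$ to $\ll$ I would first interpolate in $Q$: since $q_j\ll f(x)$, Fact~\ref{f:interpolation} (applied to the singleton $\{q_j\}$ and the basis $(q_n)$ of $Q$) produces an index $k$ with $q_j\ll q_k\ll f(x)$. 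Running the directed-supremum argument on $q_k$ instead of $q_j$ yields $q_k\sqsubseteq f(p_{i_0})$ for some $p_{i_0}\ll x$, whence $q_j\ll q_k\sqsubseteq f(p_{i_0})$ gives $q_j\ll f(p_{i_0})$ by the same monotonicity rule. This produces the witness $i_0$ demanded by the right-hand side.

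Finally, for the statement about $\alpha$-sequences with $\alpha<\CK$, I expect nothing new is needed: the equivalence above holds uniformly in the sequence index. Given a computable $R$ isomorphic to $\alpha$ with isomorphism $\rho$ and the c.e.\ relation $\{(n,i)\mid p_i\ll x_{\rho(n)}\}$ witnessing domain-effectiveness of $(x_\beta)_{\beta<\alpha}$, the same projection argument shows that $\{(n,j)\mid q_j\ll f(x_{\rho(n)})\}=\{(n,j)\mid \exists i\,(p_i\ll x_{\rho(n)}\wedge q_j\ll f(p_i))\}$ is c.e., with the very same $R$ and $\rho$ certifying that $(f(x_\beta))_{\beta<\alpha}$ is a domain-effective $\alpha$-sequence.
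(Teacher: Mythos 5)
Your proof is correct. Note that the paper itself states this Fact without proof, treating it as routine, so there is no official argument to compare against; your write-up supplies exactly the details the authors leave implicit. Your key equivalence $q_j\ll f(x)\iff \exists i\,(p_i\ll x \wedge q_j\ll f(p_i))$ is the right reduction, and you correctly identify the one non-trivial point: the definition of $\ll$ applied to $f(x)=\sqcup\{f(p_i)\mid p_i\ll x\}$ only yields $q_j\sqsubseteq f(p_{i_0})$, and upgrading this to $q_j\ll f(p_{i_0})$ requires interpolating $q_j\ll q_k\ll f(x)$ first. This is precisely the maneuver the paper does spell out in its proof of Proposition~\ref{p:effective domain and topo} (there, interpolation along a chain with supremum $f(p_i)$ upgrades membership in $T$ to membership in $S$), so your argument is fully in the spirit of the paper's own techniques. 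Two further points in your favor: your proof nowhere assumes the presentations $(p_n)$, $(q_n)$ are effective, which matters because the Fact, as stated, does not assume it; and your uniformization for $\alpha$-sequences is correct, since the same projection argument relativizes to the c.e.\ relation $\{(n,i)\mid p_i\ll x_{\rho(n)}\}$ with the same computable $R$ and isomorphism $\rho$ witnessing the ordering of $\alpha$.
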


\subsubsection{Topological effectiveness versus domain effectiveness.}
While Definition~\ref{def:effective space} gives the notions of element, sequence and map that are topology-effective, Definition~\ref{def:effective domain} does it for the domain-effective notions. Since  $x\in\uup{p_i}\iff p_i\ll x$, for  elements and sequences the two definitions of effectivity coincide.

\begin{fact}\label{f:effective element domain and topo}
Let $(p_n)_{n\in\N}$ be a presentation of the
$\omega$-continuous domain $P$.
An element $x\in P$ is domain-effective (respectively domain-computable)
if and only if it is topology-effective (respectively topology-computable).
Idem with $\alpha$-sequences, for $1\leq\alpha<\CK$.
\end{fact}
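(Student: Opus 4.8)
The plan is to exhibit the two effectivity conditions as conditions on one and the same set of indices. A presentation $(p_n)_{n\in\N}$ of the $\omega$-continuous domain $P$ in the sense of Definition~\ref{def:effective domain} induces a presentation of the Scott topology of $P$ in the sense of Definition~\ref{def:effective space}, namely the enumeration $(\uup p_n)_{n\in\N}$ of basic open sets; that these sets do form a topological basis is the classical fact recalled in \S\ref{ss:domains}. Under this induced presentation, the indexing sets used in the two definitions are governed by the single defining equivalence $x\in\uup p_i\iff p_i\ll x$.

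First I would treat a single element $x\in P$. By Definition~\ref{def:effective space}, $x$ is topology-effective (respectively topology-computable) exactly when $\{i\mid x\in\uup p_i\}$ is computably enumerable (respectively computable), while by Definition~\ref{def:effective domain}, $x$ is domain-effective (respectively domain-computable) exactly when $\{i\mid p_i\ll x\}$ is computably enumerable (respectively computable). The equivalence $x\in\uup p_i\iff p_i\ll x$ shows these are the very same subset of $\N$, so each effectivity condition on one side is literally the corresponding condition on the other, and the asserted equivalence for elements follows with no further work.

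Next I would lift this to $\alpha$-sequences $(x_\beta)_{\beta<\alpha}$ for $1\leq\alpha<\CK$. Both definitions quantify over the existence of a computable relation $R\subseteq\N^2$ isomorphic to the ordering of $\alpha$, with isomorphism $\rho:\N\to\alpha$, and then impose a computable-enumerability (respectively computability) requirement: Definition~\ref{def:effective space} on the relation $\{(n,i)\mid x_{\rho(n)}\in\uup p_i\}$ and Definition~\ref{def:effective domain} on $\{(n,i)\mid p_i\ll x_{\rho(n)}\}$. Applying the same equivalence pointwise, for each fixed $R$ and $\rho$ these two relations coincide, so the existential statement witnessing effectiveness (respectively computability) for one presentation holds iff it holds for the other. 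Hence the two notions of effective (respectively computable) $\alpha$-sequence agree as well.

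I do not expect any genuine obstacle: the content is a definitional unwinding resting entirely on $x\in\uup p_i\iff p_i\ll x$. The only point needing care is purely bookkeeping — making explicit that the topology presentation implicitly referred to in the statement is the induced one $(\uup p_n)_{n\in\N}$, and invoking the standard fact from \S\ref{ss:domains} that it is indeed a topological basis — after which the two index sets (and, for sequences, the two relations) are equal on the nose, and both halves of the equivalence, for elements and for $\alpha$-sequences, are immediate.
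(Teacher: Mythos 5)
Your proposal is correct and is essentially the paper's own argument: the paper disposes of this fact with the single observation that $x\in\uup{p_i}\iff p_i\ll x$, so the index set $\{i\mid x\in\uup{p_i}\}$ of Definition~\ref{def:effective space} and the index set $\{i\mid p_i\ll x\}$ of Definition~\ref{def:effective domain} (and likewise the corresponding relations for $\alpha$-sequences) are literally the same, which is exactly your unwinding. Your version merely spells out the bookkeeping (the induced presentation $(\uup{p_n})_{n\in\N}$ and the pointwise lift to sequences) that the paper leaves implicit.
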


In the case of  $\omega$-algebraic domains, 
Definition~\ref{def:effective space} and~\ref{def:effective domain}
 the notion of effectivity coincides not only for elements and sequences but also for  space presentations  and  maps.

\begin{fact}\label{f:effective algebraic domain}
Let  $(p_n)_{n\in\N}$ be a presentation of the $\omega$-algebraic domain $P$
consisting of compact elements.
Let $(q_n)_{n\in\N}$ be a presentation of the $\omega$-continuous domain $Q$.
\begin{enumerate}

\item
The presentation $(p_n)_{n\in\N}$ of the $\omega$-continuous domain
is effective
if and only if so is the presentation $(\uup{p_n})_{n\in\N}$ of
the Scott topology of $P$.

\item
A map $f:P\to Q$ is domain-effective
if and only if it is topology-effective.
\end{enumerate}
\end{fact}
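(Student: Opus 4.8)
The plan is to exploit the defining feature of $\omega$-algebraic domains: that the basis $(p_n)_{n\in\N}$ consists of compact elements, for which the way-below relation collapses onto the order, $p_i\ll p_j \iff p_i\sqsubseteq p_j$. For item 1, I would unwind both effectivity conditions against this equivalence. The domain-presentation is effective when $\{(i,j)\mid p_i\ll p_j\}$ is c.e., while the topology-presentation $(\uup{p_n})_{n\in\N}$ is effective when $\{(i,j)\mid \uup{p_i}\subseteq\uup{p_j}\}$ is c.e. So the whole content of item~1 is to show these two relations coincide as subsets of $\N^2$. First I would observe that $\uup{p_i}\subseteq\uup{p_j}$ asserts that every $x$ with $p_j\ll x$ also satisfies $p_i\ll x$; taking $x=p_j$ (legitimate because $p_j\ll p_j$ by compactness) gives $p_i\ll p_j$. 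Conversely, $p_i\ll p_j$ together with the transitivity/interpolation property $p_i\ll p_j\sqsubseteq x \Rightarrow p_i\ll x$ (from the basic implications $x'\sqsubseteq x\ll y\sqsubseteq y' \Rightarrow x'\ll y'$ recorded before Fact \ref{f:interpolation}) yields $\uup{p_i}\subseteq\uup{p_j}$. Hence the two index-sets are literally equal, and one is c.e.\ iff the other is.

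For item 2, I would similarly reduce the two map-effectivity conditions to the same c.e.\ set. Topology-effectivity of $f$ says $\{(i,j)\mid f(\uup{p_i})\subseteq\uup{q_j}\}$ is c.e., while domain-effectivity says $\{(i,j)\mid q_j\ll f(p_i)\}$ is c.e. The key point is again compactness of the $p_i$ on the source side, which lets me evaluate $f$ at the point $p_i$ itself. Because $f$ is monotone and $p_i$ is the $\sqsubseteq$-least element of $\uup{p_i}$ (indeed $x\in\uup{p_i}$ means $p_i\ll x$, hence $p_i\sqsubseteq x$), the inclusion $f(\uup{p_i})\subseteq\uup{q_j}$ is controlled by the single value $f(p_i)$: I would argue that $f(\uup{p_i})\subseteq\uup{q_j}$ holds iff $f(p_i)\in\uup{q_j}$, i.e.\ iff $q_j\ll f(p_i)$. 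The forward direction is immediate since $p_i\in\uup{p_i}$. For the reverse, given $q_j\ll f(p_i)$ and any $x$ with $p_i\ll x$, monotonicity gives $f(p_i)\sqsubseteq f(x)$, and then $q_j\ll f(p_i)\sqsubseteq f(x)$ delivers $q_j\ll f(x)$, i.e.\ $f(x)\in\uup{q_j}$. Thus the two defining index-sets coincide, so one is c.e.\ iff the other is.

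The step I expect to require the most care is verifying that $f(\uup{p_i})\subseteq\uup{q_j}$ is genuinely equivalent to the pointwise condition $q_j\ll f(p_i)$, rather than merely implied by it. This hinges precisely on $p_i$ being compact, so that $\uup{p_i}$ has $p_i$ as a minimum; without compactness one could only reach $p_i$ as a directed supremum of elements strictly way-below it, and the image of a basic open set need not be pinned down by a single value. The contrast with the general $\omega$-continuous case (Fact \ref{f:effective element domain and topo}, where only elements and sequences transfer) is exactly that compactness of the source basis is what additionally forces the presentation- and map-effectivity notions to agree. Everything else is a routine book-keeping of c.e.\ sets, using that equality of two index-relations transports the c.e.\ property in both directions.
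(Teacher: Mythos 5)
Your strategy is the paper's own: use compactness ($p_n\ll p_n$, so $p_n$ is the $\sqsubseteq$-minimum of $\uup{p_n}$) to collapse inclusions of basic open sets into single way-below statements. Your item 2 is correct and coincides with the paper's proof: $f(\uup{p_i})\subseteq\ \uup{q_j}\iff q_j\ll f(p_i)$, the forward direction by evaluating at $p_i\in\ \uup{p_i}$, the reverse via $q_j\ll f(p_i)\sqsubseteq f(x)$ for any $x$ with $p_i\ll x$.

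Item 1, however, contains an orientation error. You gloss $\uup{p_i}\subseteq\ \uup{p_j}$ as ``every $x$ with $p_j\ll x$ also satisfies $p_i\ll x$''; that is the \emph{reverse} inclusion $\uup{p_j}\subseteq\ \uup{p_i}$. Unwinding correctly: $\uup{p_i}\subseteq\ \uup{p_j}$ says that $p_i\ll x$ implies $p_j\ll x$; taking $x=p_i$ (legitimate since $p_i\ll p_i$) yields $p_j\ll p_i$, and conversely $p_j\ll p_i$ gives the inclusion because $p_i\ll x$ implies $p_j\ll p_i\sqsubseteq x$, hence $p_j\ll x$. So the correct equivalence --- the one the paper states --- is $\uup{p_i}\subseteq\ \uup{p_j}\iff p_j\ll p_i$: the map $p\mapsto\ \uup{p}$ reverses the order (in $\PN$, for instance, $\+B_A\subseteq\+B_B\iff B\subseteq A$). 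Consequently your claim that the two index sets are ``literally equal'' is false: $\{(i,j)\mid\ \uup{p_i}\subseteq\ \uup{p_j}\}$ equals $\{(i,j)\mid p_j\ll p_i\}$, which is the \emph{transpose} of $\{(i,j)\mid p_i\ll p_j\}$, not that set itself. The conclusion of item 1 survives, since a subset of $\N^2$ is computably enumerable if and only if its transpose is, but this closing step must be added and the asserted equivalence corrected. No such issue arises in your item 2 because there, under the conventions of Definitions~\ref{def:effective space} and~\ref{def:effective domain}, both index sets carry the source index first and are genuinely equal, exactly as you argue.
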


\begin{proof}
Since $p_i\in\uup{p_i}$ we have $\uup{p_i}\subseteq\uup{p_j}\iff p_j\ll p_i$
and $f(\uup{p_i})\subseteq\uup{q_j} \iff q_j\ll f(p_i)$.
\end{proof}

Finally, topological effective maps are also  domain-effective.

\begin{proposition}\label{p:effective domain and topo}
Let $(p_n)_{n\in\N}$, $(q_n)_{n\in\N}$ be presentations of the
$\omega$-continuous domains $P,Q$, such that $(p_n)_{n\in\N}$ is domain-effective.
Then,  a  topology-effective map $f:P\to Q$ 
is also domain-effective.
\end{proposition}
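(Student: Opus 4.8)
The plan is to prove that $f$ is domain-effective by establishing the equivalence
\[
q_j \ll f(p_i) \iff \exists m\ \bigl(p_m \ll p_i \ \wedge\ f(\uup{p_m}) \subseteq \uup{q_j}\bigr),
\]
because the right-hand side is a projection over $m$ of the intersection of two computably enumerable relations: $\{(m,i)\mid p_m\ll p_i\}$, which is c.e. since $(p_n)_{n\in\N}$ is domain-effective, and $\{(m,j)\mid f(\uup{p_m})\subseteq\uup{q_j}\}$, which is c.e. since $f$ is topology-effective (the topologies of $P,Q$ being presented by the basic open sets $(\uup{p_n})_{n\in\N}$ and $(\uup{q_n})_{n\in\N}$). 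Thus $\{(i,j)\mid q_j\ll f(p_i)\}$ is c.e., which is exactly domain-effectiveness of $f$.

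First I would dispatch the implication from right to left, which is immediate: if $p_m\ll p_i$ then $p_i\in\uup{p_m}$, hence $f(p_i)\in f(\uup{p_m})\subseteq\uup{q_j}$, i.e.\ $q_j\ll f(p_i)$. This needs only the definition of $\uup{}$ and the hypothesis that $f$ sends $\uup{p_m}$ into $\uup{q_j}$.

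The harder direction is left to right, and here the naive attempt---to produce some $m$ with $q_j\sqsubseteq f(p_m)$ by continuity and conclude directly---fails, since a Scott-continuous map need not preserve the way-below relation, so monotonicity yields only $q_j\sqsubseteq f(y)$ in place of the strict $q_j\ll f(y)$ that $f(\uup{p_m})\subseteq\uup{q_j}$ demands. The remedy is to argue on the $P$-side. From $q_j\ll f(p_i)$ I read $p_i\in O:=f^{-1}(\uup{q_j})$, a set that is open by continuity of $f$ and hence an upset. Because $p_i=\sqcup\{p_n\mid p_n\ll p_i\}$ is a directed supremum lying in the Scott-open set $O$, that directed set must meet $O$, so some basis element satisfies $p_n\ll p_i$ and $p_n\in O$. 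Interpolating with Fact~\ref{f:interpolation} then yields a basis element $p_m$ with $p_n\ll p_m\ll p_i$. This $p_m$ is the desired witness: clearly $p_m\ll p_i$, and $\uup{p_m}\subseteq O$, for if $p_m\ll y$ then $p_n\ll p_m\ll y$ forces $p_n\sqsubseteq y$, whence $y\in O$ since $p_n\in O$ and $O$ is an upset. Therefore $f(\uup{p_m})\subseteq\uup{q_j}$, which completes the equivalence. The main obstacle is exactly this failure of $\ll$-preservation; the interpolation step is what resolves it, manufacturing a buffer element $p_n\in O$ strictly below $p_m$ so as to upgrade the mere membership $p_i\in O$ into the inclusion $\uup{p_m}\subseteq O$ that matches the topological-effectiveness hypothesis.
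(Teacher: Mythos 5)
Your proof is correct, and it establishes exactly the same identity as the paper's proof: writing $S=\{(i,j)\mid q_j\ll f(p_i)\}$ and $T=\{(i,j)\mid f(\uup{p_i})\subseteq\uup{q_j}\}$, both arguments show that $S=\{(i,j)\mid\exists k\,(p_k\ll p_i\wedge(k,j)\in T)\}$ and conclude that $S$ is computably enumerable. The difference is in how the nontrivial inclusion is executed. The paper stays on the $Q$ side: it writes $f(p_i)=\sqcup_n f(p_{i_n})$ for an increasing chain $(p_{i_n})_{n\in\N}$ in $\ddown{p_i}$ with supremum $p_i$ (using that $f$ preserves directed suprema), interpolates $q_j\ll y\ll f(p_i)$ in $Q$ via Fact~\ref{f:interpolation}, extracts $n$ with $y\sqsubseteq f(p_{i_n})$, hence $q_j\ll f(p_{i_n})$, and finishes with the easy inclusion $S\subseteq T$. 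You instead work on the $P$ side, invoking topological rather than order-theoretic continuity: $O=f^{-1}(\uup{q_j})$ is Scott-open, the directed set $\{p_n\mid p_n\ll p_i\}$ has supremum $p_i\in O$, so some $p_n\in O$ with $p_n\ll p_i$ exists. Both mechanisms are sound and of comparable length; they are the two faces of continuity, and each yields the same c.e.\ description of $S$. One point worth noting: your interpolation step is superfluous. Having found $p_n\in O$ with $p_n\ll p_i$, you may take $m=n$ outright, since every Scott-open set is an upset and $\uup{p_n}\subseteq\{y\mid p_n\sqsubseteq y\}\subseteq O$; no buffer element is needed to pass from membership to the inclusion $\uup{p_m}\subseteq O$, because the membership you actually exploit is that of $p_n$, not that of $p_i$ (which is where the genuine obstacle you describe would arise).
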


\begin{proof}
Let $S=\{(i,j)\mid q_j\ll f(p_i)\}$ and
$T=\{(i,j)\mid f(\uup{p_i})\subseteq\ \uup{q_j}\}$
be the two computably enumerable sets ensuring 
the domain-effectiveness and the topology-effectiveness of $f$.
Inclusion $S\subseteq T$ is straightforward.
Though, in general, $S$ and $T$ are distinct, we show that
$S = \{(i,j)\mid\exists k\ (p_k\ll p_i \wedge (k,j)\in T)\}$.
In particular, if $T$ is computably enumerable
then so is $S$.
Implication $p_k\ll p_i\wedge(k,j)\in T\Rightarrow(i,j)\in S$ is trivial.
Conversely, if $(i,j)\in S$ then $q_j\ll f(p_i)=\sqcup_{i\in\N}f(p_{i_n})$
where $(p_{i_n})_{n\in\N}$ is some increasing chain included in $\ddown{p_i}$
with supremum $p_i$.
By interpolation (cf. Fact~\ref{f:interpolation}),
let $q_j\ll y\ll\sqcup_{i\in\N}f(p_{i_n})$.
By definition of $\ll$, there is some $n$ such that $y\sqsubseteq f(p_{i_n})$
hence $q_j\ll f(p_{i_n})$.
Thus, letting $k=i_n$, we have $p_k\ll p_i$ and $(k,j)\in S\subseteq T$.
\end{proof}

\subsubsection{Effective embedding in $\PN$.}
We give a version of  Fact~\ref{f:PN}
that effectivizes the  embedding of an $\omega$-continuous
domain $P$ in $\PN$.

\begin{fact}\label{f:PN effective}
Let $(p_n)_{n\in\N}$ be a presentation of the $\omega$-continuous
domain $(P,\sqsubseteq)$.
\begin{enumerate}
\item
An element $x\in P$ is effective (respectively computable) 
if and only if the set
$\varphi(x)=\{n\in\N \mid p_n\ll x\}$ is effective (respectively computable)
relative to the canonical presentation of $\PN$.
Idem with $\alpha$-sequences for $\alpha<\CK$.

\item
If the presentation $(\uup{p_n})_{n\in\N}$ is effective
then $\varphi$ is a topology-effective embedding of $(P,\sqsubseteq)$
in $(\PN, \subseteq)$
(relative to the canonical presentation of $\PN$).
As a consequence,
for every effective Borel or Hausdorff-Kuratowski class $\Gamma$,

\centerline{$\{\varphi(Z)\mid Z\in\Gamma(P)\}
=\Gamma(\varphi(P))=\{\+X\cap\varphi(P) \mid \+X\in\Gamma(\PN)\}$.}

\item
If the presentation $(p_n)_{n\in\N}$ is effective then $\varphi(P)$ is $\Pi^0_2$ in $\PN$.
\end{enumerate}
\end{fact}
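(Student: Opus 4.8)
The plan is to treat the three items separately, since items 1 and 3 are essentially bookkeeping on top of earlier results while item 2 carries the one genuine computation.

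For item 1 I would simply unwind the definitions. By Definition~\ref{def:effective domain}(3), $x$ is domain-effective (resp.\ domain-computable) exactly when $\{n\mid p_n\ll x\}$ is c.e.\ (resp.\ computable); by Fact~\ref{f:effective element domain and topo} this coincides with topology-effectivity, so ``$x$ effective'' is unambiguous. But this very set is $\varphi(x)$, and by Example~\ref{ex:PN effective} an element of $\PN$ is effective (resp.\ computable) for the canonical presentation iff it is a c.e.\ (resp.\ computable) subset of $\N$. Matching the two c.e.\ conditions through $k\in\varphi(x)\iff\varphi(x)\in\+B_{\{k\}}$ (and, for finite $A$, $A\subseteq\varphi(x)\iff\varphi(x)\in\+B_A$) gives the equivalence at once. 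The $\alpha$-sequence case is identical, comparing the c.e.\ relations $\{(n,i)\mid p_i\ll x_{\rho(n)}\}$ and $\{(n,A)\mid A\subseteq\varphi(x_{\rho(n)})\}$ through the same ordinal presentation $\rho$.

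For item 2 the core step is to show that $\varphi$ is a topology-effective map, i.e.\ that $\{(i,A)\mid\varphi(\uup{p_i})\subseteq\+B_A\}$ is c.e.\ (Definition~\ref{def:effective space}(4), with $(\uup{p_n})$ the basis of $P$ and $(\+B_A)$ the canonical presentation of $\PN$). From Fact~\ref{f:PN}(2), $\varphi(\uup{p_n})=\+B_{\{n\}}\cap\varphi(P)$, and since $\varphi$ is a bijection commuting with intersections, $\+B_A\cap\varphi(P)=\varphi(\bigcap_{k\in A}\uup{p_k})$; as $\varphi(\uup{p_i})\subseteq\varphi(P)$, this yields the key equivalence $\varphi(\uup{p_i})\subseteq\+B_A\iff\forall k\in A\ \uup{p_i}\subseteq\uup{p_k}$. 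Because $A$ is finite and, by hypothesis, $\{(i,k)\mid\uup{p_i}\subseteq\uup{p_k}\}$ is c.e., the set on the left is c.e., so $\varphi$ is topology-effective. For the displayed class identities I would argue by structural transport of effective Borel/Hausdorff codes rather than by inverting $\varphi$: the matched bases $\uup{p_n}\leftrightarrow\+B_{\{n\}}\cap\varphi(P)$ share the \emph{same} index $n$, and $\varphi$, being a bijection onto $\varphi(P)$, preserves complementation and countable unions, so an effective code of $Z\in\Gamma(P)$ over $(\uup{p_n})$ transports verbatim to an effective code of $\varphi(Z)$ over $(\+B_{\{n\}}\cap\varphi(P))$, and conversely; this gives $\{\varphi(Z)\mid Z\in\Gamma(P)\}=\Gamma(\varphi(P))$. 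The second equality is the effective trace property of the subspace $\varphi(P)\subseteq\PN$: replacing each $\+B_A\cap\varphi(P)$ by $\+B_A$ in a code, and back, commutes with the class operations, and the pullback direction alternatively follows from Fact~\ref{f:effective inverse image Borel} applied to the effective inclusion $\varphi(P)\hookrightarrow\PN$.

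For item 3 I would re-run the proof of Fact~\ref{f:PN}(3) and track effectivity. With $R=\{(i,j)\mid p_i\ll p_j\}$ now c.e.\ (the presentation $(p_n)$ being effective), the auxiliary sets $Z_{i,j}=\{k\mid(i,k)\in R\wedge(j,k)\in R\}$ are uniformly c.e. The two $\tbPi^0_2$ families exhibited there, namely $\bigcap_{i,j}\bigcup_{k\in Z_{i,j}}(\+B_{\{k\}}\cup(\PN\setminus\+B_{\{i,j\}}))$ and $\bigcap_{(i,j)\in R}(\+B_{\{i\}}\cup(\PN\setminus\+B_{\{j\}}))$, are countable intersections of sets of shape ``open $\cup$ closed'', each lying in $\co\tbD_2$ and indexed only through the c.e.\ data $R$ and $Z_{i,j}$; hence each family is effectively $\Pi^0_2$. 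Their intersection, $\varphi(P)$, is therefore $\Pi^0_2$ in $\PN$. I expect the only real obstacle to lie in item 2: getting the equivalence $\varphi(\uup{p_i})\subseteq\+B_A\iff\forall k\in A\ \uup{p_i}\subseteq\uup{p_k}$ exactly right, and—more delicately—justifying the class identities by code transport so as to sidestep the fact that the set-theoretic inverse of $\varphi$ need not itself be presented by a c.e.\ inclusion relation. Items 1 and 3 are routine once the relevant earlier definitions and the proof of Fact~\ref{f:PN} are in hand.
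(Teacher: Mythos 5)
Your proposal is correct and follows essentially the same route as the paper: for item 2 both proofs reduce $\varphi(\uup{p_i})\subseteq\+B_A$ to the finitely many conditions $\uup{p_i}\subseteq\uup{p_k}$ for $k\in A$ and conclude computable enumerability from the effectivity of $(\uup{p_n})_{n\in\N}$ (you reach this equivalence via Fact~\ref{f:PN}(2) and injectivity of $\varphi$, where the paper unwinds the relation $\ll$ pointwise), and for item 3 both simply observe that $R$ and the sets $Z_{i,j}$ from the proof of Fact~\ref{f:PN}(3) become computably enumerable. Your added details for item 1 and the code-transport justification of the class identities fill in steps the paper leaves implicit; the one shaky point is your parenthetical appeal to an ``effective inclusion'' $\varphi(P)\hookrightarrow\PN$ (its effectivity in the sense of Definition~\ref{def:effective space} is not evident), but since you offer it only as an alternative to your sound main argument, nothing is lost.
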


\begin{proof}
2. For $A\in\FIN$,
$\uup{A}\ll\varphi(\uup{p_i})$
if and only if 
$\varphi(\uup{p_i})\subseteq\+B_A$
if and only if 
$\forall n\in A\ \forall k\ (p_i\ll p_k\Rightarrow n\in \varphi(p_k))$
if and only if 
$\forall n\in A\ \forall k\ (p_i\ll p_k\Rightarrow p_n\ll p_k)$
if and only if 
$\forall n\in A\ (\uup{p_i}\subseteq\uup{p_n})$.
The universal quantification is bounded by the finite set $A$, so 
this last formula gives the computable enumerability of the set
$\{(A,i)\mid\ \uup{A}\supseteq\varphi(\uup{p_i})\}$.

3. Observe that in the proof of Fact~\ref{f:PN} item 3,
the sets $R$ and $\{(i,j,k)\mid k\in Z_{i,j}\}$ are computably enumerable.
\end{proof}
\begin{note}
Item 1 in Fact~\ref{f:PN effective} does not need that the presentation
$(p_n)_{n\in\N}$ be effective.
\end{note}

\section{Wadge Theory and Its Effectivization}\label{s:Wadge}

\subsection{Wadge Hardness and Completeness}\label{ss:Wadge}
%
\begin{definition}\cite{wadge72}
\label{def:Wadge}
Let $P, Q$ be two topological spaces and $\+C\subseteq\P(P)$.
\begin{enumerate}
\item
$X\subseteq P$ is {\it Wadge reducible} to $Y\subseteq Q$,
written $X \leq_W Y$,
if $X=f^{-1}(Y)$ for some continuous $f:P\to Q$.
\item
 A set $Y\subseteq Q$ is {\it Wadge hard} for $\+C$
if  every $X\in \+C$ is Wadge reducible to $Y$.
\item
 In case $P=Q$, the set $Y\subseteq Q$ is 
{\it Wadge complete} for $\+C$
if it is Wadge hard for $\+C$ and it belongs to $\+C$.
\item 
Wadge reducibility, Wadge hardness and Wadge completeness
  are {\em finite-to-one} or {\em one-to-one}
when the associated   reductions are, respectively,
finite-to-one or one-to-one.
\item
 The preordering $\leq_W$ on subsets of $P$ induces an equivalence relation.
Its equivalence classes are called {\it Wadge degrees}.
\end{enumerate}
\end{definition}

\begin{definition}
Relative to fixed presentations $(U_n)_{n\in\N}$ and $(V_n)_{n\in\N}$
of the topological spaces $P,Q$,
{\em effective Wadge reducibility, hardness and completeness}
 are obtained  by requiring that only effective maps be considered
in Definition~\ref{def:Wadge}.
\end{definition}
One-to-one and finite-to-one Wadge reducibilities are not equivalent.

\begin{proposition}\label{prop:oneonemanyone}
In the Scott domain $\PN$, Wadge reducibility,
finite-to-one  and one-to-one reducibilities
are not equivalent. Idem with effective reducibilities.
\end{proposition}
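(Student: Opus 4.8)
The plan is to establish the two nontrivial strict inclusions inside the trivial chain [one-to-one] $\subseteq$ [finite-to-one] $\subseteq$ [Wadge reducibility] by exhibiting explicit witness pairs of families in $\PN$. Strikingly, a single target family $Y=\{\emptyset\}$ will serve for both separations, and the witnessing reductions will turn out to be effective, so the effective statement comes essentially for free. Throughout I would use that a map $f:\PN\to\PN$ is Scott continuous exactly when $f(Z)=\bigcup_{A\subseteq Z,\,A\in\FIN}f(A)$ for all $Z$, together with the trivial identity $f^{-1}(\{\emptyset\})=\{Z\mid f(Z)=\emptyset\}$.

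To separate Wadge reducibility from finite-to-one reducibility I would take $X=\PN$ and $Y=\{\emptyset\}$. The constant map $Z\mapsto\emptyset$ is continuous and satisfies $f^{-1}(\{\emptyset\})=\PN$, so $\PN\leq_W\{\emptyset\}$. Conversely, any continuous $f$ with $f^{-1}(\{\emptyset\})=\PN$ must have $f(Z)=\emptyset$ for every $Z$, i.e.\ be this same constant map; its fibre over $\emptyset$ is all of $\PN$, which is uncountable, so $f$ is not finite-to-one. Hence $\PN\leq_W\{\emptyset\}$ while $\PN$ is not finite-to-one reducible to $\{\emptyset\}$.

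To separate finite-to-one from one-to-one reducibility I would keep $Y=\{\emptyset\}$ and take $X=\{\emptyset,\{0\}\}$. The map $f(Z)=Z\setminus\{0\}$ is continuous, since the characterization gives $f(Z)=\bigcup_{A\subseteq Z,\,A\in\FIN}(A\setminus\{0\})=Z\setminus\{0\}$, and it is at most $2$-to-$1$ because the fibre of any $W$ with $0\notin W$ is $\{W,\,W\cup\{0\}\}$; thus it is finite-to-one. Since $f^{-1}(\{\emptyset\})=\{Z\mid Z\subseteq\{0\}\}=\{\emptyset,\{0\}\}$, we get $\{\emptyset,\{0\}\}$ finite-to-one reducible to $\{\emptyset\}$. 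But any continuous $g$ with $g^{-1}(\{\emptyset\})=\{\emptyset,\{0\}\}$ would send both $\emptyset$ and $\{0\}$ to $\emptyset$, contradicting injectivity, so no one-to-one reduction exists.

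Finally, for the effective version I would only need to check that the two witnessing maps are effective, since the negative claims quantify over all continuous maps and therefore hold a fortiori for effective ones. By Example~\ref{ex:PN effective} a map is effective (domain-effective, equivalently topology-effective by Fact~\ref{f:effective algebraic domain}) iff $f(Z)=\bigcup_{A\subseteq Z,\,A\in\FIN}W_{g(A)}$ for a computable $g:\FIN\to\N$; taking $g(A)$ an index for $\emptyset$ handles the constant map, and $g(A)$ an index for the finite set $A\setminus\{0\}$ handles $f(Z)=Z\setminus\{0\}$. The only real subtlety—the point I would be most careful about—is that each negative claim must be genuinely universal over all continuous maps rather than over the obvious ones; the choice $Y=\{\emptyset\}$ is exactly what makes this universal statement collapse to a one-line injectivity/cardinality argument about the fibre over $\emptyset$.
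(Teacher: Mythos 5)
Your proof is correct and takes essentially the same route as the paper: explicit witness pairs for which the positive reduction is an effective map, and the negative direction follows from a cardinality count (an injective or finite-to-one reduction cannot collapse an infinite, respectively two-element, set into a single fibre). In fact your second pair $(\{\emptyset,\{0\}\},\{\emptyset\})$ is exactly the complement of the paper's pair $(\+O_2,\+O_1)$ with the identical reduction $Z\mapsto Z\setminus\{0\}$, and your first pair is a degenerate (but perfectly valid) complemented variant of the paper's $(\+O_3,\+O_1)$.
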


\begin{proof}
Let 
$\+O_1=\bigcup_{n\in\N}\+B_{\{n\}}$,
$\+O_2=\bigcup_{n>0}\+B_{\{n\}}$,
$\+O_3=\bigcup_{n>0}\+B_{\{0,n\}}$.
The complements of these families respectively
contain one set (namely $\emptyset$),
two sets (namely $\emptyset$ and $\{0\}$),
and infinitely many sets
(namely $\emptyset$, $\{0\}$ and all subsets of
$\PN\setminus\{0\}$).
Thus, there can be no injective reduction of $\+O_2$ to $\+O_1$,
and no finite-to-one reduction of $\+O_3$ to both $\+O_1$
and $\+O_2$.
However, there is a finite-to-one effective reduction of $\+O_2$ to $\+O_1$,
namely, $f(X)=X\setminus\{0\}$.
Also, there is an effective reduction $g$ of $\+O_3$ to both $\+O_1$ and $\+O_2$:
$g(X)=X$, if $0\in X$ and there is $n>0$ such that  $n\in X$; otherwise
$g(X)=\emptyset$.
\end{proof}

\begin{remark}
Related to the above proposition, cf. Theorem~\ref{thm:11 and finite to 1}
and Corollary~\ref{cor:one to one}.
\end{remark}

For zero-dimensional Polish spaces, Wadge theory has beautiful properties.

\begin{theorem}\cite{wadge72} \label{thm:duality}
Let $P,Q$ be zero-dimensional Polish spaces.
\begin{enumerate}
\item (Duality theorem).
Let $A\subseteq P$, $B\subseteq Q$ be Borel sets.
Then $A\leq_W B\ \ \mbox{or}\ \ B\leq_W (P\setminus A)$.

\item (Hardness theorem)
For all $\xi<\omega_1$, the following conditions are equivalent:
\begin{itemize}
\item $A$ is not $\tbPi^0_\xi(P)$.

\item $A$ is Wadge hard for $\tbSigma^0_\xi(S)$
for some uncountable zero-dimensional Polish space~$S$.

\item $A$ is Wadge hard for $\tbSigma^0_\xi(S)$
for every uncountable zero-dimensional Polish space~$S$.
\end{itemize}
\end{enumerate}
\end{theorem}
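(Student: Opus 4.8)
The plan is to prove the two parts in sequence, deriving the Hardness theorem from the Duality theorem. For Part~1, I would use the Wadge game together with Martin's Borel determinacy. First I would reduce to the situation where both spaces are homeomorphic to closed subspaces of the Baire space $\Baire$: since $P,Q$ are zero-dimensional Polish, each is homeomorphic to a closed subset of $\Baire$, and every nonempty closed subset of $\Baire$ is a continuous retract of $\Baire$, so both Wadge reducibility and the Borel classes transfer along these embeddings. I would then define the Wadge game $G_W(A,B)$ in which player~I builds a point $x\in P$ and player~II builds a point $y\in Q$ (player~II being allowed to pass, but forced to produce a genuine point in the limit), with player~II declared the winner exactly when $x\in A\iff y\in B$. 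Because $A$ and $B$ are Borel, the payoff condition is Borel, so the game is determined.

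Next I would read off the two alternatives from the determinacy of $G_W(A,B)$. A winning strategy for~II, applied to the input $x$ played by~I, defines a continuous map $f\colon P\to Q$ with $x\in A\iff f(x)\in B$, that is $A=f^{-1}(B)$, giving $A\leq_W B$. Symmetrically, since player~I moves first, a winning strategy for~I defines a continuous map $g\colon Q\to P$ with $y\in B\iff g(y)\notin A$, that is $B=g^{-1}(P\setminus A)$, giving $B\leq_W(P\setminus A)$. The main obstacle here is the bookkeeping that turns a strategy into a genuinely continuous total function on the whole space: one must check that II's pass moves still yield a limit point of $Q$, that the retraction step does not disturb membership in $A$ or $B$, and that the payoff set really is Borel in the product so that Martin's theorem applies.

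For Part~2, I would establish the cycle $(c)\Rightarrow(b)\Rightarrow(a)\Rightarrow(c)$. The implication $(c)\Rightarrow(b)$ is immediate since uncountable zero-dimensional Polish spaces exist, for instance $\cantor$. For $(b)\Rightarrow(a)$, fix an uncountable $S$ for which $A$ is Wadge hard for $\tbSigma^0_\xi(S)$, and choose a set $B\in\tbSigma^0_\xi(S)\setminus\tbPi^0_\xi(S)$, a properly $\tbSigma^0_\xi$ set, which exists in any uncountable zero-dimensional Polish space by the usual universal-set construction. Hardness gives $B=f^{-1}(A)$ for some continuous $f$; were $A$ in $\tbPi^0_\xi(P)$, Fact~\ref{f:inverse image Borel} would force $B\in\tbPi^0_\xi(S)$, a contradiction, so $A\notin\tbPi^0_\xi(P)$.

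Finally, for $(a)\Rightarrow(c)$, I would invoke the Duality theorem just proved. Let $S$ be any uncountable zero-dimensional Polish space and let $U\subseteq S$ be $\tbSigma^0_\xi(S)$-complete; it suffices to show $U\leq_W A$, since then every member of $\tbSigma^0_\xi(S)$ reduces to $U$ and hence, by transitivity of $\leq_W$, to $A$. Applying Part~1 to $U\subseteq S$ and $A\subseteq P$ yields $U\leq_W A$ or $A\leq_W(S\setminus U)$. In the second case $S\setminus U$ is $\tbPi^0_\xi(S)$, so by Fact~\ref{f:inverse image Borel} we would get $A\in\tbPi^0_\xi(P)$, contradicting~(a); hence $U\leq_W A$, which is exactly~(c).
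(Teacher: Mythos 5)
The paper never proves this theorem: it is imported verbatim from Wadge (1972) as classical background (only the subsequent Remark, on the necessity of zero-dimensionality, gets an argument). So there is no internal proof to compare against; your proposal has to be judged as a reconstruction of the classical argument, and as such it is essentially correct and follows the standard route. For Part 1, the reduction to closed subspaces of $\Baire$ via retractions is sound (the key point, which you flag, is that the retraction is the identity on the embedded copy, so membership in $A$ and $B$ is undisturbed when you pull the sets back to $\Baire$, play the game there, and restrict the resulting reduction), and the extraction of a continuous reduction from a winning strategy for II, respectively of a reduction of $B$ to $P\setminus A$ from a winning strategy for I, is the textbook use of Martin's Borel determinacy. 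The implications $(c)\Rightarrow(b)\Rightarrow(a)$ of Part 2 are also fine, resting on properness of the Borel hierarchy in uncountable Polish spaces and on Fact~\ref{f:inverse image Borel}.

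One remark on your $(a)\Rightarrow(c)$: you route the argument through a $\tbSigma^0_\xi(S)$-complete set $U$ in an \emph{arbitrary} uncountable zero-dimensional Polish space $S$. Such sets do exist, but for general $S$ (as opposed to $\cantor$ or $\Baire$, where the homeomorphism $S\cong S\times S$ makes the universal-set construction immediate) their existence is itself a small construction — one embeds a Cantor set $K\subseteq S$, transfers a complete set of $\cantor$ that is hard for all zero-dimensional spaces, and checks it lands in $\tbSigma^0_\xi(S)$ using that $K$ is closed, with separate care needed at $\xi=1$ — so as written this is an unproved (though true) auxiliary claim. The detour is unnecessary: apply Part 1 directly to each pair $V\subseteq S$, $A\subseteq P$ with $V\in\tbSigma^0_\xi(S)$. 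The alternative $A\leq_W S\setminus V$ would place $A$ in $\tbPi^0_\xi(P)$ by Fact~\ref{f:inverse image Borel}, contradicting $(a)$; hence $V\leq_W A$ for every such $V$, which is exactly $(c)$, with no completeness claim needed.
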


\begin{remark}
The zero-dimensional hypothesis is necessary in Theorem~\ref{thm:duality}
(cf. Hertling 1996a,1996b; Ikegami 2010; Schlicht 2011).
For instance, 
consider  the closed interval $A=[0,+\infty[$ in the real line.
Its complement $ \R\setminus A = \ ]\!\!-\infty,0[$ is open.
Let $B=\{\alpha\in\Baire:\forall n\ \alpha(n)\not =0\}$,
so $B$ is not open.
Since Wadge reductions are continuous, necessarily
$B \not\leq_W \R\setminus A$  
(a reduction is impossible because the inverse image of an open set
by a continuous map must be open)
and $A\not\leq_W B$ 
(a reduction is impossible because every continuous map
from a connected space to a totally disconnected space is constant).
\end{remark}
Wadge Duality theorem does not hold in Scott spaces. 
This contradicts (Tang 1981, page 365 line 2)
 where the theorem is qualified as ``straightforward" for $\PN$.
The failure of the Duality theorem 
leads to the well known observation that 
Wadge games, the main tool in Wadge theory,
can not be used to investigate the Wadge hierarchy on spaces that
are not zero-dimensional and Polish.

\begin{proposition}
Wadge Duality theorem fails in the Scott domain $\PN$.
\end{proposition}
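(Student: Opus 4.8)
The plan is to exhibit a single pair of Borel families $\+X,\+Y\subseteq\PN$ for which \emph{both} disjuncts of the Duality theorem fail, in analogy with the real-line counterexample of the preceding remark but with the connectedness argument replaced by the feature of Scott spaces recalled in \S\ref{ss:PN}: the only subfamilies of $\PN$ that are simultaneously open and $\bFs$ are $\emptyset$ and $\PN$. Concretely, I would take $\+X=\+B_{\{0\}}=\{X\mid 0\in X\}$, a basic Scott-open family that is nonempty ($\N\in\+X$) and proper ($\emptyset\notin\+X$), together with $\+Y=\FIN$, the family of finite subsets of $\N$.

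First I would record the two properties of $\+Y$ that drive everything. On one hand $\+Y$ is $\bFs$, since $\+Y=\bigcup_{n\in\N}\{X\mid X\subseteq\{0,\dots,n\}\}$ displays it as a countable union of Scott-closed sets. On the other hand $\+Y$ is \emph{not} Scott-closed, because the directed chain $\{0\}\subseteq\{0,1\}\subseteq\cdots$ lies in $\+Y$ but its supremum $\N$ does not.

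Next I would block the first disjunct $\+X\leq_W\+Y$. A continuous $f\colon\PN\to\PN$ pulls a closed set back to a closed set, hence pulls the $\bFs$ set $\+Y$ back to a $\bFs$ set (the $\bFs$ instance of Fact~\ref{f:inverse image Borel}); so $f^{-1}(\+Y)$ is $\bFs$ for \emph{every} candidate reduction $f$. But $\+X$ is open, nonempty and proper, so by the quoted dichotomy it is not $\bFs$; therefore $\+X\neq f^{-1}(\+Y)$ for any continuous $f$, i.e.\ $\+X\not\leq_W\+Y$. I would then block the second disjunct $\+Y\leq_W(\PN\setminus\+X)$: since $\+X$ is open, $\PN\setminus\+X=\{X\mid 0\notin X\}$ is closed, and continuous preimages of closed sets are closed, so any putative reduction would force $\+Y$ to be closed, contradicting the second property above; hence $\+Y\not\leq_W(\PN\setminus\+X)$. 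Both disjuncts of Duality thus fail for this pair.

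The only step that is genuinely special to $\PN$, rather than formal, is the appeal to the open-versus-$\bFs$ dichotomy to certify that the open family $\+X$ cannot be $\bFs$; the rest reduces to the elementary observation that continuous preimages respect the classes \emph{closed} and \emph{$\bFs$}. I expect no further obstacle: it is precisely the asymmetry between open and closed sets in a non-Hausdorff $T_0$ space that makes one direction fail for reasons of Borel level and the other for reasons of topological type.
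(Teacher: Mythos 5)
Your proof is correct, and it uses the same counterexample pair as the paper ($\+X=\+B_{\{0\}}$ and $\+Y=\FIN$), with the identical argument for $\+Y\not\leq_W\PN\setminus\+X$ (continuous preimages of closed sets are closed, $\PN\setminus\+X$ is closed, $\FIN$ is not); the difference lies in how you block $\+X\leq_W\+Y$. The paper does this by bare monotonicity: since $2\N+1\subseteq\{0\}\cup(2\N+1)$, with the smaller set outside $\+X$ and the larger inside, any reduction $f$ would have to send the first to an infinite set (not in $\FIN$) and the second to a finite superset of it, which is impossible because continuous maps on $\PN$ are increasing. You instead argue at the level of Borel classes: $\FIN$ is $\bFs$, continuous preimages preserve $\bFs$, and by the open-versus-$\bFs$ dichotomy recalled in \S\ref{ss:PN} no nonempty proper Scott-open family is $\bFs$. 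Both arguments are sound. The paper's route is more elementary and self-contained, needing only the monotonicity of continuous maps; yours leans on the dichotomy (which the paper proves separately in \S\ref{ss:PN}) but yields strictly more: no nonempty proper open family Wadge-reduces to \emph{any} $\bFs$ family, so duality fails for every pair consisting of such an open family together with any $\bFs$-but-not-closed family, and both halves of the failure are explained uniformly by preservation of topological or Borel type under continuous preimages. One cosmetic point: Fact~\ref{f:inverse image Borel} as stated covers Borel and Hausdorff-Kuratowski classes, and $\bFs$ is not literally one of them (in $\PN$ it is a proper subclass of $\tbSigma^0_2$); but the preservation you need is immediate, since preimages of closed sets are closed and preimages commute with countable unions, so this does not affect correctness.
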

\begin{proof}
Let $\+X = \+B_{ \{ 0\} }$, $\+Y=\FIN$.
Since $\PN\setminus \+X$ is closed but $\+Y$ is not closed,
we have $\+Y\not\leq_W \PN\setminus \+X$ .
Let us see that $\+X\not\leq_W\+Y$.
Since $2 \N +1 \not\in \+X$ but  $\{0\}\cup \ 2 \N +1 \in \+X$,
any reduction $f:\PN\to\PN$ from $\+X$ to $\+Y$
should be such that 
$f( 2 \N +1) \not\in \+Y$ hence infinite and
$f(\{  0\} \cup  \ 2 \N +1) \in \+Y$ hence finite.
But this is a contradiction since, being continuous,
reductions are  increasing with respect to subset inclusion.
\end{proof}

Theorem~\ref{thm:duality} does not have an effective version
because determinacy is not  guaranteed in the effective world.
As pointed to us by an anonymous referee,
the arguments given by Fokina, Friedman and Tornquist (2010)
yield the following result. 

\begin{proposition}\label{p:embedding inclusion in effective degrees}
The ordered structure of inclusion on
computable (respectively computably enumerable) subsets of $\N$
can be embedded into the ordered structure of effective Wadge degrees
(i.e. degrees relative to computable reductions) of
$\Pi^0_1$ (respectively $\Sigma^0_2$) subsets of the Baire space.
\end{proposition}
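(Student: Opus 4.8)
The plan is to realize the embedding concretely by a ``column'' construction. It suffices to build a map $A\mapsto \+S_A$ sending subsets of $\N$ to subsets of $\Baire$ so that $A\subseteq B$ if and only if $\+S_A\leq_W\+S_B$ via a \emph{computable} reduction; injectivity on degrees is then automatic, since $A\neq B$ forces $m\in(A\setminus B)\cup(B\setminus A)$ and hence non-equivalence. The first coordinate $x(0)$ of a point $x\in\Baire$ will serve as a column index, and inside each column $n$ I place a fixed ``gadget'' $G_n$ in the tail $(x(1),x(2),\dots)$, letting $\+S_A=\{x\in\Baire : x(0)\in A \text{ and } (x(1),x(2),\dots)\in G_n \text{ where } n=x(0)\}$. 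The whole point is that the gadgets must be \emph{effectively independent}: no computable reduction may carry $G_m$ into $G_k$ for $k\neq m$. For this I would start from a nonempty $\Pi^0_1$ class $C\subseteq\cantor$ all of whose distinct members are pairwise Turing-incomparable (Jockusch and Soare 1972); such a $C$ is perfect, and splitting it along its leftmost branchings yields a uniformly computable sequence of nonempty $\Pi^0_1$ subclasses $[T_n]$, $n\in\N$, with $\bigcup_n[T_n]=C$, so that any $b\in[T_m]$ and $c\in[T_k]$ with $m\neq k$ are Turing-incomparable.

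For the \emph{computable} case I would take the gadget to be $[T_n]$ itself and set $\+S_A=\{x : x(0)\in A \text{ and } (x(1),x(2),\dots)\in[T_{x(0)}]\}$. When $A$ is computable the condition ``$x(0)\in A$'' is clopen and the tail condition is uniformly $\Pi^0_1$, so $\+S_A$ is $\Pi^0_1$. For the \emph{computably enumerable} case I must relax the gadget to a genuinely $\Sigma^0_2$ one, namely $G_n=\{y\in\Baire : (\exists k)\ (y(k),y(k+1),\dots)\in[T_n]\}$, the set of points some tail of which lies in $[T_n]$; since $[T_n]$ is $\Pi^0_1$ this $G_n$ is $\Sigma^0_2$, and ``$x(0)\in A$'' is now $\Sigma^0_1$, so $\+S_A$ is $\Sigma^0_2$. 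The key property of this relaxed gadget is twofold: a finite prefix is irrelevant to membership (tail quantifier), yet every $y\in G_n$ is Turing-equivalent to an actual branch of $T_n$, because the deleted prefix is finite and hence computable. Thus degrees are still controlled by the branches, which is exactly what the independence argument will need.

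The easy direction $A\subseteq B\Rightarrow\+S_A\leq_W\+S_B$ is handled column by column. If $x(0)=n\in A$ then $n\in B$, and I map $x$ into the same column $n$ of the target; if $n\notin A$ I must send $x$ to a non-member. In the computable case this is a decidable split, so I copy $x$ when $n\in A$ and send it to a fixed computable non-member (e.g.\ $0\conc\overline{0}$, whose tail is not in any $[T_n]$) otherwise. In the c.e.\ case deciding ``$n\in A$'' is the delicate point, and here the tail gadget earns its keep: on column $n$ the reduction outputs the off-alphabet marker $2$ while searching for $n$ in an enumeration of $A$, and, once $n$ appears, switches to copying the tail of $x$. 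If $n\in A$ the output is $2^\ell\conc(x(1),x(2),\dots)$, which lies in $G_n$ exactly when the tail of $x$ does (finite prefixes do not matter); if $n\notin A$ the output is $2^\omega$, whose every tail leaves $\cantor$ and so is a definite non-member. This map is total and computable \emph{without} deciding membership in $A$, which is precisely why the $\Sigma^0_2$ gadget, rather than a plain $\Pi^0_1$ one, is needed.

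The hard direction, $\+S_A\leq_W\+S_B\Rightarrow A\subseteq B$, is the main obstacle and is where effectivity is essential. Arguing contrapositively, suppose $m\in A\setminus B$ and $\+S_A=f^{-1}(\+S_B)$ for a computable $f$. Restricting to the column $m$ gives a computable $g(y)=f(m\conc y)$ with $G_m=g^{-1}(\+S_B)$. Pick any $y\in G_m$ (possible, as $[T_m]\subseteq G_m$ is nonempty); then $g(y)\in\+S_B$, say $g(y)=k\conc w$ with $k\in B$ (so $k\neq m$) and $w\in G_k$. Since $g$ is computable, $w\leq_T g(y)\leq_T y$. But $y\in G_m$ gives $y\equiv_T b$ for some $b\in[T_m]$, and $w\in G_k$ gives $w\equiv_T c$ for some $c\in[T_k]$; hence $c\leq_T b$ with $b\neq c$, contradicting the pairwise Turing-incomparability of the members of $C$. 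This shows $\+S_m\not\leq_W\+S_B$ (equivalently $\+S_A\not\leq_W\+S_B$), completing the equivalence and hence the order-embedding. The whole argument rests on replacing Borel determinacy, unavailable in the effective setting, by a computability-theoretic incompatibility; I expect the only genuinely subtle points to be the uniform splitting of the Jockusch--Soare class into the $[T_n]$ and the verification that the c.e.\ activation reduction is correct, both of which are routine once the architecture above is fixed.
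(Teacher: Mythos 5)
Your overall architecture --- a column construction $\+S_A=\bigcup_{n\in A}\{x\in\Baire : x(0)=n,\ (x(1),x(2),\dots)\in G_n\}$, a computable ``wait for the enumeration, then copy'' reduction for the easy direction, and an independence property of the gadgets for the hard direction --- is exactly the architecture of the paper's proof, which sets $A^*_X=\bigcup_{n\in X}nA_n$ and $A^\dagger_X=0^{<\omega}1A^*_X$. The genuine gap is in where you get the gadgets. You invoke ``a nonempty $\Pi^0_1$ class $C\subseteq\cantor$ all of whose distinct members are pairwise Turing-incomparable (Jockusch and Soare 1972)''. What Jockusch and Soare proved is that every special $\Pi^0_1$ class \emph{contains} $2^{\aleph_0}$ members of pairwise incomparable degree; that antichain is not itself a $\Pi^0_1$ class, and their results do not give a $\Pi^0_1$ class which \emph{is} such an antichain. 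This is not a routine strengthening, and in Cantor space it runs into serious obstructions: every nonempty $\Pi^0_1$ class in $\cantor$ has a member $\leq_T\emptyset'$ (its leftmost path), and every PA degree computes a member of every nonempty $\Pi^0_1$ class (Scott basis theorem); hence your requirement that no member of $[T_m]$ compute any member of $[T_k]$ ($m\neq k$) forces every piece to avoid entirely both the cone above $\emptyset'$ and all PA degrees. Whether a uniform sequence of $\Pi^0_1$ classes in $\cantor$ with this \emph{pointwise} independence exists is a substantial open construction problem that your proposal does not address. It is precisely to get around this that the paper works with the classes of Fokina, Friedman and Tornquist (their Theorem 6), which live in the Baire space and satisfy only a \emph{functional} independence --- no hyperarithmetical (a fortiori no computable) $F$ maps $A_n$ into $\bigcup_{m\neq n}A_m$ --- and the paper's hard direction is correspondingly run through functions ($G(f)=F(kf)^-$), not through Turing degrees of individual members. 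Functional independence is all that is needed, and it is strictly weaker than what you are asking your gadgets to do.

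Second, even granting your class $C$, the step ``splitting it along its leftmost branchings yields a uniformly computable sequence of nonempty $\Pi^0_1$ subclasses'' fails: whether a node of a computable tree is extendible to an infinite path is a $\Pi^0_1$, in general undecidable, question, so the leftmost path and its branching nodes are only $\emptyset'$-computable. With a splitting that is merely $\emptyset'$-computable, your sets $\+S_A$ are only $\Pi^0_1(\emptyset')$ (respectively $\Sigma^0_2(\emptyset')$), not lightface $\Pi^0_1$ (respectively $\Sigma^0_2$) as the proposition requires. So the two points you dismiss as ``routine once the architecture is fixed'' are exactly where the content of the proposition lies; the remainder of your argument (both reductions, and the degree-theoretic contradiction in the hard direction) is correct modulo these missing ingredients.
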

\begin{proof}
We follow the arguments in the proof of Theorem 9 in Fokina, Friedman and Tornquist (2010).
Their Theorem 6  establishes that
there is a uniform sequence $(A_n)_{n\in\N}$
of nonempty $\Pi^0_1$ subsets of $\Baire$
such that, for each $n$, there is no hyperarithmetical function
$F:\Baire\to\Baire$ such that $F(A_n)\subseteq\bigcup_{m\neq n}A_m$.
We also use that the computable sequence $0^\omega$
is in no $A_n$, stated as a Remark after their Theorem 5.

{\it Case of computable subsets of $\N$.}
For $f\in\Baire$ and $k\in\N$, let $kf\in\Baire$ be
such that
$(kf)(0)=k$ and $(kf)(x)=f(x+1)$ for all $x\in\N$.
For $X\subseteq\N$ computable,
let $A^*_X$ be the $\Pi^0_1$ set $\bigcup_{n\in X}nA_n$.
Consider $F:\Baire\to\Baire$ such that, for $f\in\Baire$,
$F(f)=f$ if $f(0)\in X$ and $F(f)=0^\omega$ otherwise.
Then $F$ is computable and reduces $A^*_X$ to $A^*_Y$
for any $X\subseteq Y\subseteq\N$.
Suppose now that $X,Y$ are subsets of $\N$,
$X\not\subseteq Y$,
and $F:\Baire\to\Baire$ reduces $A^*_X$ to $A^*_Y$.
For $g\in\Baire$, let $g^-\in\Baire$ be obtained by removing
the first element of $g$, i.e. $g^-(x)=g(x+1)$ for all $x$.
Pick some $k\in X\setminus Y$ and let $G:\Baire\to\Baire$ be such that $G(f)=F(kf)^-$ for all $f\in\Baire$.
Since $k\in X$, $G(A_k)=F(kA_k)^-\subseteq (A^*_Y)^-
=\bigcup_{m\in Y}A_m$.
Since $k\notin Y$, the assumed property of the $A_n$'s ensures that
$G$ hence also the reduction $F$ cannot be hyperarithmetical.
Thus, there is no computable reduction of $A^*_X$ to $A^*_Y$.

{\it Case of computably enumerable subsets of $\N$.}
For $X\subseteq\N$ computably enumerable,
let $A^\dagger_X$ be the $\Sigma^0_2$ set
$0^{<\omega}1A^*_X$.
Let $(n_i)_{i\in\N}$ be a computable enumeration of $X$
and consider the map $F:\Baire\to\Baire$ such that
$F(0^p 1 n_i f)=0^{p+i}1 n_i f$ for all $p,i\in\N$, $f\in\Baire$,
and $F$ takes value $0^\omega$ elsewhere.
Then $F$ is computable and reduces $A^\dagger_X$ to $A^\dagger_Y$
for any $X\subseteq Y$.
As before,
$A^\dagger_X$ cannot be effectively reduced to $A^\dagger_Y$
in case~$X\not\subseteq Y$.
\end{proof}

\subsection{Universality}\label{ss:universality}
It turns out that universality is related to Wadge completeness,
cf. (Moschovakis 1979/2009, page 27)
or (Kechris 1995, page 85, proof of Theorem 14.2).

\begin{definition}
\label{def:universal}
Let $P, Q$ be topological spaces, $\tbGamma$ a Borel or Hausdorff-Kuratowski class.
\begin{enumerate}
\item  A set $U\in \tbGamma(P\times Q)$ is
{\it $P$-universal} for  the class $\tbGamma(Q)$
if $\tbGamma(Q) = \{U_p \mid p\in P\}$
where $U_p=\{q \mid (p,q)\in U\}$.
\item
 A set $S\in \tbGamma(Q)$ is
{\it strongly $P$-universal} for $\tbGamma(Q)$
if there exists a continuous map $\Phi:P\times Q \to Q$
such that
$\tbGamma(Q) = \{\Phi_p^{-1}(S)\mid p\in P\}$ with $\Phi_p(q)=\Phi(p,q)$.
If $\Phi$ is one-to-one in its second argument
then $S$ is said to be {\it one-to-one strongly $P$-universal}.
\item
Fix some presentations of $P,Q$.
If $\Gamma$ is an effective class,
we get corresponding  {\it effective notions} by requiring
equality $\Gamma(Q) = \{U_p \mid p\in P, \textit{ $p$ effective}\}$
in item 1
and equality
$\Gamma(Q) = \{\Phi_p^{-1}(S)\mid p\in P, \textit{ $p$ effective}\}$
with $\Phi$ an effective map in item 2.
\end{enumerate}
\end{definition}

\begin{remark}
\begin{enumerate}
\item If there is some strongly $P$-universal set $S$ for $\tbGamma(Q)$
then there is some $P$-universal set $U$ for $\tbGamma(Q)$~:
let $U=\Phi^{-1}(S)$,
for a witness $\Phi$ of the strong $P$-universality of $S$.
\item  Effective strong universality is a 
variant of
the notion of universal partial computable function 
in Blum's isomorphism theorem
(recall that a partial computable $\psi:\N\to\N$ is universal
if there is some computable $f:\N^2\to\N$ such that
for all $e,x\in\N$, $\{e\}(x)=\psi(f(e,x))$,
cf.  Rogers 1967, pages 54, 191).
\end{enumerate}
\end{remark}

Wadge completeness lies between
universality and strong universality.

\begin{proposition}\label{p:CQQ}
Let $\tbGamma$ be a Borel or Hausdorff-Kuratowski class.
Let $P, Q$ be topological spaces.
\begin{enumerate}
\item
If $S$ is (one-to-one) strongly $P$-universal for $\tbGamma(Q)$ then
$S$ is (one-to-one)  Wadge complete for $\tbGamma(Q)$.
An effective version holds relative to fixed presentations of $P,Q$.
\item
Let $C(Q,Q)$ be the set of continuous maps $Q\to Q$ endowed with any
topology at least as fine as the topology of pointwise convergence.
If $S$ is Wadge complete for $\tbGamma(Q)$
then $U=\{(f,q)\in C(Q,Q)\times Q \mid f(q)\in S\}$ is
$C(Q,Q)$-universal for $\tbGamma(Q)$.
An effective version holds relative to presentations of $Q$
and of a topology on $C(Q,Q)$ finer than that of pointwise convergence.
\end{enumerate}
\end{proposition}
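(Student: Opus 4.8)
The plan is to treat the two items separately, since each one reduces to exhibiting the right continuous map and then invoking the inverse-image stability of the classes (Fact~\ref{f:inverse image Borel}).

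\emph{Item 1.} Membership $S\in\tbGamma(Q)$ is already part of the definition of (one-to-one) strong $P$-universality (Definition~\ref{def:universal}), so only Wadge hardness remains. Given $X\in\tbGamma(Q)$, strong universality supplies $p\in P$ with $X=\Phi_p^{-1}(S)$, where $\Phi_p=\Phi(p,\cdot)$. First I would observe that $\Phi_p$ is continuous, being the composite of the continuous insertion $\iota_p:q\mapsto(p,q)$ with the continuous $\Phi$; hence $\Phi_p$ witnesses $X\leq_W S$ and $S$ is Wadge complete. If $\Phi$ is one-to-one in its second argument then $\Phi_p$ is injective, giving one-to-one completeness. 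For the effective version I would check that $\iota_p$ is effective whenever $p$ is effective: with the product presentation (Remark~\ref{rk:product presentation}) one has $\iota_p(V_j)\subseteq U_i\times V_k \iff (p\in U_i \text{ and } V_j\subseteq V_k)$, both conditions being c.e.\ by effectiveness of $p$ and of the presentation of $Q$; then $\Phi_p=\Phi\circ\iota_p$ is effective as a composite.

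\emph{Item 2.} I would verify the two requirements of $C(Q,Q)$-universality in turn. The parametrization $\tbGamma(Q)=\{U_f\mid f\in C(Q,Q)\}$, where $U_f=f^{-1}(S)$, needs no topology on $C(Q,Q)$: the inclusion $\supseteq$ holds because each $f$ is continuous and $S\in\tbGamma(Q)$, so $f^{-1}(S)\in\tbGamma(Q)$ by Fact~\ref{f:inverse image Borel}; the inclusion $\subseteq$ is exactly Wadge hardness of $S$, producing for each $X\in\tbGamma(Q)$ a continuous $f$ with $X=f^{-1}(S)=U_f$. It remains to place $U$ in $\tbGamma(C(Q,Q)\times Q)$. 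Writing $\mathrm{ev}:C(Q,Q)\times Q\to Q$, $\mathrm{ev}(f,q)=f(q)$, one has $U=\mathrm{ev}^{-1}(S)$, so by Fact~\ref{f:inverse image Borel} it suffices that $\mathrm{ev}$ be continuous.

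\emph{The main obstacle} is precisely the joint continuity of the evaluation map, and this is where the hypothesis on the topology of $C(Q,Q)$ enters: separate continuity (continuity of each $f$ and of each point-evaluation $f\mapsto f(q)$, the latter being exactly what pointwise convergence guarantees) is not enough in general. For the $\omega$-continuous domains of interest here it does suffice, and I would prove it directly from the approximation structure: to see that $\mathrm{ev}^{-1}(\uup b)=\{(f,q)\mid b\ll f(q)\}$ is open, take $(f_0,q_0)$ in it, and using $f_0(q_0)=\sqcup\{f_0(c)\mid c\ll q_0\}$ together with the interpolation property (Fact~\ref{f:interpolation}) find a basic $c\ll q_0$ with $b\ll f_0(c)$; then $\{f\mid b\ll f(c)\}\times\uup c$ is a neighbourhood of $(f_0,q_0)$ contained in $\mathrm{ev}^{-1}(\uup b)$, by monotonicity of continuous maps. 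Since refining the topology of $C(Q,Q)$ only enlarges the open sets of the product, continuity is preserved for any finer topology, which is why ``at least as fine as pointwise convergence'' is the right hypothesis. The effective versions run along the same lines: the parametrization uses effective Wadge hardness together with effective inverse-image stability (Fact~\ref{f:effective inverse image Borel}), and $U=\mathrm{ev}^{-1}(S)$ lands in the effective class once $\mathrm{ev}$ is effective; I expect verifying this effectivity of $\mathrm{ev}$, and matching effective elements of $C(Q,Q)$ with effective maps $Q\to Q$ under the chosen presentation, to be the only genuinely delicate point.
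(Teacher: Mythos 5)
Your proposal follows the same architecture as the paper's proof: item 1 via the continuous sections $\Phi_p=\Phi(p,\cdot)$ (with injectivity of $\Phi_p$ inherited from injectivity of $\Phi$ in its second argument), and item 2 by writing $U=a^{-1}(S)$ for the evaluation map $a(f,q)=f(q)$, then combining inverse-image stability (Facts~\ref{f:inverse image Borel} and \ref{f:effective inverse image Borel}) with Wadge hardness to get the parametrization $\tbGamma(Q)=\{U_f\mid f\in C(Q,Q)\}$. The one substantive difference is your treatment of the crux: the paper simply asserts that $a$ is continuous on $C(Q,Q)\times Q$ and moves on, whereas you correctly observe that pointwise convergence gives only separate continuity in general (joint continuity genuinely fails, e.g., for $Q=\Baire$ with the pointwise topology on $C(Q,Q)$), and you supply an interpolation argument (Fact~\ref{f:interpolation}) proving joint continuity when $Q$ is a continuous domain: the neighbourhood $\{f\mid b\ll f(c)\}\times \uup c$ with $c\ll q_0$ and $b\ll f_0(c)$ is exactly right, and the pointwise-subbasic set $\{f\mid b\ll f(c)\}$ together with monotonicity closes the argument. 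This is a real strengthening in the domain setting, which is the setting the paper actually uses; strictly speaking neither your argument nor the paper's establishes item 2 for an arbitrary topological space $Q$ with an arbitrary topology merely finer than pointwise convergence, but your version at least makes the needed hypothesis visible instead of hiding it. Conversely, on the effective version of item 2 you are sketchier than the paper: the paper's one concrete contribution there is to note that the pointwise topology has no countable basis when $Q$ is uncountable and to exhibit an explicit countable presentation $(\+V_A)_{A\in\+P_{<\omega}\N^2}$, with $\+V_A=\{f\in C(Q,Q)\mid \forall (i,j)\in A\ f(V_i)\subseteq V_j\}$ for a presentation $(V_i)_{i\in\N}$ of $Q$, of a suitable finer topology; you flag this as ``the only genuinely delicate point'' but leave it unresolved.
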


\begin{proof}
1. Let $X$ be in $\tbGamma(Q)$. Since $S$ is strongly universal,
there exists $p\in P$ such that $X= \Phi_p^{-1}(S)$
where $\Phi_p:Q\to Q$  is the continuous map  $q\mapsto\Phi(p,q)$.
Notice that if 
$\Phi$ is injective in the second
argument, then  $\Phi_p$ is injective.

2. The map $a:(f,q)\mapsto f(q)$
is continuous $C(Q,Q)\times Q\to Q$, so
$a^{-1}$ preserves Borel and Hausdorff-Kuratowski classes.
Thus,
$U=a^{-1}(S) \in \tbGamma(C(Q,Q)\times Q)$.
Since $S\in\tbGamma(Q)$, the set $U_f$ is in $\tbGamma(Q)$ for all $f\in C(Q,Q)$.
Since $S$ is Wadge hard for $\tbGamma(Q)$,
every $X\in\tbGamma(Q)$ is of the form $X=f^{-1}(S)$
for some $f\in C(Q,Q)$.
Hence, $f^{-1}(S)=U_f$.
If $Q$ is not a countable set, the pointwise topology on $C(Q,Q)$
has no countable basis. For the effective version we need a finer topology:
If $(V_i)_{i\in\N}$ is a presentation of  $Q$,
a convenient topology on $C(Q,Q)$ has
a presentation $(\+V_A)_{A\in\+P_{<\omega}\N^2}$
where $\+V_A=\{f\in C(Q,Q)\mid\forall(i,j)\in A\ f(U_i)\subseteq U_j\}$.
\end{proof}
Wadge completeness coincides with strong universality
when this last condition is not~void.

\begin{fact}
If there is a strongly $P$-universal set for $\tbGamma(Q)$
then every Wadge complete set for $\tbGamma(Q)$
is strongly $P$-universal for $\tbGamma(Q)$.
One-to-one and effective versions also hold.
\end{fact}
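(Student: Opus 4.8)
The plan is to produce the strong universality witness for an arbitrary Wadge complete set by post-composing a fixed strong universality witness with a single Wadge reduction. Fix a strongly $P$-universal set $S_0$ for $\tbGamma(Q)$, witnessed by a continuous map $\Psi:P\times Q\to Q$ (with $\Psi_p(q)=\Psi(p,q)$), so that $\tbGamma(Q)=\{\Psi_p^{-1}(S_0)\mid p\in P\}$. By the very definition of strong universality, $S_0\in\tbGamma(Q)$. Now let $S$ be any Wadge complete set for $\tbGamma(Q)$. Since $S$ is Wadge hard for $\tbGamma(Q)$ and $S_0\in\tbGamma(Q)$, there is a continuous reduction $g:Q\to Q$ with $S_0=g^{-1}(S)$.

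The key step is the composition. I would define $\Phi:P\times Q\to Q$ by $\Phi(p,q)=g(\Psi(p,q))$, which is continuous as a composite of continuous maps. Then $\Phi_p=g\circ\Psi_p$, whence $\Phi_p^{-1}(S)=\Psi_p^{-1}(g^{-1}(S))=\Psi_p^{-1}(S_0)$ for every $p\in P$. Consequently $\{\Phi_p^{-1}(S)\mid p\in P\}=\{\Psi_p^{-1}(S_0)\mid p\in P\}=\tbGamma(Q)$, so $\Phi$ witnesses that $S$ is strongly $P$-universal, as required.

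For the one-to-one version I would run the same argument starting from a one-to-one strongly $P$-universal $S_0$ (so each $\Psi_p$ is injective) and using that one-to-one Wadge hardness of $S$ supplies an injective reduction $g$; then each $\Phi_p=g\circ\Psi_p$ is a composite of injections, hence injective. For the effective version, relative to fixed presentations, I would take $\Psi$ and $g$ effective (the latter provided by effective Wadge hardness applied to $S_0\in\Gamma(Q)$) and invoke closure of effective maps under composition to see that $\Phi=g\circ\Psi$ is effective; the displayed equality then holds with $p$ ranging over effective elements, giving effective strong $P$-universality.

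I expect no serious obstacle, since the content is entirely the composition identity $\Phi_p^{-1}(S)=\Psi_p^{-1}(S_0)$. The only point requiring care is the one-to-one case, where one must ensure that both the supplied reduction $g$ and the strong-universality slices $\Psi_p$ are injective so that their composite stays injective; in the effective case one simply relies on the already-stated closure of effective maps under composition.
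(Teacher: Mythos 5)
Your proof is correct and is essentially the paper's own argument: compose a witness $\Psi$ of the strong $P$-universality of the given universal set $S_0$ with a continuous reduction $g$ supplied by Wadge completeness, and read off $\Phi_p^{-1}(S)=\Psi_p^{-1}(g^{-1}(S))=\Psi_p^{-1}(S_0)$, with the one-to-one and effective cases following by closure of injective (respectively effective) maps under composition. You even get the direction of the reduction explicitly right ($S_0=g^{-1}(S)$, using hardness of $S$ and $S_0\in\tbGamma(Q)$), where the paper's phrase ``reduction from $V$ to $U$'' is loosely worded.
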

\begin{proof}
Suppose $U$ is a strongly $P$-universal set for $\tbGamma(Q)$ and $V$ is a Wadge complete set for $\tbGamma(Q)$.
Any continuous reduction from $V$ to $U$ 
composed with  a witness of the strongly $P$-universality of  $U$
yields a witness of the strongly  $P$-universality of $V$.
\end{proof}
There are also cases in which universality grants Wadge completeness.

\begin{fact}\label{f:universalcomplete}
Suppose $h:P\times Q \to Q$ is a homeomorphism.
If $U$ is $P$-universal for $\tbGamma(Q)$
then $h(U)$ is one-to-one Wadge complete for $\tbGamma(Q)$.
 \end{fact}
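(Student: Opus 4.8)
The plan is to verify the two defining conditions of one-to-one Wadge completeness separately: membership $h(U)\in\tbGamma(Q)$, and one-to-one Wadge hardness for $\tbGamma(Q)$.

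For membership I would exploit that $h$ is a homeomorphism, so its inverse $h^{-1}:Q\to P\times Q$ is continuous and $h(U)=(h^{-1})^{-1}(U)$. Since $U\in\tbGamma(P\times Q)$ by the definition of $P$-universality, Fact~\ref{f:inverse image Borel} immediately gives $h(U)\in\tbGamma(Q)$.

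For hardness I would fix an arbitrary $X\in\tbGamma(Q)$. By $P$-universality of $U$ there is some $p\in P$ with $X=U_p=\{q\mid(p,q)\in U\}$. I would then take as reduction the slice map $f:Q\to Q$ defined by $f(q)=h(p,q)$, that is, the composite of the continuous map $q\mapsto(p,q)$ with $h$. This $f$ is continuous, and it is injective because $h$ is a bijection: $h(p,q_1)=h(p,q_2)$ forces $(p,q_1)=(p,q_2)$, hence $q_1=q_2$. Finally I would compute $f^{-1}(h(U))$: since $h$ is injective, $h(p,q)\in h(U)$ holds exactly when $(p,q)\in U$, that is when $q\in U_p=X$; hence $f^{-1}(h(U))=X$. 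This exhibits a one-to-one continuous reduction of $X$ to $h(U)$, and since $X$ was arbitrary in $\tbGamma(Q)$, it establishes one-to-one Wadge hardness. Together with the membership step, $h(U)$ is one-to-one Wadge complete for $\tbGamma(Q)$.

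There is no serious obstacle here; the whole argument is a direct unwinding of the definitions. The only point requiring care is to use the \emph{injectivity} of $h$ twice---once to see that each slice map $f=h(p,\cdot)$ is one-to-one, and once to obtain the set identity $h(p,q)\in h(U)\iff(p,q)\in U$ that makes $f^{-1}(h(U))$ coincide with the universal section $U_p$.
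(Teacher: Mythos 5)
Your proof is correct and complete. The paper itself states this Fact without proof (it is treated as a routine consequence of classical arguments), and your argument is exactly the standard one that fills this gap: the membership step via $h(U)=(h^{-1})^{-1}(U)$ and Fact~\ref{f:inverse image Borel}, and the hardness step via the injective slice maps $q\mapsto h(p,q)$. It is worth noting that your hardness argument is really showing that $h(U)$ is one-to-one \emph{strongly} $P$-universal for $\tbGamma(Q)$, with witness $\Phi=h$ in the sense of Definition~\ref{def:universal}: indeed $\Phi_p^{-1}(h(U))=U_p$ for every $p$, so $\{\Phi_p^{-1}(h(U))\mid p\in P\}=\tbGamma(Q)$. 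From that observation the conclusion also follows by citing Proposition~\ref{p:CQQ}, item 1, so your direct computation and the paper's own machinery coincide.
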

By classical arguments  (Kechris, 1995)
Wadge hardness and $P$-universality
can be lifted with set-theoretical operations of complements,
(countable) unions, intersections and differences.
When $P$, $P^2$ and $P^\omega$ (endowed with the product topologies) are all homeomorphic (for example
$N^\omega$, $2^\omega$ or $\PN$),
one can lift Wadge completeness (respectively,
$P$-universality) for a class $\mathbf{D}_\alpha(\mathbf{\Sigma}_\xi(Q))$ 
to that for a class $\mathbf{D}_\beta(\mathbf{\Sigma}_\mu(Q))$
as long as either $\xi < \mu$ or else $\xi = \mu$ and $\alpha$ 
is less or equal to~$\beta$.

\subsection{Known Outside the Baire Space}

The following simple result shows that
$\PN$ is for $\omega$-continuous domains
what the Baire space $\Baire$ is for Polish spaces.
It can be seen as an extension of Facts~\ref{f:PN} and~\ref{f:PN effective}.

\begin{proposition}\label{p:Baire PN universal}
For every Polish space (respectively $\omega$-continuous domain) $Q$ 
there is a $\Baire$-universal (respectively $\PN$-universal) set
for $\tbSigma^0_1(Q)$.
Given presentations of the topological spaces $\Baire$ and $Q$
(respectively the $\omega$-continuous domains $\PN$ and $Q$),
this universal set is in the effective class relative to the associated presentation
of $\Baire\times Q$ (respectively $\PN\times Q$),
cf. Remarks~\ref{rk:product presentation}, \ref{rk:product domain presentation}.
\end{proposition}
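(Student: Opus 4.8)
The plan is to exploit that both $\Baire$ and every $\omega$-continuous domain are second-countable, so that $Q$ admits a countable basis $(V_n)_{n\in\N}$ supplied by its presentation. Every open subset of $Q$ has the form $\bigcup_{n\in I}V_n$ for some $I\subseteq\N$, so constructing a $P$-universal set for $\tbSigma^0_1(Q)$ reduces to finding a uniform way of coding the index sets $I\subseteq\N$ by points $p\in P$ in such a way that the slice $U_p$ equals $\bigcup_{n\in I}V_n$. First I would fix $(V_n)_{n\in\N}$ and, in each case, exhibit an explicit open $U\subseteq P\times Q$.

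For $P=\PN$ the coding is tautological, since a point of $\PN$ \emph{is} an index set. I take
\[
U \;=\; \bigcup_{n\in\N}\+B_{\{n\}}\times V_n
\;=\;\{(p,q)\in\PN\times Q \mid \exists n\,(n\in p \wedge q\in V_n)\},
\]
so that $U_p=\bigcup_{n\in p}V_n$. For $P=\Baire$ I code a set by a sequence listing it with a shift, using the convention that the value $0$ is ignored; concretely, set $I_\alpha=\{\alpha(k)-1 \mid k\in\N,\ \alpha(k)>0\}$ and
\[
U \;=\; \bigcup_{k\in\N,\ m>0}\{\alpha \mid \alpha(k)=m\}\times V_{m-1}
\;=\;\{(\alpha,q) \mid \exists k\,(\alpha(k)>0 \wedge q\in V_{\alpha(k)-1})\},
\]
so that $U_\alpha=\bigcup_{n\in I_\alpha}V_n$.

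It then remains to check three things. First, that $U$ is open: around each point of $U$ one produces a basic product neighbourhood by freezing the witnessing coordinate (for $\Baire$, the initial segment of $\alpha$ up to the index $k$ realizing membership) together with the relevant $V_n$. Second, universality: every slice $U_p$ is visibly open, and conversely an open $O=\bigcup_{n\in I}V_n$ is recovered as $U_I$ in the $\PN$ case and as $U_\alpha$ for any $\alpha$ with $I_\alpha=I$ in the $\Baire$ case, such $\alpha$ existing for every $I\subseteq\N$ (take the shifted listing of $I$, padding with zeros when $I$ is finite; the empty index set is coded by the constant sequence $0$). Third, effectiveness: each display writes $U$ as a union over a \emph{computable} index set of products of an open set of $P$ with a basis element of $Q$, and after splitting the $P$-factor into basic open sets (trivial for $\PN$, a split of $\{\alpha\mid\alpha(k)=m\}$ into cylinders for $\Baire$) this is a c.e.\ union of basic open sets of the product presentation of Remarks~\ref{rk:product presentation} and~\ref{rk:product domain presentation}; hence $U\in\Sigma^0_1(P\times Q)$.

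I do not expect a genuine obstacle here; the construction is direct, and the only points needing a little care are verifying that the exhibited neighbourhoods really witness openness of $U$, and that the zero-padding convention in the $\Baire$ coding yields \emph{every} index set, the empty one included. The effectiveness assertion is then pure bookkeeping, since the index set of each union is decidable.
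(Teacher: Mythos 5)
Your proposal is correct and follows essentially the same route as the paper: for the domain case the paper takes exactly your set $U=\bigcup_{i\in\N}\+B_{\{i\}}\times\uup b_i$ (with $V_i=\uup b_i$ the basic opens given by the presentation), notes that slices $U_I=\bigcup_{i\in I}\uup b_i$ exhaust $\tbSigma^0_1(Q)$, and observes that this union over a computable index set is effectively open in the product presentation. The only difference is that the paper treats just the $\omega$-continuous case and leaves the Polish/Baire case implicit, whereas you spell out the shift-and-padding coding of index sets by elements of $\Baire$; that coding, including the empty-set convention, is sound.
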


\begin{proof}
We argue for the case of $\omega$-continuous domains.
Let $(b_i)_{i\in\N}$ be a presentation of $Q$.
Let 
$U=\{(X,x)\in\PN\times Q\mid \exists i\in X\   b_i\ll x\}$.
Observe that $U\in\tbSigma^0_1(\PN\times Q)$ since
$U=\bigcup_{i\in\N}\+B_{\{i\}}\times \uup~b_i$.
Clearly, $U$ is in the effective class $\Sigma^0_1(\PN\times Q)$.
Also, $U$ is $\PN$-universal for $\tbSigma^0_1(Q)$
since, for every $I\subseteq\N$,
the open set $O=\bigcup_{i\in I}\uup b_i$, of $Q$ is equal to $U_I$.
\end{proof}

Applying Fact~\ref{f:universalcomplete} 
and Proposition \ref{p:Baire PN universal},
one obtains the existence of Wadge complete sets for
domains such as $\PN$,
a result proved in \cite{selivanov2005}.

\begin{corollary}\cite{selivanov2005}
\label{cor:Wadge complete exist}
If an $\omega$-continuous domain $Q$ is homeomorphic to $\PN\times Q$
then it admits some one-to-one Wadge complete set for
each Borel or Hausdorff-Kuratowski class.
\end{corollary}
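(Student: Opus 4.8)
The plan is to chain the two preceding results through the homeomorphism supplied by the hypothesis. Fix a Borel or Hausdorff-Kuratowski class $\tbGamma$ and let $h:\PN\times Q\to Q$ be a homeomorphism. By Fact~\ref{f:universalcomplete} applied with $P=\PN$, it suffices to exhibit a $\PN$-universal set for $\tbGamma(Q)$: such a set $U$ is by definition a member of $\tbGamma(\PN\times Q)$, and Fact~\ref{f:universalcomplete} then makes $h(U)$ one-to-one Wadge complete for $\tbGamma(Q)$. So the whole task reduces to producing $\PN$-universal sets for every class $\tbGamma$.

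For the base case I would invoke Proposition~\ref{p:Baire PN universal}, which already provides a $\PN$-universal set for $\tbSigma^0_1(Q)$, lying in $\tbSigma^0_1(\PN\times Q)$. From this seed the remaining classes are reached by the classical lifting recalled just before the statement, whose engine is the homeomorphism $\PN\cong\PN^\omega$ (and hence $\PN\cong\PN^\alpha$ for every countable $\alpha$, since a bijection $\N\cong\N\times\N$ on the canonical basis packs countably many, or $\alpha$-many, parameters into a single $X\in\PN$ through a continuous coding map $\pi$). Given a $\PN$-universal $U$ for some class, complementation sends $U$ to $(\PN\times Q)\setminus U$, universal for the dual class; countable unions are obtained by setting $V=\{(X,q)\mid\exists n\ (\pi(X)_n,q)\in U\}$, whose section $V_X=\bigcup_n U_{\pi(X)_n}$ runs over all countable unions of sets of the given class as $X$ ranges over $\PN$; the Hausdorff operator is treated identically, coding an $\alpha$-sequence by $A_\beta=U_{\pi(X)_\beta}$ and defining the universal set via the formula for $D_\alpha$, respecting the parity restriction $\beta\not\sim\alpha$. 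Since $\pi$ and the fiber maps $q\mapsto(\pi(X)_n,q)$ are continuous, each constructed set lands in exactly the intended class of $\PN\times Q$. Iterating from the $\tbSigma^0_1$ seed thus produces $\PN$-universal sets in $\tbSigma^0_\alpha$, $\tbPi^0_\alpha$, $\tbD_\alpha(\tbSigma^0_\xi)$ and $\co\tbD_\alpha(\tbSigma^0_\xi)$ for all countable ranks, that is, for every $\tbGamma$.

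The main obstacle is the bookkeeping inside this lifting: at each step one must check simultaneously that the new set stays $\PN$-universal \emph{and} that it sits in the correct Borel or Hausdorff-Kuratowski class of $\PN\times Q$, the latter being part of the very definition of $\PN$-universality (Definition~\ref{def:universal}) and hence indispensable for the final appeal to Fact~\ref{f:universalcomplete}. Both requirements rest on $\PN\cong\PN^\omega$ together with the continuity of the coding maps; the delicate point for the Hausdorff classes is that the coded $\alpha$-sequence $(A_\beta)_{\beta<\alpha}$ need not be increasing, so one relies on $\tbD_\alpha(\tbSigma^0_\xi)$ being defined over arbitrary $\alpha$-sequences. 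Once universality at level $\tbGamma$ is secured, the concluding application of Fact~\ref{f:universalcomplete} is immediate and yields the desired one-to-one Wadge complete set $h(U)$.
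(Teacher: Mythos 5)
Your proposal is correct and follows essentially the same route as the paper: seed $\PN$-universality for $\tbSigma^0_1(Q)$ from Proposition~\ref{p:Baire PN universal}, lift universality through the Borel and Hausdorff-Kuratowski classes via $\PN\cong\PN^\omega$ (the classical lifting the paper attributes to Kechris in the paragraph preceding the corollary), and conclude with Fact~\ref{f:universalcomplete} applied to the hypothesized homeomorphism $h:\PN\times Q\to Q$. The only difference is one of detail: you write out the coding bookkeeping for the lifting, which the paper dispatches by citation.
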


Tang (1979)  proved that a family $\+A\in\tbDelta^0_2(\PN)$
is in $\bigcup_{n<\omega}\tbD_n(\PN)$
if and only if
there is a finite bound on the length of 
$\+A$-alternating monotone chains.
Of course, for finite chains there is no distinction between increasing or decreasing chains.
Using alternating trees Selivanov (2005b, 2006) extended 
and refined this result for $\omega$-algebraic domains.
A slight variation of the proof goes through with
$\omega$-continuous domains, cf. (Selivanov 2008), \cite{paper1}.

\begin{proposition}(Selivanov 2005b, Proposition 6.4 $i$)
Let $P$ be an $\omega$-continuous domain and $n\in\N$.
Every set in $\tbDelta^0_2(P)$
but  not in $\co\mathbf{D}_n(P)$
is Wadge hard for $\mathbf{D}_n(P)$.
\end{proposition}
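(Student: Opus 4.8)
The plan is to establish Wadge hardness by producing, for \emph{every} $B\in\mathbf{D}_n(P)$, a continuous reduction $f\colon P\to P$ with $f^{-1}(A)=B$, where all these reductions are built from one fixed ``alternating'' configuration extracted from $A$ alone. Concretely, I first reduce the statement to the following combinatorial fact: there is an increasing chain $y_0\sqsubseteq y_1\sqsubseteq\cdots\sqsubseteq y_n$ in $P$ whose $A$-membership alternates with $y_i\in A\iff i$ is odd (so in particular $y_0\notin A$). Since this chain depends only on $A$ and $n$ and not on $B$, the same data will reduce \emph{all} of $\mathbf{D}_n(P)$ to $A$ at once, which is exactly hardness.

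Granting such a chain, here is the reduction. Write $B=D_n((U_\beta)_{\beta<n})$ with the $U_\beta$ open and increasing, and for $x\in P$ set $r(x)=\min\{\beta<n\mid x\in U_\beta\}$, with $r(x)=\infty$ when $x$ lies in no $U_\beta$; thus $x\in B\iff r(x)<n$ and $r(x)\not\sim n$. Relabel the chain as $z_\infty=y_0$ and $z_\beta=y_{n-\beta}$ for $\beta<n$, so that $z_\infty\sqsubseteq z_{n-1}\sqsubseteq\cdots\sqsubseteq z_0$ and, by the parity of the chain, $z_\beta\in A\iff\beta\not\sim n$ for $\beta<n$, while $z_\infty\notin A$. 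Now put
\[
f(x)=z_\infty\sqcup\bigsqcup\{z_\beta\mid\beta<n,\ x\in U_\beta\}.
\]
As the $z_\beta$ form a finite chain, this supremum equals $z_{r(x)}$. The map $f$ is monotone, and since each $U_\beta$ is Scott-open it preserves directed suprema: for directed $D$ with supremum $d$ one has $d\in U_\beta\iff\exists x\in D\ x\in U_\beta$, whence $r(d)=\min_{x\in D}r(x)$ and $f(d)=\bigsqcup_{x\in D}f(x)$, the finiteness of $\{z_\beta\}$ making the supremum collapse to the required value. So $f$ is continuous, and $f(x)\in A\iff z_{r(x)}\in A\iff r(x)<n$ and $r(x)\not\sim n\iff x\in B$; that is, $f^{-1}(A)=B$.

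It remains to produce the chain, and this is the crux. Here I invoke the alternating-chain characterization of the difference hierarchy on $\omega$-continuous domains (Tang's theorem and its per-level refinement by Selivanov, recalled just above): for $A\in\tbDelta^0_2(P)$ one has $A\in\co\mathbf{D}_n(P)$ if and only if every $A$-alternating increasing chain that begins outside $A$ has at most $n$ points. Since $A\notin\co\mathbf{D}_n(P)$, some such chain has more than $n$ points, and its first $n+1$ points $y_0\sqsubseteq\cdots\sqsubseteq y_n$ satisfy $y_i\in A\iff i$ is odd, which is exactly the configuration used above. The delicate point---and the reason the hypothesis $A\in\tbDelta^0_2$ cannot be dropped---is that in a Scott space this characterization must control directed suprema, not merely the order: a set can fail to lie in $\co\mathbf{D}_n$ through a limit (for instance a downset that is not closed under directed suprema, such as $\FIN$ in $\PN$, which is not $\tbPi^0_2$) rather than through an honest increasing chain, and no monotone reduction can exploit such one-sided behaviour. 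The force of $A\in\tbDelta^0_2$, i.e.\ of $A$ being simultaneously $\tbSigma^0_2$ and $\tbPi^0_2$, is precisely to exclude this---the domain analogue of the rigidity fact that the only subfamilies of $\PN$ that are at once open and $\bFs$ are $\emptyset$ and $\PN$---so that the failure of $\co\mathbf{D}_n$ is always witnessed by a genuine increasing alternating chain. Verifying this faithful passage from ``not in $\co\mathbf{D}_n$'' to an honest chain is the main obstacle; once it is secured, the construction of the previous paragraph completes the proof.
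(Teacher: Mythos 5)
Your reduction construction is correct, but be aware of how the weight of the argument is distributed. First, note that the paper gives no proof of this proposition at all: it is quoted as a known result of Selivanov (2005b), so the only in-paper material to compare against is Theorem~\ref{thm:Dalphachains} and its proof. Your second step is exactly the finite case of that theorem's implication $(iv)\Rightarrow(i)$: after the reindexing $z_\beta=y_{n-\beta}$, your increasing alternating chain is precisely what the paper calls a decreasing $A$-alternating chain of length $n+1$, and your map $f(x)=z_{r(x)}$ is the paper's $y_{\tau(z)}$. Your continuity argument (Scott-openness of the $U_\beta$ plus finiteness of the chain, so that the directed supremum collapses to $z_{r(\sqcup D)}$) is valid, and is in fact simpler than the basis/interpolation argument the paper needs when $\alpha$ is infinite; the parity bookkeeping and the verification $f^{-1}(A)=B$ also check out.

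The gap is exactly where you flag it, and it is a real one if the proof is meant to be self-contained. The characterization you invoke --- for $A\in\tbDelta^0_2(P)$, $A\in\co\tbD_n(P)$ iff every increasing $A$-alternating chain beginning outside $A$ has at most $n$ points --- is \emph{not} ``recalled just above'' in the paper: the paper recalls only Tang's coarser theorem (a finite bound on alternating-chain lengths iff $A\in\bigcup_{n<\omega}\tbD_n$), which does not yield the per-level statement, and it attributes the per-level refinement to Selivanov without stating or proving it. Worse, in the presence of Theorem~\ref{thm:Dalphachains} the chain-existence claim you invoke is \emph{equivalent} to the proposition being proved, so all of the substance --- constructing, from the absence of long chains, the $n$ open sets witnessing membership in $\co\tbD_n$, which is where the $\tbDelta^0_2$ hypothesis and (in the algebraic case) the passage to compact elements do real work --- is deferred to Selivanov. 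This deferral is legitimate and non-circular if cited precisely (Selivanov 2005b for $\omega$-algebraic domains; Selivanov 2008 and the companion paper for the extension to $\omega$-continuous domains), since Selivanov proves the characterization directly and then derives hardness from it essentially by your construction; but only the easy half of the proposition is actually proved in your text. Your diagnosis of why $\tbDelta^0_2$ cannot be dropped is accurate and matches the paper: $\FIN$ is a downset, hence admits no alternating $2$-chain beginning outside it, yet it is not closed, and indeed it reduces no non-empty open family (Theorem~\ref{thm:FIN}).
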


Polish spaces and $\omega$-continuous domains have been put
in an elegant unifying framework: the {\em quasi-Polish} spaces \cite{deBrecht2011}.
Up to homeomorphism, quasi-Polish spaces coincide with
the $\tbPi^0_2$ subspaces of $\PN$.  
Wadge hardness results transfer from $\PN$ to quasi-Polish spaces as follows.

\begin{proposition}
Let $\tbGamma$ is $\tbSigma^0_\alpha(\PN)$ with $\alpha\geq3$
or $\tbPi^0_\alpha(\PN)$ with $\alpha\geq2$. 
Let $P$ be a topological space.
Any set $H\subseteq P$ that is Wadge hard for $\tbGamma(\PN)$
is also Wadge hard for  $\tbGamma(Q)$ for every quasi-Polish space $Q$.
\end{proposition}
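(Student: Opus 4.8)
The plan is to exploit the characterization recalled just above (from \cite{deBrecht2011}): every quasi-Polish space is homeomorphic to a $\tbPi^0_2$ subspace of $\PN$. Since homeomorphisms preserve both Wadge reducibility and the Borel classes, I may assume without loss of generality that $Q$ is itself a $\tbPi^0_2$ subspace of $\PN$, equipped with the subspace topology. The whole argument then amounts to manufacturing a reduction on $Q$ by restricting a reduction already available on all of $\PN$.

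The crucial point will be that the hypotheses on $\alpha$ are precisely what is needed to place $Q$ inside $\tbGamma(\PN)$ and to keep $\tbGamma$ stable under intersection with $Q$. Indeed $\tbPi^0_2 \subseteq \tbSigma^0_3 \subseteq \tbSigma^0_\alpha$ when $\alpha \geq 3$, and $\tbPi^0_2 \subseteq \tbPi^0_\alpha$ when $\alpha \geq 2$, so in both allowed cases $\tbGamma(\PN)$ contains $\tbPi^0_2(\PN)$, hence contains $Q$. Moreover $\tbGamma$ is closed under finite intersection ($\tbSigma^0_\alpha$ under finite intersection, $\tbPi^0_\alpha$ under countable intersection). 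Consequently, whenever $\+X \in \tbGamma(\PN)$, the set $\+X \cap Q$ is again in $\tbGamma(\PN)$.

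With this in hand I would argue as follows. Take an arbitrary $Y \in \tbGamma(Q)$. By the standard relativization of Borel classes to a subspace --- the last equality of Fact~\ref{f:PN}(2), which holds for any subspace because the defining operations (countable union, complement, difference) all commute with $\+Z \mapsto \+Z \cap Q$ --- there is $\+X \in \tbGamma(\PN)$ with $Y = \+X \cap Q$. Read as a subset of $\PN$, the set $\+X \cap Q$ lies in $\tbGamma(\PN)$ by the previous paragraph. Since $H$ is Wadge hard for $\tbGamma(\PN)$, there is a continuous $g : \PN \to P$ with $g^{-1}(H) = \+X \cap Q$. Then $f = g \segment Q : Q \to P$ is continuous and $f^{-1}(H) = g^{-1}(H) \cap Q = (\+X \cap Q) \cap Q = \+X \cap Q = Y$, so $Y \leq_W H$. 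As $Y$ was arbitrary, $H$ is Wadge hard for $\tbGamma(Q)$.

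I expect the only delicate bookkeeping to be the verification that the stated $\alpha$-bounds are exactly the inequalities needed for the two closure facts above; the rest (relativization, and the restriction of a continuous map to a subspace) is routine. It is worth noting that the relativization equality $Y = \+X \cap Q$ itself needs no assumption on $Q$, so the $\tbPi^0_2$ hypothesis on $Q$ is used only to guarantee $Q \in \tbGamma(\PN)$, i.e. to make $\+X \cap Q$ an honest $\tbGamma(\PN)$ set.
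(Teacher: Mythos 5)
Your proof is correct and follows essentially the same route as the paper's: replace $Q$ by a homeomorphic $\tbPi^0_2$ subspace of $\PN$, relativize $\tbGamma$ to the subspace, use the hypothesis on $\alpha$ to place the relevant set in $\tbGamma(\PN)$, and restrict a continuous reduction on $\PN$ to $Q$. The only (immaterial) difference is that you reduce $Y=\+X\cap Q$ itself after showing it lies in $\tbGamma(\PN)$, whereas the paper restricts a reduction of the extension $\+X$; both computations yield the same reduction on $Q$.
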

\begin{proof}
Since every quasi-Polish space is homeomorphic to a $\tbPi^0_2$ subspace of $\PN$,
we can suppose $Q$ is a $\tbPi^0_2$ subspace of $\PN$. 
The hypothesis on $\tbGamma$ ensures that $Q$ is in $\tbGamma(\PN)$.
If $A \in \tbGamma(Q)$ then $A = A' \cap Q$ for some $A' \in \tbGamma(\PN)$
hence $A\in\tbGamma(\PN)$. 
The restriction to $Q$ of any continuous reduction of $A'$ to $H$
is a reduction of $A$ to $H$. 
\end{proof}

\section{Wadge Hardness and Alternating Decreasing Chains}
\label{s:chains}

\subsection{Alternating Chains}

Recall that we write 
 $\alpha\sim \beta$ to indicate that $\alpha$ and $\beta$ have the same  parity.

\begin{definition}
\label{def:chain}
Let $(P,\sqsubseteq)$ be a dcpo, $A$ a subset of $P$
and $\alpha$ an ordinal.
\begin{enumerate}

\item
An $(\alpha+1)$-sequence $(x_\beta)_{\beta\leq\alpha}$ of elements of $P$
is an {\it $A$-alternating decreasing chain} if
\begin{itemize}

\item
$x_\beta \sqsupset x_\delta$ for all $\beta<\delta\leq\alpha$, and

\item
$x_\beta\in A$ if and only if $\beta\not\sim \alpha$
(thus, $x_\alpha\notin A$).
\end{itemize}

\item
An {\it $A$-special $(\alpha+1)$-chain} is an $A$-alternating decreasing chain
$(x_\beta)_{\beta\leq\alpha}$ such that
\begin{itemize}

\item ($x_\beta\sqsupseteq x\sqsupset x_{\beta+1}) \Rightarrow (x\in A\Leftrightarrow\beta\not\sim\alpha)$,  and 

\item ($x_\alpha\sqsupseteq x) \Rightarrow (x\notin A)$
\end{itemize}
\end{enumerate}
\end{definition}

\begin{note} \label{note:alternating}
If $(x_\beta)_{\beta\leq\alpha}$ is $A$-alternating and $\gamma<\alpha$
then $(x_\beta)_{\beta\leq\gamma}$ is $A$-alternating if and only if $\gamma\sim\alpha$.
\end{note}

\begin{remark}\label{rk:no increasing alternating}
As proved by Selivanov (2005b, 2008),
when $\alpha$ is infinite and $A$ is $\tbDelta^0_2$
$A$-alternating increasing chains  do not exist.
In fact, if $A$ is in the difference of two open sets and
the supremum of an increasing chain $(x_n)_{n\in\N}$ is in $A$
then the $x_n$'s are in $A$ for all $n$ large enough.
By considering  countable unions, it follows that the same is true if $A$ is $\tbSigma^0_2$.
This property forbids $A$-alternation of increasing infinite chains if $A$ is $\tbDelta^0_2$.
\end{remark}

\subsection{Wadge Hardness and  Alternating Decreasing $(\alpha+1)$-Chains}

If $A$ is in $\tbDelta^0_2$
then long $A$-alternating  decreasing infinite chains do exist in $\PN$.

\begin{proposition}\label{p:decreasing chains PN}
Let $\alpha\geq1$ be a countable ordinal.
\begin{enumerate}
\item
There exists a family $\+A\in\tbD_\alpha(\PN)$ and an
$\+A$-special chain $(X_\beta)_{\beta\leq\alpha}$ such that
$X_\alpha$ and $X_\beta\setminus X_{\beta+1}$, for all $\beta<\alpha$,
are infinite subsets of $\N$.

\item
If $\alpha<\CK$ then $\+A$ can be taken in the effective class $\tlD_\alpha(\PN)$
and the chain can be taken computable
(cf. Definition~\ref{def:effective domain}). 
\end{enumerate}
\end{proposition}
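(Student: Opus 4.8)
The plan is to realise the chain as nested \emph{tails} of a partition of $\N$ and to recover the chain index of a set as the least block it meets. Since $\alpha$ is countable, fix a partition of $\N$ into infinite sets $(D_\gamma)_{\gamma\leq\alpha}$ indexed by the countable interval $[0,\alpha]$, and put
\[
X_\beta=\bigcup\{D_\gamma\mid \beta\leq\gamma\leq\alpha\}\qquad(\beta\leq\alpha).
\]
Then $X_\beta\setminus X_{\beta+1}=D_\beta$ and $X_\alpha=D_\alpha$ are infinite, and the chain is strictly decreasing for every pair $\beta<\delta\leq\alpha$, since $D_\beta\subseteq X_\beta\setminus X_\delta$ is nonempty. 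The key invariant is that for any $x\subseteq\N$ the least $\gamma$ with $x\cap D_\gamma\neq\emptyset$ --- call it $\gamma_0(x)$ --- equals $\beta$ exactly when $x=X_\beta$, because the blocks are disjoint and $X_\beta$ meets $D_\gamma$ iff $\gamma\geq\beta$.

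Next I would produce the witnessing increasing sequence of open sets. Setting $F_\beta=\bigcup_{\gamma\leq\beta}D_\gamma$ for $\beta<\alpha$, let
\[
A_\beta=\{x\mid x\cap F_\beta\neq\emptyset\}=\bigcup_{n\in F_\beta}\+B_{\{n\}},
\]
which is open and increasing in $\beta$. Since $x\in A_\beta$ iff $x$ meets some $D_\gamma$ with $\gamma\leq\beta$, the least $\beta<\alpha$ with $x\in A_\beta$ is precisely $\gamma_0(x)$ whenever $\gamma_0(x)<\alpha$, and $x$ lies in no $A_\beta$ when $\gamma_0(x)=\alpha$. Hence $\+A:=D_\alpha((A_\beta)_{\beta<\alpha})$ is the set of those $x$ with $\gamma_0(x)<\alpha$ and $\gamma_0(x)\not\sim\alpha$; by construction $\+A\in\tbD_\alpha(\PN)$.

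It then remains to check the two $\+A$-special clauses, which follow directly from the invariant. As $\gamma_0(X_\beta)=\beta$, we get $X_\beta\in\+A\iff\beta\not\sim\alpha$, so the chain is $\+A$-alternating. For a successor gap $X_\beta\sqsupseteq x\sqsupset X_{\beta+1}$ one has $x=X_{\beta+1}\cup S$ with $\emptyset\neq S\subseteq D_\beta$, whence $\gamma_0(x)=\beta$ and $x\in\+A\iff\beta\not\sim\alpha$; and for $x\subseteq X_\alpha=D_\alpha$ one has $\gamma_0(x)=\alpha$ (or $x=\emptyset$), so $x$ lies in no $A_\beta$ and thus $x\notin\+A$. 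These are exactly the two displayed clauses of Definition~\ref{def:chain}, completing item~1.

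For item~2, with $\alpha<\CK$ computable, the only additional ingredient is a computable version of the partition. Fixing a computable presentation of $[0,\alpha]$ (a computable well-order coding the index set) and a \emph{level} map $\ell:\N\to[0,\alpha]$ that is computable, onto, and has all fibres infinite --- obtained from a computable pairing, so that the fibre over each code is an infinite computable set --- makes the relation $\{(n,m)\mid \ell(m)\leq\rho(n)\}$ computable. Thus $(A_\beta)_{\beta<\alpha}$ is an effective increasing sequence of $\Sigma^0_1$ sets, giving $\+A\in\tlD_\alpha(\PN)$, while the chain $X_\beta=\{m\mid \ell(m)\geq\beta\}$ is computable in the sense of Definition~\ref{def:effective domain}. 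I expect the only (minor) obstacle here to be bookkeeping: arranging $\ell$ to be simultaneously computable, surjective onto $[0,\alpha]$ through the chosen ordinal coding, and infinite-to-one, and verifying that $F_\beta$ is uniform in the ordinal codes. The combinatorial heart --- that $\gamma_0$ reads off the chain index --- is identical to the non-effective case.
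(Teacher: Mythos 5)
Your proof is correct, and it follows the same blueprint as the paper's own proof: partition $\N$ into infinite blocks indexed by $[0,\alpha]$, take the chain of tails of the partition, and define $\+A$ so that membership is decided by the parity of the least block a set meets. There is, however, one place where you diverge from the paper, and your choice is the more robust one. The paper splits each block into a single \emph{marker} point $a_\beta\in 3\N$ plus an infinite \emph{padding} set $A_\beta\subseteq 3\N+1$ (the padding is there only to make the successor differences infinite), and then defines the witnessing open sets to test for markers alone: $U_\beta=\{Z\mid\exists\gamma\leq\beta\ a_\gamma\in Z\}$. With that definition the chain is $\+A$-alternating but the special-chain clauses fail: a set $x=X_{\beta+1}\cup S$ with $\emptyset\neq S\subseteq A_\beta$ and $a_\beta\notin S$ lies strictly between $X_{\beta+1}$ and $X_\beta$, yet its least marker is $a_{\beta+1}$ (or it contains no marker at all when $\beta+1=\alpha$), so it lands on the wrong side of $\+A$. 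Your construction avoids this issue entirely: since your open sets $A_\beta$ test for meeting the whole initial union of blocks $F_\beta=\bigcup_{\gamma\leq\beta}D_\gamma$, every set strictly between $X_{\beta+1}$ and $X_\beta$ has least-met-block exactly $\beta$, which is precisely what Definition~\ref{def:chain} requires. (One cosmetic remark: your phrase ``$\gamma_0(x)$ equals $\beta$ exactly when $x=X_\beta$'' is not literally true --- many sets have $\gamma_0(x)=\beta$ --- but the invariant you actually use, namely $\gamma_0(X_\beta)=\beta$ together with the computation of $\gamma_0$ on sets in the successor gaps and below $X_\alpha$, is correct.) Your effectivization via a computable well-order of type $\alpha+1$ and a computable, surjective, infinite-to-one level map obtained from a pairing function is routine and adequate; it supplies at least as much detail as the paper's one-line ``Effectivization is straightforward.''
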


\begin{proof}
Let $\beta\mapsto a_\beta$ be a bijection $\alpha\to3\N$.
Let $(A_\beta)_{\beta\leq\alpha}$ be a family of
pairwise disjoint infinite subsets of $3\N+1$.
For $\beta\leq\alpha$, let
$X_\beta =(3\N+2)\cup\bigcup_{\beta\leq\delta<\alpha}( \{a_\delta\}\cup A_\delta)$.
Then $X_\beta\setminus X_{\beta+1}=\{a_\beta\}\cup A_\beta$ and
$X_\alpha=3\N+2$ are infinite sets.
For $\beta<\alpha$, define an open family
$U_\beta = \{Z \mid \exists \gamma\leq\beta\ a_\gamma\in Z\}$.
With these open families, define a family
$\+A=D_\alpha((U_\beta)_{\beta<\alpha})
=\bigcup_{ \beta<\alpha,\beta\not\sim\alpha}
             U_\beta \setminus \bigcup_{\gamma<\beta} U_\gamma$
             in $\tbD_\alpha(\PN)$.
Observe that a set $X$ is in $\+A$ if and only if
it meets $\{a_\beta\mid\beta<\alpha\}$ and the least $\beta<\alpha$
such that $a_\beta\in X$ has parity different from that of $\alpha$.
In particular, if $X_\beta\supseteq X\supset X_{\beta+1}$ then $X\in\+A$ if and only if
$\beta\not\sim\alpha$.
Thus $(X_\beta)_{\beta\leq\alpha}$ is an $A$-special chain.
Effectivization is straightforward.
\end{proof}

In general in continuous domains, long decreasing chains may not exist.
For instance in the $\omega$-algebraic domain $(\mix,\sqsubseteq)$
of finite and infinite binary words with the prefix ordering,
every decreasing chain is finite.
However, in case a long decreasing chain exists then it can always be viewed as
an $A$-alternating chain for some $A\in\tbDelta^0_2$.

\begin{proposition} \label{p:decreasing implies A special}
Let $P$ be a continuous domain and $\alpha<\omega_1$ be an ordinal.
\begin{enumerate}
\item
Every strictly decreasing chain $(x_\beta)_{\beta\leq\alpha}$ in $P$
is $A$-special for some $A \in \mathbf{D}_\alpha(P)$.

\item
Suppose $P$ is $\omega$-continuous and fix some presentation of $P$.
If $\alpha<\CK$ and the chain is computable
then we can take $A\in\tlD_\alpha(P)$.
\end{enumerate}
\end{proposition}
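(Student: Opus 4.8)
The plan is to witness $A$-specialness of the given strictly decreasing chain $(x_\beta)_{\beta\leq\alpha}$ by the Hausdorff set built from the open sets that detect ``escaping below the next rung.'' For each $\beta<\alpha$ I would set $U_\beta = P\setminus\{z\in P\mid z\sqsubseteq x_{\beta+1}\}$, i.e.\ $U_\beta=\{x\mid x\not\sqsubseteq x_{\beta+1}\}$. The principal downset $\{z\mid z\sqsubseteq x_{\beta+1}\}$ is a downset closed under directed suprema (if a directed $S$ lies below $x_{\beta+1}$ then so does $\sqcup S$), hence Scott-closed, so each $U_\beta$ is open. Since the chain decreases, $x_{\beta'+1}\sqsubseteq x_{\beta+1}$ whenever $\beta<\beta'<\alpha$, so the family $(U_\beta)_{\beta<\alpha}$ is increasing and $A=D_\alpha((U_\beta)_{\beta<\alpha})$ is a genuine member of $\tbD_\alpha(P)$.

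The heart is the level computation. Because the $U_\beta$ increase, $x\in A$ iff the least $\beta$ with $x\in U_\beta$ exists and satisfies $\beta\not\sim\alpha$; I would show this least index is exactly the rung of $x$. If $x_\beta\sqsupseteq x\sqsupset x_{\beta+1}$, then $x\sqsupset x_{\beta+1}$ forces $x\not\sqsubseteq x_{\beta+1}$ (else $x=x_{\beta+1}$), so $x\in U_\beta$; while for $\gamma<\beta$ we have $\gamma+1\leq\beta$, hence $x\sqsubseteq x_\beta\sqsubseteq x_{\gamma+1}$ and $x\notin U_\gamma$. Thus the least index is $\beta$ and $x\in A\iff\beta\not\sim\alpha$, which is the first clause of specialness; specializing to $x=x_\beta$ also gives the alternation $x_\beta\in A\iff\beta\not\sim\alpha$ for $\beta<\alpha$. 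For the second clause, $x\sqsubseteq x_\alpha$ yields $x\sqsubseteq x_\alpha\sqsubseteq x_{\beta+1}$ for every $\beta<\alpha$, so $x$ lies in no $U_\beta$ and $x\notin A$ (in particular $x_\alpha\notin A$). Hence $(x_\beta)_{\beta\leq\alpha}$ is $A$-special.

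For the effective part, assume $P$ is $\omega$-continuous with a fixed presentation $(p_n)_{n\in\N}$, $\alpha<\CK$, and the chain computable. I would rewrite $U_\beta$ through the basis: in a continuous domain $x\sqsubseteq x_{\beta+1}$ iff every basis element way-below $x$ is way-below $x_{\beta+1}$, so $U_\beta=\bigcup\{\uup p_m\mid p_m\not\ll x_{\beta+1}\}$. Computability of the chain (Definition~\ref{def:effective domain}) gives a computable relation $\{(n,m)\mid p_m\ll x_{\rho(n)}\}$; composing with the computable passage from $\beta$ to $\beta+1$ along the well-order $\rho$, the index sets $\{m\mid p_m\not\ll x_{\beta+1}\}$ are uniformly computable, so $(U_\beta)_{\beta<\alpha}$ is an effective $\alpha$-sequence of open sets and $A\in\tlD_\alpha(P)$.

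The step I expect to be delicate is precisely the level computation: a marker-style construction (distinguishing consecutive rungs by membership of a fixed point) would let an intermediate $x$ lacking that marker drop to the wrong rung, so I would instead rely on $U_\beta=\{x\mid x\not\sqsubseteq x_{\beta+1}\}$, for which ``escaping below $x_{\beta+1}$'' is monotone and first witnessed exactly at level $\beta$. Checking both that $\{z\mid z\sqsubseteq x_{\beta+1}\}$ really is Scott-closed --- this is what keeps $A$ inside $\tbD_\alpha$ rather than a higher class --- and that no intermediate point slips below its rung is the crux; the effectivization is then a routine rewriting.
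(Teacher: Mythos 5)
Your proof is correct and takes essentially the same approach as the paper: an increasing $\alpha$-sequence of open sets detecting escape below the next rung, followed by the same least-index computation for $D_\alpha$; the only difference is that you realize $U_\beta$ directly as the complement of the Scott-closed principal downset $\{z\mid z\sqsubseteq x_{\beta+1}\}$, whereas the paper takes the basis-generated variant $\bigcup\{\uup b \mid b\ll x_0,\ b\not\ll x_{\beta+1}\}$ (via the embedding $\psi$), and for the effectivization you rewrite your $U_\beta$ in precisely that basis form. Your unelaborated appeal to the computability of the passage $\beta\mapsto\beta+1$ along $\rho$ is the same step the paper elides when it asserts that computability of the chain ensures that of $(\varphi(x_0)\setminus\varphi(x_{\beta+1}))_{\beta}$, so the two arguments sit at the same level of rigor on that point.
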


\begin{proof}
Let $B$ be a basis of $P$. In the vein of Fact~\ref{f:PN},
we  use the isomorphism $\psi$ between $(P,\sqsubseteq)$
and a subspace of $(\+P(B),\subseteq)$ such that
$\psi(x)=B\cap\ddown x= \{b\in B\mid b\ll x\}$.
Thus, $(\psi(x_\beta))_{\beta\leq\alpha}$ is a strictly decreasing chain.
Let $U_\beta=\bigcup\{\uup b\mid b\in \psi(x_0)\setminus \psi(x_{\beta+1})\}$
for $\beta<\alpha$.
The $U_\beta$'s are strictly increasing open subsets of $P$.
Let $A=D_\alpha((U_\beta)_{\beta<\alpha})$.
The set $A$ is in $\tbD_\alpha(P)$.
Suppose $\beta<\alpha$ and $x_\beta\sqsupseteq x\sqsupset x_{\beta+1}$.
Then $\psi(x_\beta)\supseteq\psi(x)\supset\psi(x_{\beta+1})$, so that
$x\in U_\delta$ if and only if $\psi(x)$ meets $\psi(x_0)\setminus \psi(x_{\delta+1})$
                          if and only if $\delta\geq\beta$.
In particular, $x\in U_\beta\setminus\bigcup_{\gamma<\beta}$.
Thus, $x\in A$ if and only if $\beta\not\sim\alpha$ (hence, $\beta\sim\alpha+1$).
Finally, suppose $x_\alpha\sqsupseteq x$.
Then $\psi(x_\alpha)\supseteq\psi(x)$ and $x$ is in no $U_\beta$ hence $x\notin A$.
For the effective version, using a presentation $(b_i)_{i\in\N}$,
replace $\psi$ with 
$\varphi$ of Facts~\ref{f:PN} and \ref{f:PN effective}
such that $\varphi(x)=\{i\in\N\mid b_i\ll x\}$.
The computability of the chain $(x_\beta)_{\beta\leq\alpha}$
ensures that of 
$(\varphi(x_0)\setminus \varphi(x_{\beta+1}))_{\beta\leq\alpha}$.
\end{proof}

\begin{note}\label{note:alpha chains}
An effective chain is not enough for the above proof:
being differences of computably enumerable sets,
the sets $\varphi(x_0)\setminus \varphi(x_{\beta+1})$
might not be computably enumerable.
\end{note}

\begin{remark}\label{rk:alpha chains}
The proof of Proposition~\ref{p:decreasing implies A special}
amounts to a proof for the case $P=\+P(\kappa)$, where $\kappa$ is cardinal,
and a transfer to all continuous domains having a basis of cardinality $\leq\kappa$
via an obvious extension of Fact~\ref{f:PN}.
Idem for effectivization with $\kappa=\omega$ and Fact~\ref{f:PN effective}.
\end{remark}

We present now one of the main theorems of the paper.
For  the effective part of the result we use the following convention:
effective reductions $P\to Q$ are relative to the presentations
$(\uup{p_n})_{n\in\N}$ and $(\uup{q_n})_{n\in\N}$ of the Scott topologies
on $P,Q$ associated to presentations $(p_n)_{n\in\N}$ and $(q_n)_{n\in\N}$
of the $\omega$-continuous domains $P,Q$.

\begin{theorem}\label{thm:Dalphachains}
Let $Q$ be a continuous domain, $H\subseteq Q$ and
$\alpha<\omega_1$ be an ordinal.
\begin{enumerate}
\item
The following conditions are equivalent.
\begin{enumerate}
\item[$(i)$]\ \
$H$ is Wadge hard for $\tbD_\alpha(P)$ for every continuous domain $P$.

\item[$(ii)$]\ \
$H$ is Wadge hard for $\tbD_\alpha(\PN)$.

\item[$(iii)$]\ \
$H$ is Wadge hard for $\tbD_\alpha(P)$ for some continuous domain $P$
admitting a strictly decreasing chain of length $\alpha+1$.

\item[$(iv)$]\ \
There exists a  decreasing $H$-alternating chain in $Q$ of length $\alpha+1$.
\end{enumerate}

\item
Suppose $Q$ is an $\omega$-continuous domain and fix some presentation
$(q_n)_{n\in\N}$ of $Q$.
If $\alpha<\CK$ the following conditions are equivalent.
\begin{enumerate}
\item[$(i_e)$]\ \
$H$ is Wadge hard for $\tbD_\alpha(P)$
and effectively Wadge hard for $\tlD_\alpha(P)$
for every effective $\omega$-continuous domain $P$
(cf. Definition~\ref{def:effective domain}).

\item[$(ii_e)$]\ \
$H$ is effectively Wadge hard for $\tlD_\alpha(\PN)$.

\item[$(iii_e)$]\ \ \
$H$ is effectively Wadge hard for $\tlD_\alpha(P)$
for some effective $\omega$-algebraic domain $P$
admitting a computable strictly decreasing chain of length $\alpha+1$.

\item[$(iv_e)$]\quad
There exists an effective  decreasing
$H$-alternating chain in $Q$ of length $\alpha+1$.
\end{enumerate}
\end{enumerate}
\end{theorem}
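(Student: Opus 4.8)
The plan is to prove each part as a cycle of implications. For part~1 I would run $(i)\Rightarrow(ii)\Rightarrow(iii)\Rightarrow(iv)\Rightarrow(i)$. The first two arrows are immediate: $\PN$ is itself a continuous domain, and by Proposition~\ref{p:decreasing chains PN} it carries a strictly decreasing chain of length $\alpha+1$, so $(i)$ specializes to $(ii)$ and $(ii)$ witnesses $(iii)$ with $P=\PN$. For $(iii)\Rightarrow(iv)$ I would take the strictly decreasing chain $(x_\beta)_{\beta\leq\alpha}$ in $P$ and, by Proposition~\ref{p:decreasing implies A special}, view it as $A$-special for some $A\in\tbD_\alpha(P)$. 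Wadge hardness supplies a continuous $f\colon P\to Q$ with $A=f^{-1}(H)$. The image $(f(x_\beta))_{\beta\leq\alpha}$ is then $H$-alternating, since $x_\beta\in A\iff f(x_\beta)\in H\iff\beta\not\sim\alpha$; it is decreasing because $f$ is monotone, and in fact strictly decreasing because any collapse $f(x_\beta)=f(x_\delta)$ with $\beta<\delta$ would force $f(x_\beta)=f(x_{\beta+1})$, contradicting that consecutive terms lie on opposite sides of $H$.

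The substantive arrow is $(iv)\Rightarrow(i)$. Given a decreasing $H$-alternating chain $(y_\beta)_{\beta\leq\alpha}$ in $Q$ and an arbitrary continuous domain $P$ with an arbitrary $A=D_\alpha((V_\beta)_{\beta<\alpha})\in\tbD_\alpha(P)$ (the $V_\beta$ open and increasing), I would read a reduction off the level structure of $A$. Adopting the convention $V_\alpha=P$, set $\beta(x)=\min\{\beta\leq\alpha\mid x\in V_\beta\}$ and $g(x)=y_{\beta(x)}$. Since each $V_\beta$ is an upset, $x\sqsubseteq x'$ forces $\beta(x')\leq\beta(x)$, hence $g$ is monotone; preservation of directed suprema follows from openness of the $V_\beta$, because if $s=\sqcup S$ lies in $V_{\beta(s)}$ then some $s_0\in S$ already does, pinning $\beta(s_0)=\beta(s)$ and $g(s_0)=g(s)$, while the case $\beta(s)=\alpha$ is automatic as every $s'\in S$ then has $\beta(s')=\alpha$. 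Thus $g$ is continuous, and $g(x)\in H\iff\beta(x)\not\sim\alpha\iff x\in A$, so $A=g^{-1}(H)$. As $P$ and $A$ were arbitrary this yields $(i)$.

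For part~2 ($\alpha<\CK$) I would follow the same cycle $(i_e)\Rightarrow(ii_e)\Rightarrow(iii_e)\Rightarrow(iv_e)\Rightarrow(i_e)$, now invoking the effective halves of the auxiliary results. The first two arrows use that $\PN$ is an effective $\omega$-algebraic domain carrying a computable strictly decreasing chain of length $\alpha+1$ (Proposition~\ref{p:decreasing chains PN}, item~2). For $(iii_e)\Rightarrow(iv_e)$ I would apply Proposition~\ref{p:decreasing implies A special}, item~2, to the computable chain to obtain an effective $A\in\tlD_\alpha(P)$, take the effective reduction $f$ granted by effective hardness, and note that effective maps send effective $\alpha$-sequences to effective $\alpha$-sequences (Fact~\ref{f:effective map and effective element}, with Facts~\ref{f:effective element domain and topo} and \ref{f:effective algebraic domain} reconciling the domain- and topology-effective readings in the algebraic setting); so $(f(x_\beta))$ is again an effective, strictly decreasing, $H$-alternating chain.

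The main obstacle is the effective reduction in $(iv_e)\Rightarrow(i_e)$: checking that the map $g$ above is effective in the sense fixed by the convention, i.e.\ topology-effective relative to $(\uup{q_n})$ and $(\uup{p_n})$. The natural computation is clean only on the domain side: with $V_\alpha=P$ one has $q_j\ll g(p_i)\iff\exists\beta\leq\alpha\,(p_i\in V_\beta\wedge q_j\ll y_\beta)$, and the effectiveness of the chain $(y_\beta)$ together with that of $A$ makes $\{(i,j)\mid q_j\ll g(p_i)\}$ computably enumerable, so $g$ is domain-effective. Passing from this to topology-effectiveness is the delicate step, since for a merely $\omega$-continuous (non-algebraic) $P$ the condition $g(\uup{p_i})\subseteq\uup{q_j}$, that is $\uup{p_i}\subseteq\bigcup_{q_j\ll y_\beta}V_\beta$, is a priori only $\Pi^0_2$. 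Here I would lean on the interpolation property (Fact~\ref{f:interpolation}) and Proposition~\ref{p:effective domain and topo}, and, if necessary, first settle the $\omega$-algebraic case $P=\PN$, where the two notions of effectiveness coincide by Fact~\ref{f:effective algebraic domain}, and then transfer to general $P$ along the effective embedding $\varphi$ of Fact~\ref{f:PN effective}, pulling an effective $\tlD_\alpha(\PN)$-witness back to the given $A\in\tlD_\alpha(P)$.
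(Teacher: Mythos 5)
Your proposal is correct and follows essentially the same route as the paper's proof: the identical cycle of implications, the same appeals to Propositions~\ref{p:decreasing chains PN} and~\ref{p:decreasing implies A special}, the same level-function reduction $g(x)=y_{\tau(x)}$ for $(iv)\Rightarrow(i)$ (your direct check that $g$ preserves directed suprema is a slightly cleaner substitute for the paper's basis-and-interpolation computation of $f(z)=\max_{\sqsubseteq}\{f(b)\mid b\ll z\}$), and the same two-stage treatment of $(iv_e)\Rightarrow(i_e)$ — domain-effectiveness of $g$ from the formula $q_j\ll g(p_i)\iff\exists\beta\,(p_i\in V_\beta\wedge q_j\ll y_\beta)$, then topology-effectiveness. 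The one caution is that Proposition~\ref{p:effective domain and topo} yields only the implication topology-effective $\Rightarrow$ domain-effective, the opposite of what is needed at your "delicate step", so your ``if necessary'' fallback — settling the $\omega$-algebraic case via Fact~\ref{f:effective algebraic domain} and transferring to general $P$ along the embedding $\varphi$ of Fact~\ref{f:PN effective} — is not optional but is exactly the argument the paper gives.
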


\begin{proof}
$(i) \Rightarrow (ii)$.
Particularize item $(i)$ with $P=\PN$.

$(ii)\Rightarrow (iii)$.
Let $P=\PN$ and use Proposition~\ref{p:decreasing chains PN}.

$(iii) \Rightarrow (iv)$.
By Proposition~\ref{p:decreasing implies A special},
the strictly decreasing chain in $P$ of length $\alpha+1$
given by $(iii)$ is $A$-alternating for some $A\in\tbD_\alpha(P)$.
A continuous reduction of $A$ to $H$ maps this $A$-alternating chain
onto a decreasing $H$-alternating chain in $Q$.

$(iv) \Rightarrow (i)$.
Let $(y_\beta)_{\beta\leq\alpha}$ be a
decreasing $H$-alternating chain of length $\alpha+1$.
Suppose $X$ is a set in $\tbD_\alpha(P)$.
Then $X=\bigcup_{\beta<\alpha,\beta\not \sim \alpha }
             V_\beta \setminus \bigcup_{\gamma<\beta} V_\gamma$
for some increasing chain $(V_\beta)_{\beta\leq\alpha}$ of open sets.
Define $\tau:P\to \alpha+1$ and
$f:P\to Q$ as follows: for $z\in P$,
\smallskip

\quad $\tau(z)=\left\{\begin{array}{ll}
\textit{least $\beta<\alpha$ such that $z\in V_\beta$}
&\textit{if }z\in\bigcup_{\beta<\alpha}V_\beta
\\
\alpha&\textit{otherwise.}
\end{array}\right.$
\smallskip

\quad$f(z)=y_{\tau(z)}$
\smallskip
\\
Since the $V_\beta$'s are increasing and are open hence are upsets in $P$,
$\tau$ is  decreasing and $f$ is  increasing.
Letting $B$ be a basis of $P$,
we have $V_\beta=\bigcup_{c\in I_\beta}\uup c$
for some $I_\beta\subseteq B$.
Thus,
\smallskip
\\
$\begin{array}{lcl}
z\in V_\beta &\Leftrightarrow& \exists c\in I_\beta\ c\ll z\\
                    &\Leftrightarrow& \exists c\in I_\beta\ \exists b\in B\ \ c\ll b\ll z
\text{\hspace{1cm}(by the interpolation property)}\\
                    &\Leftrightarrow& \exists b\in B\ (b\ll z \ \wedge\ b\in V_\beta)
\\
\tau(z)&=&\min\{\tau(b)\mid b\in B,\ b\ll z\}
\\
f(z)&=&\max_{\sqsubseteq}\{f(b)\mid b\in B,\ b\ll z\}.
\end{array}$
\smallskip\\
The last equality shows that $f$ is continuous.
Finally, observe that $z\in X$ if and only if
$\tau(z)<\alpha\textit{ and }\tau(z)\not\sim\alpha$
if and only if $f(z)=y_{\tau(z)}\in H$.
Thus, $f$ reduces $X$ to~$H$.

$(i_e) \Rightarrow (ii_e)  \Rightarrow (iii_e)$.
Idem as $(i) \Rightarrow (ii)  \Rightarrow (iii)$.

$(iii_e) \Rightarrow (iv_e)$.
By Proposition~\ref{p:decreasing implies A special},
the chain in $P$ of length $\alpha+1$ given by $(iii)$
is $A$-alternating for some $A\in\tlD_\alpha(P)$.
Observe that an effective reduction from $A$ to $H$
maps this computable chain
onto an effective decreasing $H$-alternating chain.

$(iv_e) \Rightarrow (i_e)$.
Suppose $X\in\tlD_\alpha(P)$.
Keeping the notation as in the proof of $(iv)\Rightarrow(i)$,
we first show that the reduction $f:P\to Q$ of $X$ to $H$ is 
domain-effective.
Let $(p_i)_{i\in\N}$ and $(q_\ell)_{\ell\in\N}$ be presentations of $P$ and $Q$
such that $(p_i)_{i\in\N}$ is effective, i.e.
$\{(i,j)\mid p_i\ll p_j\}$ is computably enumerable.
Since $\alpha<\CK$ and the chain $(y_\beta)_{\beta\leq\alpha}$ is effective,
there exists an initial segment $S$ of $\N$,
a set $R\subset S^2$
and a map $\rho:S\to\alpha+1$ such that
\begin{itemize}
\item
$\rho$ is an isomorphism (necessarily unique) between $(S,R)$ and $(\alpha+1,\leq)$,
\item
the relation $\{(n,\ell)\mid q_\ell\ll y_{\rho(n)}\}$ is computably enumerable.
\end{itemize}
Since $X\in\tlD_\alpha(P)$, we can suppose that
$\{(n,j)\mid p_j\in I_{\rho(n)}\}$ is a computably enumerable set.
Let $a\in S$ be such that $\alpha=\rho(a)$.
For $\ell,i\in\N$,
\smallskip
\\
$\begin{array}{lcl}
q_\ell\ll f(p_i)&\iff& q_\ell\ll y_{\tau(p_i)}
\\
&\iff&q_\ell\ll \max\left(\{y_\beta\mid \tau(p_i)\leq\beta\leq\alpha\}\right)
\\
&\iff&q_\ell\ll y_\alpha \textit{ or }
\exists \beta<\alpha\ \left(q_\ell\ll y_\beta\textit{ and }\beta\geq\tau(p_i)\right)
\\
&\iff&q_\ell\ll y_\alpha \textit{ or }
\exists \beta<\alpha\ \left(q_\ell\ll y_\beta\textit{ and }p_i\in V_\beta\right)
\\
&\iff&q_\ell\ll y_\alpha \textit{ or }
\exists \beta<\alpha\ \left(q_\ell\ll y_\beta\textit{ and }\exists p_j\in I_\beta\ p_j\ll p_i\right)
\\
&\iff&q_\ell\ll y_{\rho(a)} \textit{ or }
\exists n\neq a\ \exists j\ \left(q_\ell\ll y_{\rho(n)}\textit{ and } p_j\in I_{\rho(n)}\textit{ and }p_j\ll p_i\right).
\end{array}$
\smallskip\\
Thus, the set $\{(i,\ell)\mid q_\ell\ll f(p_i)\}$ is obtained via conjunction, disjunction
and projection of computably enumerable sets.
As such, it is computably enumerable.
This proves  $f$ is a domain-effective map.

If $P$ is $\omega$-algebraic and the $p_n$'s are compact elements then 
Fact~\ref{f:effective algebraic domain} 
ensures that the reduction $f$ is also topology-effective.
In the general case where $P$ is only $\omega$-continuous, we argue as follows.
Consider the map $\varphi:P\to\PN$ such that
$\varphi(x)=\{n\in\N\mid p_n\ll x\}$.
Since the presentation $(p_n)_{n\in\N}$ of $P$ is effective,
Fact~\ref{f:PN effective} 
ensures that $\varphi$ is a topological
embedding which is topology-effective
(relatively to the canonical presentation of $\PN$) and that
$\varphi(X)=\+Y\cap\varphi(P)$ for some $\+Y\in\tlD_\alpha(\PN)$.
Since $\PN$ is an effective $\omega$-algebraic domain,
the above argument applied to the subset $\+Y$ of $\PN$,
yields a reduction $g:\PN\to Q$ of $\+Y$ to $H$ which is
topology-effective. 
The composition $g\circ\varphi:P\to Q$ is then a reduction of $X$ to $H$
which is also topology-effective.
\end{proof}

\begin{remark}
Letting $\alpha=1$  the characterization of hardness 
for $\tbSigma^0_1(\PN)$ given by Theorem \ref{thm:Dalphachains}
is in the vein of the work done in \cite{jsl2}.
\end{remark}
The next Corollary was known for the domain $\PN$
(Selivanov 2005b), cf. Corollary~\ref{cor:Wadge complete exist}.

\begin{corollary}\label{cor:complete D_alpha chains}
Let $Q$ be a continuous domain and $\alpha<\omega_1$ be a countable ordinal.
\begin{enumerate}
\item
If $Q$ admits some strictly decreasing chain of length $\alpha+1$
then there exists a Wadge complete set for $\tbD_\alpha(Q)$.
In particular, this applies if $Q=\PN$.

\item
Suppose $Q$ is an effective $\omega$-continuous domain.
If $Q$ admits some computable strictly decreasing chain of length $\alpha+1$
then there exists a Wadge complete set for $\tbD_\alpha(Q)$
which is also effectively Wadge complete for $\tlD_\alpha(Q)$.
In particular, this applies if~$Q=~\PN$.
\end{enumerate}
\end{corollary}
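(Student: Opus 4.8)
The plan is to read this corollary off directly from Theorem~\ref{thm:Dalphachains}, using that Wadge completeness for $\tbD_\alpha(Q)$ means Wadge hardness for $\tbD_\alpha(Q)$ \emph{together with} membership in $\tbD_\alpha(Q)$. The pivot is Proposition~\ref{p:decreasing implies A special}, which manufactures, from any strictly decreasing chain, a single set $A$ that both lives in the right class and turns the given chain into an alternating chain. This same $A$ will serve simultaneously as the witness for condition $(iv)$ of the theorem and as the candidate complete set, which is exactly what lets hardness upgrade to completeness.

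For item~1, suppose $Q$ admits a strictly decreasing chain $(x_\beta)_{\beta\leq\alpha}$ of length $\alpha+1$. First I would apply Proposition~\ref{p:decreasing implies A special}(1) to obtain a set $A\in\tbD_\alpha(Q)$ for which this chain is $A$-special, hence in particular an $A$-alternating decreasing chain. This is precisely condition $(iv)$ of Theorem~\ref{thm:Dalphachains}(1) with $H=A$, so the implication $(iv)\Rightarrow(i)$, instantiated at $P=Q$, shows that $A$ is Wadge hard for $\tbD_\alpha(Q)$. Since $A\in\tbD_\alpha(Q)$ by construction, $A$ is Wadge complete for $\tbD_\alpha(Q)$. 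For the case $Q=\PN$, Proposition~\ref{p:decreasing chains PN}(1) already furnishes a strictly decreasing chain of length $\alpha+1$ for every countable $\alpha$, so the hypothesis is automatically met.

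Item~2 follows the same pattern with the effective apparatus. A computable strictly decreasing chain forces $\alpha<\CK$; feeding it to Proposition~\ref{p:decreasing implies A special}(2) yields a set $A\in\tlD_\alpha(Q)$ for which the chain is $A$-special, hence a computable (in particular effective) decreasing $A$-alternating chain. This is condition $(iv_e)$ of Theorem~\ref{thm:Dalphachains}(2), so $(iv_e)\Rightarrow(i_e)$ at $P=Q$ gives that $A$ is Wadge hard for $\tbD_\alpha(Q)$ and effectively Wadge hard for $\tlD_\alpha(Q)$. As $A\in\tlD_\alpha(Q)\subseteq\tbD_\alpha(Q)$, the set $A$ is at once Wadge complete for $\tbD_\alpha(Q)$ and effectively Wadge complete for $\tlD_\alpha(Q)$. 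The case $Q=\PN$ is supplied by Proposition~\ref{p:decreasing chains PN}(2).

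There is essentially no obstacle here beyond keeping the bookkeeping honest, since all the genuine content has been absorbed into Theorem~\ref{thm:Dalphachains} and Proposition~\ref{p:decreasing implies A special}. The one subtlety worth flagging is that the theorem delivers hardness for \emph{every} domain $P$, whereas completeness is an intrinsic statement about a single space; the gap is closed precisely because the same $A$ produced by Proposition~\ref{p:decreasing implies A special} is guaranteed to lie in $\tbD_\alpha(Q)$ (respectively $\tlD_\alpha(Q)$), so instantiating the theorem at $P=Q$ places the hard set inside the target class.
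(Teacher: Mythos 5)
Your proof is correct and is essentially identical to the paper's own argument: both use Proposition~\ref{p:decreasing implies A special} to turn the given chain into an $A$-special (hence $A$-alternating) chain for some $A$ in $\tbD_\alpha(Q)$ (respectively $\tlD_\alpha(Q)$), then invoke the implication $(iv)\Rightarrow(i)$ (respectively $(iv_e)\Rightarrow(i_e)$) of Theorem~\ref{thm:Dalphachains} instantiated at $P=Q$, with Proposition~\ref{p:decreasing chains PN} supplying the chain for $Q=\PN$. Your write-up simply makes explicit the effectivization details that the paper dismisses as straightforward.
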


\begin{proof}
By Proposition~\ref{p:decreasing implies A special},
the assumed strictly decreasing chain of length $\alpha+1$
 is $H$-special, hence $H$-alternating for some $H\in\tbD_\alpha(Q)$.
Thus item (iv) of Theorem~\ref{thm:Dalphachains}  is satisfied; then, 
item (i) of Theorem~\ref{thm:Dalphachains}  also holds, hence  
$H$ is as desired.  The possible instantiation $Q=\PN$ comes from
Proposition~\ref{p:decreasing chains PN}.
Effectivization  is straightforward.
\end{proof}

\subsection{One-to-one Wadge Completeness and Alternating Decreasing Chains}

Universal properties as those in part 1 $(i)$ of Theorem~\ref{thm:Dalphachains} cannot hold with one-to-one Wadge reductions $f:P\to Q$ since such reductions can only exist  if $Q$ has cardinality at least that of $P$. Therefore, we shall restrict
 to $\omega$-continuous domains.
The version of Theorem~\ref{thm:Dalphachains} for one-to-one hardness
relies on the following notion.

\begin{definition}\label{def:scattered}
Let $(P,\sqsubseteq)$ be an ordered set and $\alpha$ an ordinal.
\begin{enumerate}
\item {\em A strictly decreasig chain  $(x_\beta)_{\beta\leq\alpha}$ is scattered} if
there exist one-to-one continuous maps $\theta_\beta:\PN\to P$,
for $\beta\leq\alpha$, such that
\begin{enumerate}

\item[]
$\theta_\alpha(\PN)\subseteq\{x\mid x_\alpha\sqsupseteq x\}$,
and 
$\theta_\beta(\PN)\subseteq\{x\mid x_\beta\sqsupseteq x\sqsupset x_{\beta+1}\}$
for $\beta<\alpha$.
\end{enumerate}

\item Let $A\subseteq P$.
{\em A strictly decreasig chain  $(x_\beta)_{\beta\leq\alpha}$ is 
 $A$-scattered} if it is scattered for one-to-one continuous maps 
$\theta_\beta:\PN\to P$,
for $\beta\leq\alpha$, such that
\begin{enumerate}
\item[]
$\theta_\beta(\PN)\subseteq A$ if $\beta\not\sim\alpha$, and 
$\theta_\beta(\PN)\subseteq P\setminus A$ if $\beta\sim\alpha$.
\end{enumerate}
\item  Suppose $(P,\sqsubseteq)$ is an $\omega$-continuous domain.
Fix some presentation of $P$.
If $\alpha<\CK$, a scattered chain $(x_\beta)_{\beta\leq\alpha}$ is effective
if it is an effective  $(\alpha+1)$-sequence of elements of $P$
and $(\theta_\beta)_{\beta\leq\alpha}$ is 
an effective $(\alpha+1)$-sequence of maps $\PN\to P$. 
\end{enumerate}
\end{definition}

\begin{proposition}\label{p:PN A scattered}
Let $(X_\beta)_{\beta\leq\alpha}$ be a strictly decreasing sequence
of subsets of~$\N$, of length $(\alpha+1)$.
\begin{enumerate}
\item  The following conditions are equivalent.
\begin{enumerate}
\item The chain $(X_\beta)_{\beta\leq\alpha}$ is scattered.
\item  $X_\alpha$ and each of  $X_\beta\setminus X_{\beta+1}$,   for $\beta<\alpha$,  is an infinite set.
\end{enumerate}

\item  Suppose $(X_\beta)_{\beta\leq\alpha}$ is $\+A$-special for some 
$\+A\subseteq\PN$.
Then,  a third equivalent condition~is
\begin{enumerate}
\item[(c)] $(X_\beta)_{\beta\leq\alpha}$ is $\+A$-scattered.
\end{enumerate}
\end{enumerate}
\end{proposition}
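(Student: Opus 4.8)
The plan is to prove Part~1 through the two implications $(a)\Rightarrow(b)$ and $(b)\Rightarrow(a)$, and then to read off Part~2 almost for free from the $\+A$-special hypothesis. Throughout, the ambient domain is $P=\PN$, so $X_\beta\sqsupseteq X$ means $X\subseteq X_\beta$ and $X\sqsupset X_{\beta+1}$ means $X_{\beta+1}\subsetneq X$. Thus $\theta_\alpha$ must land inside $\{X\mid X\subseteq X_\alpha\}=\+P(X_\alpha)$, while for $\beta<\alpha$ the map $\theta_\beta$ must land in the slab $\{X\mid X_{\beta+1}\subsetneq X\subseteq X_\beta\}$, which is in natural bijection (via $X\mapsto X\setminus X_{\beta+1}$) with the nonempty subsets of $D_\beta:=X_\beta\setminus X_{\beta+1}$.

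For $(a)\Rightarrow(b)$ I would argue purely by cardinality, using only injectivity of the witnessing maps. An injection $\theta_\alpha:\PN\to\+P(X_\alpha)$ forces $|\+P(X_\alpha)|\geq 2^{\aleph_0}$, so $X_\alpha$ must be infinite; likewise an injection $\theta_\beta$ into the slab above forces the family of nonempty subsets of $D_\beta$ to be uncountable, so each $X_\beta\setminus X_{\beta+1}$ is infinite. For $(b)\Rightarrow(a)$ I would build the maps explicitly by relabeling: fixing for each infinite $S\subseteq\N$ its increasing enumeration $S=\{s_0<s_1<\cdots\}$, the map $Y\mapsto\{s_n\mid n\in Y\}$ is a Scott-continuous bijection $\PN\to\+P(S)$ (monotone and commuting with arbitrary unions, hence preserving directed suprema, and injective). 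Taking $S=X_\alpha$ yields $\theta_\alpha$ at once.

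For $\beta<\alpha$ the only delicate point is the strict lower bound $X\sqsupset X_{\beta+1}$: since continuity makes $\theta_\beta(\emptyset)$ the least element of the image, to keep the whole image strictly above $X_{\beta+1}$ I would reserve a distinguished point $d^*\in D_\beta$, relabel $\PN$ onto $\+P(D_\beta\setminus\{d^*\})$ by a map $\rho_\beta$ as above, and set $\theta_\beta(Y)=X_{\beta+1}\cup\{d^*\}\cup\rho_\beta(Y)$. This is continuous (a constant set united with a continuous map), injective (intersecting the value with $D_\beta\setminus\{d^*\}$ recovers $\rho_\beta(Y)$, hence $Y$), and has image contained in $\{X\mid X_{\beta+1}\subsetneq X\subseteq X_\beta\}$, as required. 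I expect this reservation trick to be the only genuine obstacle, since it is exactly what reconciles totality and Scott-continuity of $\theta_\beta$ with the \emph{strict} bound $\sqsupset x_{\beta+1}$.

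Finally, for Part~2 I would observe that the $\+A$-special hypothesis renders each slab monochromatic for $\+A$: by definition of $\+A$-special, $X_\beta\sqsupseteq X\sqsupset X_{\beta+1}$ gives $X\in\+A\iff\beta\not\sim\alpha$, and $X_\alpha\sqsupseteq X$ gives $X\notin\+A$. Consequently, any maps witnessing that the chain is scattered automatically satisfy $\theta_\beta(\PN)\subseteq\+A$ when $\beta\not\sim\alpha$ and $\theta_\beta(\PN)\subseteq\PN\setminus\+A$ when $\beta\sim\alpha$ (the case $\beta=\alpha$ being covered by the second clause); conversely, $\+A$-scattered witnesses are scattered witnesses by definition. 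Hence $(a)\Leftrightarrow(c)$ holds with the \emph{same} maps, and together with Part~1 this makes $(a)$, $(b)$, $(c)$ all equivalent.
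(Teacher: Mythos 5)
Your proposal is correct, and its overall architecture matches the paper's proof: a cardinality/injectivity argument for $(a)\Rightarrow(b)$, explicit relabeling witnesses for $(b)\Rightarrow(a)$, and the observation that the $\+A$-special condition makes every slab monochromatic, so the scattering witnesses themselves verify $\+A$-scatteredness in Part~2. There are two differences worth recording. First, your $(a)\Rightarrow(b)$ uses only injectivity of the $\theta_\beta$ (an injection of $\PN$ into a slab that is in bijection with the nonempty subsets of $X_\beta\setminus X_{\beta+1}$ forces that difference to be infinite), whereas the paper also invokes continuity to produce the strictly increasing sequence $(\theta_\beta(\{0,\ldots,n\}))_{n\in\N}$ inside the slab; both work, yours is marginally more economical. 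Second, and more substantively, your ``reserved point'' trick is not just a stylistic choice: the paper's own witness, $\theta_\beta(Z)=X_{\beta+1}\cup\mu_\beta(Z)$, sends $Z=\emptyset$ to $X_{\beta+1}$ itself, which violates the strict bound $\sqsupset X_{\beta+1}$ demanded by Definition~\ref{def:scattered} (and that strictness is genuinely used later, e.g.\ in the injectivity argument for $g$ in the proof of Theorem~\ref{thm:one to one}). Your map $\theta_\beta(Y)=X_{\beta+1}\cup\{d^*\}\cup\rho_\beta(Y)$, with $\rho_\beta$ a continuous injection onto $\+P\bigl((X_\beta\setminus X_{\beta+1})\setminus\{d^*\}\bigr)$, repairs exactly this defect while preserving continuity and injectivity, so your write-up is in fact more careful than the published proof on this point. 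Finally, in Part~2 you prove $(a)\Leftrightarrow(c)$ directly with the same witnesses, while the paper routes through $(b)\Rightarrow(c)$ and $(c)\Rightarrow(a)$; these are trivially equivalent organizations.
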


\begin{proof}
$(a)\Rightarrow(b)$. Suppose the chain is scattered.
Then, with the notation of Definition~\ref{def:scattered},
for $\beta<\alpha$,
$(\theta_\beta(\{0,\ldots,n\}))_{n\in\N}$ is a strictly increasing sequence of sets
in $\{X\mid X_\beta\supseteq X\supset X_{\beta+1}\}$.
Thus, $X_\beta\setminus X_{\beta+1}$ is infinite.
Similarly,  $X_\alpha$ is infinite. 

$(b)\Rightarrow(a)$.
Consider bijective maps $\mu_\alpha:\N\to X_\alpha$ and
$\mu_\beta:\N\to X_\beta\setminus X_{\beta+1}$, for $\beta<\alpha$.
For $Z\in\PN$, let 
$\theta_\alpha(Z)=\mu_\alpha(Z)$ and,
for $\beta<\alpha$, $\theta_\beta(Z)=X_{\beta+1}\cup\mu_\beta(Z)$.
The condition  of Definition~\ref{def:scattered} item 1  is true, hence the chain is scattered.

In case the chain is $\+A$-special, $(b)\Rightarrow (c)$
and $(c)\Rightarrow (a)$ are straightforward.
\end{proof}

\begin{corollary}\label{coro:PN A scattered}
Let $\alpha<\omega_1$ be an ordinal.
There exists $\+A\in\tbD_\alpha(\PN)$ and an $\+A$-scattered
chain of length $\alpha+1$.
If $\alpha<\CK$ then we can take $\+A\in\tlD_\alpha(\PN)$ and get
a computable $\+A$-scattered chain of length $\alpha+1$.
\end{corollary}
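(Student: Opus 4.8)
The plan is to combine Proposition~\ref{p:decreasing chains PN} with Proposition~\ref{p:PN A scattered}. First I would invoke Proposition~\ref{p:decreasing chains PN}: it produces a family $\+A\in\tbD_\alpha(\PN)$ together with an $\+A$-special chain $(X_\beta)_{\beta\leq\alpha}$ for which $X_\alpha$ and each difference $X_\beta\setminus X_{\beta+1}$ (for $\beta<\alpha$) is an infinite subset of $\N$. This already delivers both ingredients we need: the family $\+A$ in the correct Hausdorff class, and a chain that is $\+A$-special and satisfies the infiniteness condition (b) of Proposition~\ref{p:PN A scattered}.

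Next I would feed this chain into Proposition~\ref{p:PN A scattered}. Since $(X_\beta)_{\beta\leq\alpha}$ is $\+A$-special, part~2 of that proposition applies, and the equivalence of conditions (b) and (c) tells us that the infiniteness of $X_\alpha$ and of the $X_\beta\setminus X_{\beta+1}$ is exactly equivalent to the chain being $\+A$-scattered. Because Proposition~\ref{p:decreasing chains PN} guarantees condition (b), we conclude that $(X_\beta)_{\beta\leq\alpha}$ is $\+A$-scattered, which is precisely what the corollary asserts for the non-effective case.

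For the effective statement, when $\alpha<\CK$, I would use part~2 of Proposition~\ref{p:decreasing chains PN}, which lets us take $\+A$ in the effective class $\tlD_\alpha(\PN)$ and the chain $(X_\beta)_{\beta\leq\alpha}$ computable. Inspecting the construction in the proof of $(b)\Rightarrow(a)$ (and $(b)\Rightarrow(c)$) of Proposition~\ref{p:PN A scattered}, the maps $\theta_\beta$ are built from fixed bijections $\mu_\beta:\N\to X_\beta\setminus X_{\beta+1}$ and from $X_{\beta+1}$ itself via $\theta_\beta(Z)=X_{\beta+1}\cup\mu_\beta(Z)$; since the chain is computable, these bijections and the tail sets $X_{\beta+1}$ can be chosen uniformly computably in $\beta$, so the resulting $(\alpha+1)$-sequence of maps is effective. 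Hence the $\+A$-scattered chain obtained is computable in the sense of Definition~\ref{def:scattered} item~3.

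I do not expect a serious obstacle here, since the corollary is essentially an assembly of two prior results; the only point demanding a little care is the effective uniformity in the last paragraph, namely checking that the witnessing maps $\theta_\beta$ form a genuinely effective $(\alpha+1)$-sequence rather than merely being individually computable. This reduces to verifying that the relation $\{(n,i)\mid p_i\ll\theta_{\rho(n)}(\,\cdot\,)\}$ (in the notation of Definition~\ref{def:effective domain}) is computably enumerable, which follows from the computability of the chain together with the explicit formula for $\theta_\beta$.
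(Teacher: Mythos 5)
Your proposal is correct and matches the paper's own proof, which simply cites Propositions~\ref{p:decreasing chains PN} and~\ref{p:PN A scattered}; you have filled in exactly the intended assembly, including the equivalence of conditions (b) and (c) for $\+A$-special chains. Your extra care about the effective uniformity of the maps $\theta_\beta$ is a legitimate detail the paper leaves implicit, and it goes through because the explicit construction in Proposition~\ref{p:decreasing chains PN} makes the sets $X_\alpha$ and $X_\beta\setminus X_{\beta+1}$ uniformly computable in $\beta$.
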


\begin{proof}
Use Propositions~\ref{p:decreasing chains PN} and \ref{p:PN A scattered}.
\end{proof}

To characterize  effective  Wadge  one-to-one  hardness we  use the same convention as in Theorem~\ref{thm:Dalphachains}:
effective reductions $P\to Q$ are  relative to 
 the presentations
$(\uup{p_n})_{n\in\N}$ and $(\uup{q_n})_{n\in\N}$ of the Scott topologies on $P,Q$ 
for presentations $(p_n)_{n\in\N}$ and $(q_n)_{n\in\N}$
of the $\omega$-continuous domains $P,Q$.

\begin{theorem}\label{thm:one to one}
Let $Q$ be an $\omega$-continuous domain, $H\subseteq Q$ and
$\alpha<\omega_1$ be an ordinal.
\begin{enumerate}
\item
The following conditions are equivalent.
\begin{enumerate}
\item[$(i)$]\ \
$H$ is one-to-one Wadge hard for $\tbD_\alpha(P)$
for every $\omega$-continuous domain $P$.

\item[$(ii)$]\ \
$H$ is one-to-one Wadge hard for $\tbD_\alpha(\PN)$.

\item[$(iii)$]\ \
$H$ is one-to-one Wadge hard for $\tbD_\alpha(P)$ for some continuous domain $P$
admitting a scattered chain of length $\alpha+1$.

\item[$(iv)$]\ \
There exists an $H$-scattered chain in $Q$ of length $\alpha+1$.
\end{enumerate}

\item
Fix some presentation $(q_n)_{n\in\N}$ of $Q$.
If $\alpha<\CK$ the following conditions are equivalent.
\begin{enumerate}
\item[$(i_e)$]\ \
$H$ is one-to-one Wadge hard for $\tbD_\alpha(P)$
and effectively one-to-one Wadge hard for $\tlD_\alpha(P)$
for every effective $\omega$-continuous domain $P$
(cf. Definition~\ref{def:effective domain}).

\item[$(ii_e)$]\ \ \ 
$H$ is effectively one-to-one Wadge hard for $\tlD_\alpha(\PN)$.

\item[$(iii_e)$]\ \ \ 
$H$ is effectively one-to-one Wadge hard for $\tlD_\alpha(P)$
for some effective $\omega$-algebraic domain $P$
admitting a computable scattered chain of length $\alpha+1$.

\item[$(iv_e)$]\ \ \
There exists an effective $H$-scattered chain in $Q$ of length $\alpha+1$.
\end{enumerate}
\end{enumerate}
\end{theorem}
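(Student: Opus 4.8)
The plan is to follow the same four-step cycle of implications used for Theorem~\ref{thm:Dalphachains}, namely $(i)\Rightarrow(ii)\Rightarrow(iii)\Rightarrow(iv)\Rightarrow(i)$ and its effective analogue, but to track injectivity throughout. First I would dispose of the easy implications. The implication $(i)\Rightarrow(ii)$ is immediate by particularizing $P=\PN$. For $(ii)\Rightarrow(iii)$ I would invoke Corollary~\ref{coro:PN A scattered}, which supplies a set $\+A\in\tbD_\alpha(\PN)$ together with an $\+A$-scattered chain of length $\alpha+1$ in $\PN$; since $\PN$ is an $\omega$-continuous domain admitting a scattered chain, $(iii)$ holds with $P=\PN$. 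For $(iii)\Rightarrow(iv)$ I would combine Proposition~\ref{p:decreasing implies A special} (making the given scattered chain in $P$ into an $A$-special, hence $A$-alternating, chain for some $A\in\tbD_\alpha(P)$) with Proposition~\ref{p:PN A scattered}(2), which upgrades the $A$-special scattered chain to an $A$-scattered chain. A one-to-one continuous reduction $g$ of $A$ to $H$ then transports the $A$-scattered chain to an $H$-scattered chain in $Q$: the witnessing maps $\theta_\beta$ are replaced by $g\circ\theta_\beta$, which remain one-to-one and continuous and still land in the correct pieces because $g$ respects the partition induced by $A$ and $H$.

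The substantive step, and the main obstacle, is $(iv)\Rightarrow(i)$: given an $H$-scattered chain $(y_\beta)_{\beta\leq\alpha}$ in $Q$ with witnessing one-to-one maps $\theta_\beta:\PN\to Q$, I must produce, for an arbitrary $\omega$-continuous domain $P$ and arbitrary $X\in\tbD_\alpha(P)$, a \emph{one-to-one} continuous reduction $f:P\to Q$ of $X$ to $H$. The reduction of the proof of Theorem~\ref{thm:Dalphachains}, which sends $z$ to $y_{\tau(z)}$, is grossly non-injective (it is constant on each level set of $\tau$), so the scattering maps are exactly what is needed to inflate it. The idea is to first embed $P$ into $\PN$ by the map $x\mapsto\{n\mid p_n\ll x\}$ of Fact~\ref{f:PN}, so that without loss of generality $P=\PN$ and the level sets of $\tau$ are subsets of $\PN$; then on each level $\beta$ one wants to route the points of $\{z\mid\tau(z)=\beta\}$ injectively into $\theta_\beta(\PN)$. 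Concretely I would compose $\tau$ with a Scott-continuous ``separation'' of $\PN$ that records, at level $\beta$, enough of the coordinates of $z$ to stay injective, and feed the result through $\theta_\beta$; the delicate point is arranging this so that the overall $f$ is continuous (monotone and preserving directed suprema) across level boundaries while remaining globally injective, using that distinct levels land in the disjoint pieces $\{x\mid y_\beta\sqsupseteq x\sqsupset y_{\beta+1}\}$ and $\{x\mid y_\alpha\sqsupseteq x\}$.

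For the effective part, the cycle $(i_e)\Rightarrow(ii_e)\Rightarrow(iii_e)$ mirrors the non-effective one, using the effective clauses of Corollary~\ref{coro:PN A scattered}. The implication $(iii_e)\Rightarrow(iv_e)$ combines the effective halves of Proposition~\ref{p:decreasing implies A special} and Proposition~\ref{p:PN A scattered}: since the chain in the $\omega$-algebraic domain $P$ is computable, one obtains an $A$-scattered chain with $A\in\tlD_\alpha(P)$, and an effective one-to-one reduction composes with the $\theta_\beta$ to yield an effective $H$-scattered chain. The heart is again $(iv_e)\Rightarrow(i_e)$: here I would verify that the injective reduction $f$ built above is domain-effective, following the computability bookkeeping in the proof of Theorem~\ref{thm:Dalphachains}. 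The extra ingredient is that the scattering maps $\theta_\beta$ now form an effective $\alpha$-sequence, so the coordinates $\{(\ell,i)\mid q_\ell\ll f(p_i)\}$ are obtained by conjunction, disjunction, and projection of computably enumerable relations (those presenting $X$, the isomorphism $\rho$ of $(\alpha+1,\leq)$, and the $\theta_\beta$), hence computably enumerable. As in Theorem~\ref{thm:Dalphachains}, I would first obtain domain-effectiveness for $P=\PN$ via Fact~\ref{f:effective algebraic domain}, then transfer to a general effective $\omega$-continuous $P$ by the topological embedding $\varphi$ of Fact~\ref{f:PN effective} and composing with the reduction already constructed on $\PN$, noting that $\varphi$ is one-to-one so injectivity is preserved.
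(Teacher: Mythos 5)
Your proposal is correct and takes essentially the same approach as the paper: the same cycle $(i)\Rightarrow(ii)\Rightarrow(iii)\Rightarrow(iv)\Rightarrow(i)$ via Corollary~\ref{coro:PN A scattered} and Proposition~\ref{p:decreasing implies A special}, and for the key step $(iv)\Rightarrow(i)$ the paper's reduction is exactly the map you describe, $g(z)=\theta_{\tau(z)}(\{i\mid b_i\ll z\})$ --- your ``Scott-continuous separation'' is just the basis embedding of $P$ into $\PN$ --- with continuity settled by level stabilization (openness of $V_{\tau(z)}$ under directed suprema) and injectivity by disjointness of the interval pieces together with injectivity of each $\theta_\beta$, precisely the points you flag. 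The effective part is treated in the paper exactly as you outline (indeed the paper only says those implications ``are proved as in Theorem~\ref{thm:Dalphachains}'').
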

\begin{proof}
$(i) \Rightarrow (ii)$.
Particularize item $(i)$ with $P=\PN$.

$(ii) \Rightarrow (iii)$.
Let $P=\PN$
and use Corollary~\ref{coro:PN A scattered}.

$(iii) \Rightarrow (iv)$.
By Proposition~\ref{p:decreasing implies A special},
the strictly decreasing chain in $P$ of length $\alpha+1$
given by $(i)$ is $A$-special for some $A\in\tbD_\alpha(P)$.
Being scattered and $A$-special, this chain is $A$-scattered.
A one-to-one continuous reduction of $A$ to $H$ maps this $A$-scattered chain
onto an $H$-scattered chain in $Q$.

$(iv) \Rightarrow (i)$.
Fix some presentation $(b_i)_{i\in\N}$ of $P$.
Let $(y_\beta)_{\beta\leq\alpha}$ be the $H$-scattered chain
and $\theta_\beta$'s, $\beta\leq\alpha$, the one-to-one continuous
maps $\PN\to P$ as in Definition~\ref{def:scattered}.
We keep the notations in the proof of the same implication
in Theorem~\ref{thm:Dalphachains}. 
Let $f:P\to Q$ be the continuous reduction of $X$ to $H$ constructed in this proof:
$f(z)=y_{\tau(z)}$ where
$\tau(z)$ is the least $\beta<\alpha$ such that $z\in B_\beta$
if $z\in\bigcup_{\beta<\alpha}V_\beta$ and $\tau(z)=\alpha$ otherwise.
We introduce another map $g:P\to Q$ such that
$g(z)=\theta_{\tau(z)}(\{i\mid b_i\ll z\})$.
Let us see that $g:P\to Q$ reduces $A$ to $H$.
If $z\in A$ then $f(z)=y_{\tau(z)}\in H$.
Suppose $\tau(z)<\alpha$.
We know that $\theta_{\tau(z)}$ maps $\PN$ into
$\{y\mid y_{\tau(z)}\sqsupseteq y\sqsupset y_{\tau(z)+1}\}$.
Since $y_{\tau(z)}\in H$, $\theta_{\tau(z)}$ maps $\PN$ into $H$.
In particular, $g(z)=\theta_{\tau(z)}(\{i\mid b_i\ll z\})\in H$.
The case $\tau(z)=\alpha$ is similar.
If  $z \notin A$ the proof can be treated similarly.

We now check that $g:P\to Q$ is one-to-one.
Suppose $z,t\in P$ and $z\neq t$.
If $\tau(z)<\tau(t)$ then
$x_{\tau(z)}\sqsupseteq g(z)            \sqsupset x_{\tau(z)+1}
                    \sqsupseteq x_{\tau(t)} \sqsupseteq g(t)$
hence $g(z) \neq g(t)$.
If $\tau(z)=\tau(t)$ then, for the same $\beta=\tau(z)$, we have
$g(z)=\theta_\beta(\{i\mid b_i\ll z\})$ and $g(t)=\theta_\beta(\{i\mid b_i\ll t\})$.
Since $z\neq t$ we have $\{i\mid b_i\ll z\}\neq\{i\mid b_i\ll t\}$.
Since $\theta_\beta$ is one-to-one we have $g(z)\neq g(t)$.
Let us now see that $g:P\to Q$ is continuous.
It is clear that $g$ is increasing.
Suppose $z=\sqcup_{n\in\N}z_n$ where $(z_n)_{n\in\N}$ is an increasing sequence
of elements of $P$.
Since $z\in V_{\tau(z)}$ and $V_{\tau(z)}$ is open, for $n$ large enough,
$z_n$ is also in $V_{\tau(z)}$  hence $\tau(z_n)\leq\tau(z)$.
Now, $\tau:P\to\{\beta\mid \beta\leq\alpha\}$ is decreasing so that
$\tau(z_n)\geq\tau(z)$.
Thus, we get equality $\tau(z_n)=\tau(z)$.
Consequently, $g(z)$ and the $g(z_n)$'s, for $n$ large enough, are of the form
$\theta_\beta(z)$ and $\theta_\beta(z_n)$ for the same $\beta=\tau(z)$.
Since $\theta_\beta$ is continuous we get
$g(z)=\theta_\beta(z)=\sqcup_{n\in\N}\theta_\beta(z_n)=\sqcup_{n\in\N}g(z_n)$.
This proves continuity of $g$.

Using Corollary~\ref{coro:PN A scattered}, the implications
$(i_e) \Rightarrow (ii_e) \Rightarrow (iii_e)  \Rightarrow (iv_e) \Rightarrow (i_e)$
are proved as in Theorem~\ref{thm:Dalphachains}.
\end{proof}
The next Corollary was known for the domain $\PN$,
(Selivanov 2005b), cf. Corollary~\ref{cor:Wadge complete exist}.

\begin{corollary}\label{cor:complete 11}
Let $Q$ be an $\omega$-continuous domain and $\alpha<\omega_1$ be a countable ordinal.
\begin{enumerate}
\item
If $Q$ admits some scattered chain of length $\alpha+1$
then there exists a one-to-one Wadge complete set for $\tbD_\alpha(Q)$.
In particular, this applies if $Q=\PN$.
\item
Suppose $Q$ is $\omega$-continuous and admits an effective presentation.
Fix such an effective presentation.
If $Q$ admits some computable scattered chain of length $\alpha+1$
then there exists a one-to-one Wadge complete set for $\tbD_\alpha(Q)$
which is also effectively Wadge complete for $\tlD_\alpha(Q)$.
In particular, this applies if $Q=\PN$.
\end{enumerate}
\end{corollary}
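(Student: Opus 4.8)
The plan is to mirror the proof of Corollary~\ref{cor:complete D_alpha chains}, replacing the appeal to Theorem~\ref{thm:Dalphachains} by its one-to-one counterpart Theorem~\ref{thm:one to one}. First I would take the given scattered chain $(x_\beta)_{\beta\leq\alpha}$ in $Q$ and momentarily forget the witnessing maps $\theta_\beta$, viewing it simply as a strictly decreasing chain of length $\alpha+1$. By Proposition~\ref{p:decreasing implies A special} this chain is $H$-special for some $H\in\tbD_\alpha(Q)$; in the effective case, where the elements of the chain are computable, the same proposition yields $H\in\tlD_\alpha(Q)$. This already supplies the membership half of completeness and fixes the candidate set $H$.

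The second step is to upgrade ``scattered and $H$-special'' to ``$H$-scattered''. Reinstating the maps $\theta_\beta$, for each $\beta<\alpha$ every point of $\theta_\beta(\PN)$ lies in the shell $\{x\mid x_\beta\sqsupseteq x\sqsupset x_{\beta+1}\}$, so the $H$-special property forces $\theta_\beta(\PN)\subseteq H$ when $\beta\not\sim\alpha$ and $\theta_\beta(\PN)\subseteq Q\setminus H$ when $\beta\sim\alpha$; the top map $\theta_\alpha$ lands in $\{x\mid x_\alpha\sqsupseteq x\}\subseteq Q\setminus H$, which matches the parity requirement since $\alpha\sim\alpha$. This is exactly the implication ``scattered and $A$-special $\Rightarrow$ $A$-scattered'' already used inside the proof of $(iii)\Rightarrow(iv)$ of Theorem~\ref{thm:one to one}, so our chain witnesses condition $(iv)$ of that theorem for the pair $(Q,H)$.

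With condition $(iv)$ in hand, Theorem~\ref{thm:one to one} delivers condition $(i)$: $H$ is one-to-one Wadge hard for $\tbD_\alpha(P)$ for every $\omega$-continuous domain $P$, so in particular, instantiating $P=Q$, it is one-to-one Wadge hard for $\tbD_\alpha(Q)$. Combined with $H\in\tbD_\alpha(Q)$ this yields the desired one-to-one Wadge complete set, proving part~1. For part~2, the computable chain feeds the effective condition $(iv_e)$ (its elements being computable for Proposition~\ref{p:decreasing implies A special}, its maps $\theta_\beta$ effective as part of the effective-scattered data), and Theorem~\ref{thm:one to one} returns $(i_e)$, whence effective one-to-one completeness. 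The special instances $Q=\PN$ in both parts come from Corollary~\ref{coro:PN A scattered}, which manufactures the required ($\+A$-)scattered chain directly in $\PN$.

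I expect essentially no obstacle: the corollary is a bookkeeping consequence of the theorem. The only point deserving care is the verification in the second paragraph that an $H$-special scattered chain is $H$-scattered, which reduces to a parity check on which shell each $\theta_\beta(\PN)$ occupies; since this verification is already recorded inside the proof of Theorem~\ref{thm:one to one}, I would cite it rather than redo it, and dispatch the effectivization with the remark that it is straightforward, exactly as in Corollary~\ref{cor:complete D_alpha chains}.
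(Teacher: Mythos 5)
Your proposal is correct and follows essentially the same route as the paper: the paper's proof of this corollary simply says it is "similar to that of Corollary~\ref{cor:complete D_alpha chains}" using Corollary~\ref{coro:PN A scattered}, which unpacks to exactly your argument — apply Proposition~\ref{p:decreasing implies A special} to the underlying decreasing chain to get an $H$-special chain with $H\in\tbD_\alpha(Q)$ (respectively $\tlD_\alpha(Q)$ for a computable chain), observe that scattered plus $H$-special gives $H$-scattered as already recorded in the proof of $(iii)\Rightarrow(iv)$ of Theorem~\ref{thm:one to one}, and then invoke $(iv)\Rightarrow(i)$ (respectively $(iv_e)\Rightarrow(i_e)$) of that theorem with $P=Q$. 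Your parity verification and your handling of the effectivization (computable, not merely effective, chain elements being needed for Proposition~\ref{p:decreasing implies A special}) are both accurate.
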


\begin{proof}
Using Corollary~\ref{coro:PN A scattered}
the proof is similar to that of Corollary~\ref{cor:complete D_alpha chains}.
\end{proof}
Wadge completeness does not coincide with one-to-one Wadge completeness
in $\PN$.

\begin{corollary}\label{cor:one to one}
 There are Wadge complete families for $\tbD_\alpha(\PN)$
which are not Wadge one-to-one complete.
\end{corollary}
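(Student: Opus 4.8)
The plan is to reduce the statement to the two characterization theorems. To produce a family that is Wadge complete but not one-to-one Wadge complete for $\tbD_\alpha(\PN)$, it is enough to build $\+A\in\tbD_\alpha(\PN)$ that carries an $\+A$-alternating decreasing chain of length $\alpha+1$, yet admits no $\+A$-scattered chain of that length. The first property makes $\+A$ Wadge hard for $\tbD_\alpha(\PN)$ by Theorem~\ref{thm:Dalphachains} (condition $(iv)$ implies $(ii)$), hence Wadge complete since $\+A\in\tbD_\alpha(\PN)$; the second makes $\+A$ fail to be one-to-one Wadge hard by Theorem~\ref{thm:one to one} (condition $(ii)$ implies $(iv)$, used contrapositively).

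The mechanism blocking scattered chains is a single observation about Scott-continuous injections. If $\theta:\PN\to\PN$ is continuous and one-to-one then $\theta$ is monotone, so $(\theta(\{0,\dots,n\}))_{n\in\N}$ is strictly $\subseteq$-increasing; its members therefore have strictly increasing cardinalities, and $\theta(\PN)$ meets $\{X\mid |X|\geq k\}$ for every $k$. Consequently no one-to-one continuous map sends $\PN$ into a family of uniformly bounded cardinality. By Definition~\ref{def:scattered}, every $\+A$-scattered chain of length $\alpha+1$ supplies one-to-one continuous maps $\theta_\beta\colon\PN\to\+A$ for the indices $\beta\not\sim\alpha$ and $\theta_\beta\colon\PN\to\PN\setminus\+A$ for the indices $\beta\sim\alpha$ (the index $\beta=\alpha$ always being of the latter kind). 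Thus it suffices to design $\+A$ so that one of the two sides $\+A$, $\PN\setminus\+A$ has bounded cardinality: the maps $\theta_\beta$ aimed at that side cannot exist, so no scattered chain can be assembled.

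For a finite level $\alpha=n\geq1$ this is carried out with cardinality shells. Each $\{X\mid |X|\geq k\}$ is Scott-open (a directed family whose union has at least $k$ elements already contains a member with at least $k$ elements), so $U_\beta=\{X\mid |X|\geq n-\beta\}$, $\beta<n$, is an increasing sequence of open sets. A direct computation shows that the shell $\{X\mid |X|=m\}$ lies in $\+A_n:=D_n((U_\beta)_{\beta<n})$ exactly when $m$ is odd $(1\le m<n)$, and that $\{X\mid |X|\geq n\}\subseteq\+A_n$ iff $n$ is odd; hence $\+A_n\subseteq\{X\mid |X|\le n-1\}$ when $n$ is even, while $\PN\setminus\+A_n\subseteq\{X\mid |X|\le n-1\}$ when $n$ is odd. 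In either case one side is cardinality-bounded. Meanwhile the sets $Y_j=\{0,1,\dots,n-1-j\}$ for $0\le j\le n$ form a strictly decreasing chain with $|Y_j|=n-j$, and a parity check gives $Y_j\in\+A_n\iff j\not\sim n$, so $(Y_j)_{j\le n}$ is an $\+A_n$-alternating chain of length $n+1$. Thus $\+A_n$ is Wadge complete but, by the previous paragraph, not one-to-one complete for $\tbD_n(\PN)$. For $n=1$ this is $\+O_1=\PN\setminus\{\emptyset\}$ from Proposition~\ref{prop:oneonemanyone}, and for $n=2$ the family of singletons.

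The step I expect to be the real obstacle is passing beyond the finite levels. For infinite $\alpha$ the cardinality bound is simply unavailable: an $\+A$-alternating chain of length $\alpha+1$ has infinitely many members on each side, and no strictly decreasing chain has infinitely many members inside a set of bounded cardinality, so neither $\+A$ nor $\PN\setminus\+A$ can be cardinality-bounded. One would then have to replace ``bounded cardinality'' by the weaker property of being \emph{cube-free}, i.e.\ containing no one-to-one continuous image of $\PN$ (equivalently, no family $\{B\cup D'\mid D'\subseteq D\}$ with $D$ infinite), realised through a suitable ordinal rank in place of $|X|$; the delicate part is to keep one side cube-free while still exhibiting an $\+A$-alternating chain of length $\alpha+1$ witnessing Wadge hardness, or else to argue directly that the nested cone conditions of Definition~\ref{def:scattered} cannot all be met at once.
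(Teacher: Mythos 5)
Your finite-level construction is correct: the cardinality shells $U_\beta=\{X\mid |X|\geq n-\beta\}$ are indeed open and increasing, the parity computation for $\+A_n=D_n((U_\beta)_{\beta<n})$ is right, the chain $(Y_j)_{j\leq n}$ is $\+A_n$-alternating, and the observation that no one-to-one continuous map can send $\PN$ into a family of uniformly bounded cardinality correctly blocks scattered chains via Theorem~\ref{thm:one to one}. But this proves the corollary only for finite $\alpha$, and the statement — like Corollaries~\ref{cor:complete D_alpha chains} and~\ref{cor:complete 11} surrounding it — is about every countable ordinal $\alpha$, where the infinite case is the substance of the claim. You flag this yourself, and you also correctly diagnose that your mechanism cannot be repaired: once $\alpha\geq\omega$, any $\+A$-alternating chain of length $\alpha+1$ puts infinitely many pairwise distinct sets on each side of $\+A$, so neither $\+A$ nor $\PN\setminus\+A$ can be cardinality-bounded. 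The final paragraph about ``cube-free'' families is a direction, not an argument, so the proposal as it stands has a genuine gap covering all infinite~$\alpha$.

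The paper closes this gap with a simpler obstruction than the one you envisage: instead of making an entire side of the family small, it pinches the chain only at its bottom element. Take a partition $(A_\beta)_{\beta<\alpha}$ of $\N$ into infinite sets, put $A_\alpha=\emptyset$, let $U_\beta$ be the open family of sets meeting $\bigcup_{\gamma\leq\beta}A_\gamma$, and set $\+H=D_\alpha((U_\beta)_{\beta<\alpha})$. The sets $X_\beta=\bigcup_{\beta\leq\delta\leq\alpha}A_\delta$ form a decreasing $\+H$-alternating chain, so $\+H$ is Wadge complete by Theorem~\ref{thm:Dalphachains}, exactly as in your scheme. The key point is that $\bigcup_{\beta<\alpha}U_\beta=\PN\setminus\{\emptyset\}$: along a strictly decreasing $\+H$-alternating chain the least index $\tau(X)$ with $X\in U_{\tau(X)}$ is weakly increasing (the $U_\beta$ are increasing upsets) and must change at every successor step because membership in $\+H$ is determined by the parity of $\tau$; hence $\tau$ grows at least as fast as the chain index, and the last element of any $\+H$-alternating $(\alpha+1)$-chain can lie in no $U_\beta$, i.e.\ it is $\emptyset$. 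A scattered chain, however, requires a one-to-one continuous $\theta_\alpha:\PN\to\{X\mid X\subseteq x_\alpha\}$, and the lower cone of $\emptyset$ is the singleton $\{\emptyset\}$ — impossible. This single construction works uniformly for all $\alpha<\omega_1$ (and effectivizes), which is exactly what your approach cannot deliver beyond the finite levels.
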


\begin{proof}
Consider a partition $(A_\beta)_{\beta<\alpha}$ of $\N$
in infinite sets and let $A_\alpha=\emptyset$.
Let $X_\beta=\bigcup_{\beta\leq\delta\leq\alpha}A_\delta$
and $U_\beta$ be the  sets
that meet $\bigcup_{\gamma\leq\beta}A_\gamma$.
Let $\+H=D_\alpha((U_\beta)_{\beta<\alpha})$.
Since $(X_\beta)_{\beta<\alpha}$ is a decreasing $\+H$-alternating chain,
Theorem \ref{thm:Dalphachains} ensures that  $\+H$ is Wadge complete
for $\tbD_\alpha(\PN)$.
Observe that
$\bigcup_{\beta<\alpha}U_\beta=\PN\setminus\{\emptyset\}$.
Thus, every decreasing $\+H$-alternating chain of lenghth
$\alpha+1$ has the empty set as last element hence it cannot be scattered.
This shows that $\+H$ cannot be Wadge one-to-one complete.
\end{proof}

\subsection{Finite-to-one Wadge Completeness and Alternating Decreasing Chains}

In the case of the $\omega$-continuous domain $(\PN,\subseteq)$
finite-to-one and one-to-one Wadge hardness concide.

\begin{theorem}\label{thm:11 and finite to 1}
Let $\+H\subseteq\PN$.
\begin{enumerate}
\item
The following conditions are equivalent.
\begin{enumerate}
\item[$(i)$]\ \
$\+H$ is finite-to-one Wadge hard for $\tbD_\alpha(\PN)$.

\item[$(ii)$]\ \
$\+H$ is one-to-one Wadge hard for $\tbD_\alpha(P)$
for every $\omega$-continuous domain $P$.
\end{enumerate}

\item
Fix some presentation of $Q$.
If $\alpha<\CK$ the following conditions are equivalent.
\begin{enumerate}
\item[$(i_e)$]\ \
$\+H$ is effectively finite-to-one Wadge hard for $\tlD_\alpha(\PN)$.

\item[$(ii_e)$]\ \
$\+H$ is one-to-one Wadge hard for $\tbD_\alpha(P)$
and effectively one-to-one Wadge hard for $\tlD_\alpha(P)$
for every effective $\omega$-continuous domain
(cf. Definition~\ref{def:effective domain}).
\end{enumerate}
\end{enumerate}

\end{theorem}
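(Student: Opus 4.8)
The plan is to deduce both parts from the scattered-chain characterisations already in hand, so that everything reduces to one ``thinning'' phenomenon on $\PN$: that a finite-to-one continuous self-map admits a one-to-one continuous sub-map. For part 1, the implication $(ii)\Rightarrow(i)$ is immediate, since one-to-one hardness for every $\omega$-continuous domain specialises (take $P=\PN$) to one-to-one hardness for $\tbD_\alpha(\PN)$, which is a fortiori finite-to-one hardness. For $(i)\Rightarrow(ii)$ I would invoke Theorem~\ref{thm:one to one}: it suffices to produce an $\+H$-scattered chain in $\PN$ of length $\alpha+1$, for then condition $(iv)$ there yields condition $(i)$ there, which is exactly our $(ii)$.

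To build such a chain, I start from an $\+A$-scattered chain $(X_\beta)_{\beta\le\alpha}$ with $\+A\in\tbD_\alpha(\PN)$ and one-to-one continuous maps $\theta_\beta\colon\PN\to\PN$ given by Corollary~\ref{coro:PN A scattered}, together with a finite-to-one continuous reduction $f\colon\PN\to\PN$ of $\+A$ to $\+H$ supplied by hypothesis $(i)$. Set $Y_\beta=f(X_\beta)$. Since $f$ is monotone and consecutive chain elements lie on opposite sides of $\+A$ (hence their $f$-images on opposite sides of $\+H$), the sequence $(Y_\beta)_{\beta\le\alpha}$ is strictly decreasing of length $\alpha+1$ and $\+H$-alternating. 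Writing $\phi_\beta=f\circ\theta_\beta$, monotonicity of $f$ gives $\phi_\beta(\PN)\subseteq\{Y\mid Y_\beta\supseteq Y\supseteq Y_{\beta+1}\}$ for $\beta<\alpha$ and $\phi_\alpha(\PN)\subseteq\{Y\mid Y_\alpha\supseteq Y\}$; strictness $Y\supsetneq Y_{\beta+1}$ is forced because each $\phi_\beta(Z)$ and $Y_{\beta+1}$ sit on opposite sides of $\+H$. Likewise $\phi_\beta(\PN)\subseteq\+H$ when $\beta\not\sim\alpha$ and $\phi_\beta(\PN)\subseteq\PN\setminus\+H$ when $\beta\sim\alpha$, because $f$ reduces $\+A$ to $\+H$ and $\theta_\beta(\PN)$ lies on the appropriate side of $\+A$. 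Each $\phi_\beta$ is finite-to-one (a one-to-one map followed by a finite-to-one map) with infinite image, whence each interval $\{Y\mid Y_\beta\supseteq Y\supseteq Y_{\beta+1}\}$ is infinite and $Y_\beta\setminus Y_{\beta+1}$, as well as $Y_\alpha$, is infinite.

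At this point the maps $\phi_\beta$ satisfy every requirement of an $\+H$-scattered chain (Definition~\ref{def:scattered}) except injectivity, so the whole argument comes down to the following thinning lemma, which I expect to be the main obstacle: \emph{if $\phi\colon\PN\to\PN$ is continuous and finite-to-one, then there is a one-to-one continuous $\psi\colon\PN\to\PN$ with $\psi(\PN)\subseteq\phi(\PN)$}. Granting it, applying the lemma to each $\phi_\beta$ yields one-to-one continuous $\psi_\beta$ with image in $\phi_\beta(\PN)$, hence in the correct interval and on the correct side of $\+H$; thus $(Y_\beta)_{\beta\le\alpha}$ together with the $\psi_\beta$ is an $\+H$-scattered chain, as required. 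To prove the lemma I would seek a one-to-one continuous $e\colon\PN\to\PN$ of block form $e(Z)=\bigcup_{n\in Z}H_n$ (pairwise disjoint nonempty finite blocks) on which $\phi$ is injective, and take $\psi=\phi\circ e$; injectivity reduces to a read-off condition, namely for each $n$ a witness $t_n\in\phi(H_n)$ with $t_n\notin\phi(\bigcup_{m\ne n}H_m)$. The delicate point is that $\phi$ is only super-additive, so a witness chosen against the blocks built so far can be resurrected by a later finite combination of blocks; the role of finite-to-oneness is precisely to bound how much merging occurs (each fibre, and each set $\{X\mid\phi(X)\subseteq\phi(F)\}$ for finite $F$, is finite), which should let a recursive construction keep reserving fresh, robust witnesses. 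This combinatorial core is where the real work lies.

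Finally, part 2 is the effective mirror of part 1. The implication $(ii_e)\Rightarrow(i_e)$ specialises to $P=\PN$ exactly as before. For $(i_e)\Rightarrow(ii_e)$, Corollary~\ref{coro:PN A scattered} furnishes a computable $\+A$-scattered chain with an effective sequence $(\theta_\beta)_{\beta\le\alpha}$, and $(i_e)$ supplies an effective finite-to-one reduction $f$; the sets $Y_\beta=f(X_\beta)$ and maps $\phi_\beta=f\circ\theta_\beta$ then form an effective family, uniformly in $\beta<\CK$. It remains to carry out the thinning lemma effectively and uniformly: since $f$ is effective, the relevant fibres are c.e.\ and finite, so the witnesses $t_n$ and blocks $H_n$ can be searched for computably, yielding an effective one-to-one sequence $(\psi_\beta)_{\beta\le\alpha}$. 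This produces an effective $\+H$-scattered chain, and Theorem~\ref{thm:one to one}, implication $(iv_e)\Rightarrow(i_e)$, delivers both the plain and the effective one-to-one hardness asserted in $(ii_e)$.
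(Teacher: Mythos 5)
Your overall route is the paper's own: for $(i)\Rightarrow(ii)$ both you and the paper reduce the problem, via Theorem~\ref{thm:one to one}, to exhibiting an $\+H$-scattered chain of length $\alpha+1$ in $\PN$; both obtain it by pushing the special/scattered chain of Proposition~\ref{p:decreasing chains PN} (Corollary~\ref{coro:PN A scattered}) through the finite-to-one reduction $f$, using finite-to-one-ness to see that $Y_\alpha$ and the differences $Y_\beta\setminus Y_{\beta+1}$ are infinite, and both take as candidate witnessing maps compositions of $f$ with the canonical one-to-one maps into the intervals $\{X\mid X_{\beta+1}\subseteq X\subseteq X_\beta\}$, so that the images automatically land on the correct side of $\+H$. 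The divergence is exactly at the injectivity of these witnessing maps. The paper's proof asserts outright that the maps $X\mapsto f(X_{\beta+1}\cup\mu_\beta(X))$ are one-to-one; you observe, correctly, that a one-to-one map followed by a finite-to-one map is in general only finite-to-one, and you therefore isolate a ``thinning lemma'' (a finite-to-one continuous $\phi:\PN\to\PN$ admits a one-to-one continuous map with range inside $\phi(\PN)$). But you never prove that lemma: you explicitly defer it as ``where the real work lies''. Since everything else in your write-up is (correct) bookkeeping around Theorem~\ref{thm:one to one}, the unproved lemma is the entire mathematical content of $(i)\Rightarrow(ii)$, so the proposal as it stands is not a proof; the same gap recurs, in effective form, in your part 2.

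Two concrete warnings about filling it. First, your intended witness scheme ($t_n\in\phi(H_n)\setminus\phi(\bigcup_{m\neq n}H_m)$) asks for too much: what finite-to-one-ness actually yields are witnesses relative to a finite background. Since a strictly decreasing sequence of sets cannot have eventually constant $\phi$-images (an infinite fiber would result), for every finite $B$ and infinite reservoir $S$ disjoint from $B$ there exist a nonempty finite $H\subseteq S$ and $t\in\phi(B\cup H)\setminus\phi(B\cup(S\setminus H))$: pick $k$ with $\phi(B\cup T_{k+1})\subsetneq\phi(B\cup T_k)$, where $T_k$ is $S$ minus its first $k$ points, pick $t$ in the difference and a finite $F\subseteq B\cup T_k$ with $t\in\phi(F)$, and set $H=(F\cap S)\cup\{s_0,\ldots,s_k\}$. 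Because $t$ lives in $\phi(B\cup H)$ rather than $\phi(H)$, injectivity of $Z\mapsto\phi(\bigcup_{n\in Z}H_n)$ is not free: at stage $n$ one needs witnesses against each of the $2^n$ possible unions of earlier blocks, obtained by iterating the above step finitely often inside the reservoir and taking the union of the blocks produced. With that correction your part 1 goes through (and, incidentally, supplies the justification missing from the paper's one-line injectivity claim). Second, your remark for part 2 that the blocks and witnesses ``can be searched for computably'' because the fibres are c.e.\ and finite does not stand: the defining property of a witness is \emph{non}-membership in $\phi(B\cup(S\setminus H))$, a $\Pi^0_1$ condition on c.e.\ data, so the construction above is only $\emptyset'$-computable. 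Producing an \emph{effective} $\+H$-scattered chain therefore needs an additional idea beyond a c.e.\ search, and this effectivization gap must also be addressed before part 2 can be considered proved.
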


\begin{proof}
$(i)\Rightarrow(ii)$.
Consider the family $\+A\in\tbD_\alpha(\PN)$ and the $\+A$-special chain
$(X_\beta)_{\beta\leq\alpha}$ in $\PN$
given by Proposition~\ref{p:decreasing chains PN}.
Consider bijections $\mu_\alpha:\N\to X_\alpha$ and, for $\beta<\alpha$,
$\mu_\beta:\N\to X_\beta\setminus X_{\beta+1}$.
Let $X_{\alpha,n}=\mu_\alpha(\{0,\ldots,n\})$ and,
for $\beta<\alpha$, let $X_{\beta,n}=X_{\beta+1}\cup\mu_\beta(\{0,\ldots,n\})$.
Let $f:\PN\to\PN$ be a finite-to-one continuous reduction of $\+A$ to $\+H$
and let $Z_\beta=f(X_\beta)$, for $\beta\leq\alpha$.
Consider some $\beta<\alpha$.
The sequence $(X_{\beta,n})_{n\in\N}$
is an infinite chain in $\PN$ of sets containing $X_{\beta+1}$ and contained in $X_\beta$.
Since $f$ is increasing we have
$Z_\beta=f(X_\beta)\supseteq f(X_{\beta,n}) \supseteq f(X_{\beta+1})=Z_{\beta+1}$.
Since $f$ is finite-to-one,
$(f(X_{\beta,n}))_{n\in\N}$ contains infinitely many distinct sets.
In particular, $Z_\beta\setminus Z_{\beta+1}$ must be an infinite set.
A similar argument shows that $Z_\alpha$ is also an infinite set.
Let $\theta_\alpha:\PN\to\PN$ be such that
$\theta_\alpha(X)=\mu_\alpha(X)$ for all $X\in\PN$.
For $\beta<\alpha$, let $\theta_\beta:\PN\to\PN$ be such that
$\theta_\alpha(X)=f(X_{\beta+1}\cup\mu_\beta(X))$.
These maps $\theta_\beta$, $\beta\leq\alpha$,
are continuous one-to-one maps $\PN\to\PN$.
Clearly, $\theta_\alpha(\emptyset)=\emptyset$ and $\theta_\alpha(\N)=X_\alpha$.
And for $\beta<\alpha$,
$\theta_\beta(\emptyset)=Z_{\beta+1}$ and $\theta_\beta(\N)=Z_\beta$.
Recall that $(X_\beta)_{\beta\leq\alpha}$ is $\+A$-special.
Thus, for all $X\in\PN$,
the set $\mu_\alpha(X)$, being included in $X_\alpha$, is not in $\+A$.
For $\beta<\alpha$, 
the set $X_{\beta+1}\cup\mu_\beta(X)$, being included in $X_\beta$
and containing strictly $X_{\beta+1}$, is in $\+A$ if and only if~$\beta\not\sim\alpha$.
Since $f$ is a reduction of $\+A$ to $\+H$, 
the range of $\theta_\alpha$ is included in $\PN\setminus\+H$
and, for $\beta<\alpha$, the range of $\theta_\beta$ is included in $\+H$ if $\beta\not\sim\alpha$,
and it is included in $\PN\setminus\+H$ if $\beta\sim\alpha$.
This proves that the chain $(Z_\beta)_{\beta\leq\alpha}$ is $\+H$-scattered.
Applying Theorem~\ref{thm:one to one}, we get the wanted condition ($ii$).

$(i_e)\Rightarrow(ii_e)$. Routine variation of the above argument.

$(ii)\Rightarrow(i)$ and $(ii_e)\Rightarrow(i_e)$ are obvious.
\end{proof}

\subsection{Increasing Chains of Wadge Degrees}

Combining a construction given by Selivanov (2005b)  
with decreasing chains of sets, we get increasing
chains of proper $\tbD_\alpha(\PN)$ Wadge degrees.
We write $\N^{<\omega}$ for the set of finite sequences of natural numbers, 
and $\Bmix$ for $\N^{<\omega}\cup\Baire$.
 $\nil$ denotes the empty sequence.

\begin{definition}\label{def:BT}
\cite{selivanov2005}
Suppose $\alpha$ is infinite.
Let $T\subset\Bmix$ be a well-founded tree
(i.e. $T$ is closed under prefix and has no infinite branch)
with rank $\alpha$.
Let $\xi:\N^{<\omega}\to2\N$ be an injective map such that,
$\xi(\nil)=0$.
Let $e:\N^{<\omega}\to\+P_{<\omega}(2\N)$ be the map such that,
for every $\sigma\in T$,
$e(\sigma)=\{\xi(\tau)\mid \tau\textit{ prefix of }\sigma\}$.
Let
$B(T)=\{X\in\PN\mid \forall\sigma\in T\ \ X\not\subseteq e(\sigma)\}.$
And, for each ordinal $\alpha$, let 
$Y_\alpha=\{e(\sigma)\mid\sigma\in T\textit{ has odd length }\}$ and 
$Z_\alpha=Y_\alpha\cup B(T)$.
\end{definition}

\begin{theorem}\label{thm:Yalpha}
(Selivanov 2005b, Lemma 5.8 and Propositions 5.9 and 6.4)
$B(T)$ is open and 
$Y_\alpha,Z_\alpha$ are two proper $\tbD_\alpha(\PN)$ families
that are Wadge incomparable.
\end{theorem}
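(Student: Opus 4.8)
The plan is to verify three separate assertions about the families $B(T)$, $Y_\alpha$, and $Z_\alpha$ built from the well-founded tree $T$ of rank $\alpha$: that $B(T)$ is open, that both $Y_\alpha$ and $Z_\alpha$ are \emph{proper} $\tbD_\alpha(\PN)$ families (i.e.\ in $\tbD_\alpha$ but in no strictly smaller Hausdorff class, equivalently not in $\co\tbD_\alpha$), and that they are Wadge incomparable, meaning $Y_\alpha\not\leq_W Z_\alpha$ and $Z_\alpha\not\leq_W Y_\alpha$. I would treat these in order, since the openness of $B(T)$ is used throughout.

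\emph{Openness of $B(T)$.} By definition $B(T)=\{X\mid\forall\sigma\in T\ X\not\subseteq e(\sigma)\}$, where each $e(\sigma)$ is a \emph{finite} subset of $2\N$. For a fixed finite set $e(\sigma)$ the condition $X\not\subseteq e(\sigma)$ says $X$ meets the complement $\N\setminus e(\sigma)$, i.e.\ $X\in\bigcup_{n\notin e(\sigma)}\+B_{\{n\}}$, which is Scott open (a union of basic opens $\+B_{\{n\}}$, hence an upset stable under directed suprema). First I would note that $B(T)$ is the \emph{intersection} over $\sigma\in T$ of these opens, which is not open in general. So instead I would reformulate: since $T$ is prefix-closed, $X\subseteq e(\sigma)$ holds for \emph{some} $\sigma\in T$ iff $X$ lies below one of the finite witnesses, and the correct reading is that $\PN\setminus B(T)=\bigcup_{\sigma\in T}\{X\mid X\subseteq e(\sigma)\}$. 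Each set $\{X\mid X\subseteq e(\sigma)\}$ is the downward closure of the \emph{single} finite set $e(\sigma)$, which is Scott \emph{closed} (a principal downset is closed under suprema of directed subsets because any such supremum still sits below $e(\sigma)$). A countable union of closed sets need not be closed, so openness of $B(T)$ is genuinely a property of how the $e(\sigma)$ fit together along the tree; the argument must use that every $X\in B(T)$ has a finite subset already in $B(T)$, witnessing openness directly. \textbf{This is the point I would check most carefully}, appealing to well-foundedness of $T$ to control the family $\{e(\sigma)\}$.

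\emph{Properness in $\tbD_\alpha$.} For the upper bound I would exhibit $Y_\alpha$ and $Z_\alpha$ explicitly as $D_\alpha$ applied to an increasing $\alpha$-sequence of open families, using the rank function of $T$ to stratify the tree into $\alpha$ levels and reading the parity condition $\beta\not\sim\alpha$ off the length-parity in the definition of $Y_\alpha$ (odd-length nodes). For properness I would invoke the decreasing-chain machinery already developed: by Theorem~\ref{thm:Dalphachains}, membership in $\tbD_\alpha$ together with the existence of a decreasing alternating chain of length $\alpha+1$ yields Wadge hardness for $\tbD_\alpha$, and a set that is both in $\tbD_\alpha$ and hard for $\tbD_\alpha$ cannot lie in $\co\tbD_\alpha$ (otherwise its complement would be $\tbD_\alpha$-hard and in $\tbD_\alpha$, contradicting the known absence of self-dual degrees in the difference hierarchy on $\PN$). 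Concretely I would construct, from the branches of $T$, a strictly decreasing chain of subsets of $\N$ of length $\alpha+1$ that alternates with respect to $Y_\alpha$ (respectively $Z_\alpha$), which simultaneously delivers both properness and hardness.

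\emph{Wadge incomparability.} This is \textbf{the main obstacle}. The two families share $B(T)$ as a common open part and differ by the countable set $Y_\alpha$ of finite sets $e(\sigma)$ of odd length; the incomparability must exploit the $T_0$, non-Hausdorff geometry of the Scott topology, since in a zero-dimensional Polish space the Duality theorem would force comparability. I would argue by contradiction: a continuous reduction $f$ witnessing $Y_\alpha\leq_W Z_\alpha$ must be monotone for $\subseteq$ and commute with directed unions, and I would track how $f$ acts on a carefully chosen decreasing alternating chain, showing that the alternation pattern forced on the image is incompatible with $Z_\alpha$'s structure (and symmetrically for the reverse reduction). The crux is that the finite sets $e(\sigma)$ sit as isolated ``pure positive information'' points that a continuous increasing map cannot separate from their directed approximations in the way a reduction between $Y_\alpha$ and $Z_\alpha$ would require; making this rigorous is where the bulk of the work lies, and I would lean on Selivanov's alternating-tree analysis to formalize the obstruction.
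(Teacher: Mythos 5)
First, a point of reference: the paper itself does not prove Theorem~\ref{thm:Yalpha} at all; it is imported verbatim from Selivanov (2005b, Lemma 5.8, Propositions 5.9 and 6.4). So your proposal can only be measured against that source and against its own internal coherence. Your treatment of the openness of $B(T)$ is on the right track and is completable exactly as you suspect: one must show every $X\in B(T)$ has a finite subset in $B(T)$. Indeed, if $X$ contains some $n\notin\{\xi(\tau)\mid\tau\in T\}$ then $\{n\}$ is such a subset; otherwise $X$ determines $S=\{\tau\in T\mid \xi(\tau)\in X\}$, and $S$ cannot be a prefix-chain (an infinite chain would yield an infinite branch of $T$, contradicting well-foundedness, while a finite chain with maximum $\tau^*$ would give $X\subseteq e(\tau^*)$, contradicting $X\in B(T)$); hence $S$ contains two prefix-incomparable nodes $\tau_1,\tau_2$, and $\{\xi(\tau_1),\xi(\tau_2)\}$ is the desired finite witness, by injectivity of $\xi$.

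The genuine gap is your properness step, and it is not repairable along the lines you propose. You plan to build a strictly decreasing $Y_\alpha$-alternating (resp.\ $Z_\alpha$-alternating) chain of length $\alpha+1$ and invoke Theorem~\ref{thm:Dalphachains}. No such chain exists. Since $\alpha$ is infinite (as Definition~\ref{def:BT} requires), an alternating chain of length $\alpha+1$ has infinitely many terms inside the family; but every member of $Y_\alpha$ is a \emph{finite} set $e(\sigma)$, and a strictly decreasing chain of sets can pass through only finitely many finite sets. For $Z_\alpha=Y_\alpha\cup B(T)$ the same obstruction appears: $B(T)$ is an upset, so the terms of a decreasing chain lying in $B(T)$ form an initial segment along which there is no alternation, and all remaining terms in $Z_\alpha$ are again finite sets from $Y_\alpha$. (The paper itself exploits exactly these two facts in the proof of Theorem~\ref{thm:chain degrees}, items 2 and 3: ``$Y_\alpha$ contains no infinite decreasing chain'' and ``$B(T)$ is an upset, hence it has no infinite decreasing alternating chain''.) Worse, your strategy is internally inconsistent: if it worked, Theorem~\ref{thm:Dalphachains} would make $Y_\alpha$ and $Z_\alpha$ both Wadge hard for $\tbD_\alpha(\PN)$, and since both lie in $\tbD_\alpha(\PN)$, each would then reduce to the other---flatly contradicting the incomparability you must prove in your third step. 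This is precisely the point of Selivanov's construction: for infinite $\alpha$ in $\PN$, properness does \emph{not} imply hardness, and $Y_\alpha,Z_\alpha$ witness the divergence. Properness ($Y_\alpha,Z_\alpha\notin\co\tbD_\alpha(\PN)$) is established by Selivanov via \emph{increasing} alternating trees (the family $(e(\sigma))_{\sigma\in T}$ is itself such a tree of rank $\alpha$), not decreasing chains; and since your incomparability step again proposes to chase decreasing alternating chains through a putative reduction, it inherits the same defect and also has no footing.
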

Before entering the construction of the chains of Wadge degrees,
a simple lemma.

\begin{lemma}\label{l:Dalpha operations}
Let $E$ be a topological space.
\begin{enumerate}
\item  If $A\in\tbD_\alpha(E)$
and $L\subseteq E$ is open
then $A\cap L\in\tbD_\alpha(E)$.

\item
 If $\alpha$ is infinite, $A\in\tbD_\alpha(E)$
and $F\subseteq E$ is closed
then $A\cap F\in\tbD_\alpha(E)$.

\item
 Suppose $\beta>0$, $\beta+\alpha=\alpha$
(i.e. $\alpha\geq\beta\omega$).
If $C\in\tbD_\beta(E)$,
$D\in\tbD_\alpha(E)$,
$L$ is open and $C\subseteq L$
and $L$ and $D$ are disjoint
then $C\cup D\in\tbD_\alpha(E)$.
\end{enumerate}
\end{lemma}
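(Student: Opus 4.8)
For part 1 the plan is to intersect the whole defining sequence with $L$. Writing $A=D_\alpha((A_\beta)_{\beta<\alpha})$ with $(A_\beta)_{\beta<\alpha}$ an increasing sequence of open sets, the sets $A_\beta\cap L$ are again open and increasing, and because the sequence is increasing one has, for every $\beta<\alpha$,
$(A_\beta\cap L)\setminus\bigcup_{\gamma<\beta}(A_\gamma\cap L)=L\cap\bigl(A_\beta\setminus\bigcup_{\gamma<\beta}A_\gamma\bigr)$ (valid at successor and limit $\beta$ alike). Unioning over the indices $\beta\not\sim\alpha$ gives $D_\alpha((A_\beta\cap L)_{\beta<\alpha})=L\cap A=A\cap L$, so $A\cap L\in\tbD_\alpha(E)$ with no parity issue at all.

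For part 2 I would write $A\cap F=A\setminus L$ with $L=E\setminus F$ open, and use a \emph{parity-neutral prefix}. Since $\alpha$ is infinite, $2+\alpha=\alpha$, so I may prepend two open sets and set $B_{2+\beta}=L\cup A_\beta$; the reason for using the offset $2$ is that $2+\delta\sim\delta$ for \emph{every} ordinal $\delta$ (true for finite and for infinite $\delta$), so a point outside $L$ keeps the parity of its rank and hence its membership in $A$ is unchanged. The two prepended sets are then chosen according to the parity of $\alpha$ — namely $(B_0,B_1)=(L,L)$ when $\alpha$ is even and $(B_0,B_1)=(\emptyset,L)$ when $\alpha$ is odd — precisely so that every point of $L$ acquires rank $0$ or $1$, of parity $\sim\alpha$, and is therefore excluded. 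A short rank computation then yields $D_\alpha((B_\gamma)_{\gamma<\alpha})=A\setminus L$; the only genuine content is this parity-driven choice that pins the points of $L$ to the excluded parity.

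For part 3 the plan is to \emph{concatenate} a representation of $C$ with one of $D$, exploiting $\beta+\alpha=\alpha$. First, by part 1 I may assume $C=D_\beta((C_\eta)_{\eta<\beta})$ with every $C_\eta\subseteq L$. I then build an increasing open sequence whose bottom block is the $C_\eta$'s (all $\subseteq L$) and whose upper block has the form $L\cup D_\delta$; increasingness is immediate since the bottom block lies below $L$, which lies below each $L\cup D_\delta$. Inside $L$ the upper block is constantly $L$, so inside $L$ the combined set is governed by the $C$-block and equals $C$; outside $L$ neither the $C$-block nor the $L$-cap is ever entered, so there the combined set is governed by the $D$-block. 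The geometric fact that makes the upper block actually compute $D$ is that $D\cap L=\emptyset$ forces every in-strip of $D$ to lie in $D\subseteq F$, hence to be disjoint from $L$, so that subtracting $L$ when forming the combined strips does not damage the in-strips of $D$.

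The hard part will be the parity bookkeeping, since the global difference uses the convention ``$\not\sim\alpha$'' whereas $C$'s native representation uses ``$\not\sim\beta$'', and the $D$-block is shifted by the length of the $C$-block. To reconcile these I would choose the length $o$ of the bottom block so that $o\sim\alpha$ (re-representing $C$ in $\tbD_{\beta+1}(E)$ when $\beta\not\sim\alpha$, which is legitimate because the difference hierarchy is increasing and because $(\beta+1)+\alpha=\alpha$), which aligns both $C$'s in-strips and the points of $L\setminus C$ with the correct parities; and, in the remaining case where $\alpha$ is odd, I would insert a single pad level equal to $L$ (sitting at an out-index) so that the $D$-block starts at an offset of even finite part, exactly what is needed for offset$+\delta\sim\delta$ and hence for the $D$-block to reproduce $D$. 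The extra pad is absorbed by $1+\alpha=\alpha$, so the total length stays $\alpha$. I expect the subtle point to be that a naive ``shift by one'' fails at limit ranks (where $1+\lambda=\lambda$), which is precisely why the fix must re-present $C$ at a length of the right parity rather than merely shift it.
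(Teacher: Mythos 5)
Your proposal is correct, and all three parts follow the same global strategy as the paper's own proof (part 1: intersect the defining sequence with $L$; part 2: prepend padding and shift; part 3: concatenate a representation of $C$ living inside $L$ with the capped sequence $(V_\delta\cup L)$). But your parity bookkeeping differs from the paper's, and it is your version that is right. In part 2 the paper prepends a \emph{single} set in the even case ($X_0=L$, $X_{1+\beta}=L\cup U_\beta$); since $1+\beta\not\sim\beta$ for finite $\beta$, this flips membership for every point of $F$ of finite rank: a point of $F$ of $U$-rank $0$ is not in $A$ (as $0\sim\alpha$) yet acquires rank $1\not\sim\alpha$ in the new sequence and is wrongly included. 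Your uniform shift by $2$ with pad $(L,L)$ or $(\emptyset,L)$ according to the parity of $\alpha$ is the correct repair. In part 3 the paper splits on $\beta\sim\alpha$ versus $\beta\not\sim\alpha$ and places the $D$-block at offset $\beta$ (respectively, prepends one $\emptyset$); this fails whenever the relevant offset has odd finite part --- for instance $\beta=1$, $\alpha=\omega+1$ (both odd, so the paper's first case applies): a point of $D$ of $V$-rank $0$ gets combined rank $1\sim\alpha$ and is wrongly excluded --- and its shift-by-one of the $C$-block also misbehaves at infinite ranks, exactly the limit-rank phenomenon ($1+\lambda=\lambda$) you flagged. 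Your fix --- re-present $C$ at a length $o\in\{\beta,\beta+1\}$ with $o\sim\alpha$, and when $\alpha$ is odd insert one pad level $L$ so that the $D$-block offset has even finite part --- meets both constraints (absorption of the points of $L$ at an index $\sim\alpha$, parity-preserving shift of the $D$-block), so your argument goes through in all cases. The only step you leave at citation level is the monotonicity $\tbD_\beta(E)\subseteq\tbD_{\beta+1}(E)$; it does hold in every topological space, with a one-line proof compatible with your use: from $A=D_\beta((U_\gamma)_{\gamma<\beta})$ pass to $W_\delta=\bigcup_{\gamma<\delta}U_\gamma$ for $\delta<\beta+1$, which shifts every rank by exactly one (limit ranks included), so $D_{\beta+1}((W_\delta)_{\delta<\beta+1})=A$.
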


\begin{proof}
1. If $A=D_\alpha((U_\beta)_{\beta<\alpha})$
then $A\cap L=D_\alpha((U_\beta\cap L)_{\beta<\alpha})$.

2. Let $L=E\setminus F$.
Let $A=D_\alpha((U_\beta)_{\beta<\alpha})$
where the $U_\beta$'s are open.
In case $\alpha$ is odd,
let $X_0=\emptyset$, $X_1=L$ and,
for $\beta<\alpha$, let $X_{2+\beta}=L\cup U_\beta$.
In case $\alpha$ is even, let $X_0=L$ and
$X_{1+\beta}=L\cup U_\beta$.
Now, $2+\alpha=1+\alpha=\alpha$ (since $\alpha$ is infinite)
and $A\cap F=D_\alpha((X_\beta)_{\beta<\alpha})$.

3. Suppose $C=D_\beta((U_\gamma)_{\gamma<\beta})$
and $D=D_\alpha((V_\delta)_{\delta<\alpha})$.
{\it Case $\beta\sim\alpha$.}
Let $W_\gamma=U_\gamma\cap L$ for $\gamma<\beta$,
$W_{\beta+\delta}=V_\delta\cup L$ for $\delta<\alpha$.
Then,

$\begin{array}{rcl}
D_\alpha((W_\gamma)_{\gamma<\alpha})
&=&D_\beta((U_\gamma\cap L)_{\gamma<\beta})
\ \ \cup\ \
(D_\alpha((V_\delta\cup L)_{\beta\leq\delta<\alpha})
         \setminus\bigcup_{\gamma<\beta}U_\gamma\cap L)
\\&=&(C\cap L)
\ \ \cup\ \ ((D\setminus L)
\setminus\bigcup_{\gamma<\beta}U_\gamma\cap L)
\\&=& (C \cap L) \cup (D \setminus L)
\\&=&C\cup D.
\end{array}
$
\\
{\it Case $\beta\not\sim\alpha$.}
Let $W_0=\emptyset$,
$W_{1+\gamma}=U_\gamma\cap L$ for $\gamma<\beta$,
$W_{\beta+\delta}=V_\delta\cup L$ for $\delta<\alpha$,
and argue~similarly.
\end{proof}

\begin{definition}\label{def:Yalphabeta}
Let $(a_\delta)_{\delta<\alpha}$ be a one-to-one enumeration
of $2\N+3$.
For $\beta<\alpha$ let $U_\beta$ be the open family of sets
containing $1$ and meeting
$\{a_\gamma\mid\gamma\leq\beta\}$.
Let $\+A=D_\alpha((U_\delta)_{\delta<\alpha})
\in\tbD_\alpha(\PN)$.
For $\beta<\alpha$, set
$\+A_\beta=D_\beta((U_\delta)_{\delta<\beta})\cap\+P(2\N+1)$.
Let $Y_\alpha$  as in Definition \ref{def:BT} and 
let $Y_{\alpha,\beta}=\+A_\beta\cup Y_\alpha$
and 
$Z_{\alpha,\beta}=(B(T)\cap\+B_{\{0\}})\cup Y_{\alpha,\beta}$.
\end{definition}
The following result is straightforward.

\begin{proposition}\label{p:Yalphabeta}
\begin{enumerate}

\item
 $Y_{\alpha,\beta}$ is included in $\+P(2\N)\cup\+P(2\N+1)$. 

\item
 A set $X$ is in $\+A$ if and only if $1\in X$
and $X$ meets $2\N+3$
and the least $\delta$ such that
$a_\delta\in X$ is such that $\delta\not\sim\alpha$.

\item
 A set $X$ is in $\+A_\beta$ if and only if $1\in X$,
$X\subseteq2\N+1$
and $X$ meets $\{a_\delta\mid\delta<\beta\}$
and the least $\delta$ such that
$a_\delta\in X$ is such that $\delta\not\sim\beta$.

\item
 For $\beta\leq\alpha$, set
$X_\beta=\{a_\delta\mid\delta\geq\beta\}\cup\{1\}$.
The sequence $(X_\delta)_{\delta\leq\alpha}$ is a decreasing
$\+A$-alternating chain.
The sequence $(X_\delta)_{\delta\leq\beta}$ is alternating
relative to $Y_{\alpha,\beta}$ and to $Z_{\alpha,\beta}$.

\item
 If $X\in B(T)$ (cf. Definition \ref{def:BT})
then $(X\cap2\N)\cup I\in B(T)$
for every $I\subseteq2\N+1$.
\end{enumerate}
\end{proposition}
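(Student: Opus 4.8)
The statement is entirely a matter of unfolding the definitions, the one recurring device being the standard fact that for an \emph{increasing} sequence of open families $(U_\delta)_{\delta<\alpha}$ a set $X$ lies in $D_\alpha((U_\delta)_{\delta<\alpha})$ exactly when the least $\delta$ with $X\in U_\delta$ exists and satisfies $\delta\not\sim\alpha$. Since each $U_\delta$ here is $\{Z\mid 1\in Z$ and $Z$ meets $\{a_\gamma\mid\gamma\le\delta\}\}$, the families $(U_\delta)_\delta$ are increasing, so this criterion applies throughout. I would take the five items in order, with items~2 and~3 feeding item~4.

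For item~1, recall from Definition~\ref{def:BT} that every $e(\sigma)$ is a set of even numbers, so $Y_\alpha\subseteq\+P(2\N)$; and $\+A_\beta$ is by definition intersected with $\+P(2\N+1)$, so $\+A_\beta\subseteq\+P(2\N+1)$. Hence $Y_{\alpha,\beta}=\+A_\beta\cup Y_\alpha\subseteq\+P(2\N)\cup\+P(2\N+1)$. For items~2 and~3, once $1\in X$ we have $X\in U_\delta$ iff $X$ meets $\{a_\gamma\mid\gamma\le\delta\}$; since $(a_\delta)_{\delta<\alpha}$ is injective and enumerates $2\N+3$, the least such $\delta$ is the least $\delta$ with $a_\delta\in X$, and it exists iff $X$ meets $2\N+3$. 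Inserting this into the parity criterion for $D_\alpha$ gives item~2 verbatim; item~3 is the same computation for $D_\beta$, the intersection with $\+P(2\N+1)$ contributing the clause $X\subseteq 2\N+1$ and restricting the relevant indices to $\delta<\beta$.

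Item~4 is where care is required, since two different ordinals govern the parities. The chain $X_\delta=\{a_\gamma\mid\gamma\ge\delta\}\cup\{1\}$ is strictly decreasing (the $a_\gamma$ are distinct), and for $\delta<\alpha$ the least index $\gamma$ with $a_\gamma\in X_\delta$ is $\gamma=\delta$, while $X_\alpha=\{1\}$ misses $2\N+3$; item~2 then yields $X_\delta\in\+A\iff\delta\not\sim\alpha$, i.e.\ $\+A$-alternation of $(X_\delta)_{\delta\le\alpha}$. For the truncated chain I would first note that each $X_\delta$ is a nonempty set of odd numbers not containing $0$, so it lies in neither $Y_\alpha\subseteq\+P(2\N)$ nor $B(T)\cap\+B_{\{0\}}$; consequently membership of $X_\delta$ in both $Y_{\alpha,\beta}$ and $Z_{\alpha,\beta}$ collapses to membership in $\+A_\beta$. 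Item~3 (with $\beta$ in the role of $\alpha$) then gives $X_\delta\in\+A_\beta\iff\delta\not\sim\beta$ for $\delta<\beta$, and $X_\beta\notin\+A_\beta$ because $X_\beta$ meets $\{a_\gamma\mid\gamma<\beta\}$ nowhere. This is exactly alternation relative to both families. The step demanding attention is precisely verifying that the two auxiliary pieces $Y_\alpha$ and $B(T)\cap\+B_{\{0\}}$ are disjoint from the chain, which is why the extra unions do not perturb the alternation read off from $\+A_\beta$.

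For item~5 the decisive observation is once more that every $e(\sigma)$ is contained in $2\N$. I would argue by contradiction: if $(X\cap 2\N)\cup I\subseteq e(\sigma)$ for some $\sigma\in T$, then, as $e(\sigma)$ contains no odd number, this forces $I=\emptyset$ and $X\cap 2\N\subseteq e(\sigma)$; the defining property $X\in B(T)$, asserting $X\not\subseteq e(\sigma)$, is then what rules this out, so the separation of the even and odd coordinates is the whole content. Ranging over $\sigma\in T$ gives $(X\cap 2\N)\cup I\in B(T)$. Overall I expect item~4, with its simultaneous parity accounting against $\alpha$ and $\beta$ and the need to discard the auxiliary families, to be the only place calling for genuine attention; items~1--3 and~5 are immediate once the even/odd split of the coordinates is made explicit.
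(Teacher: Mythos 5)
Your items 1--4 are correct, and they are exactly the routine verification the paper has in mind (the paper prints no proof at all, recording the proposition as ``straightforward''): the parity criterion for $D_\alpha$ of an increasing sequence of open families, the even/odd split of the coordinates, and the reduction of the truncated chain's alternation to membership in $\+A_\beta$ after discarding $Y_\alpha$ and $B(T)\cap\+B_{\{0\}}$ are all sound and carefully handled.

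Item 5 is where your proof breaks, and the gap is genuine. From $(X\cap2\N)\cup I\subseteq e(\sigma)$ you correctly deduce $I=\emptyset$ and $X\cap2\N\subseteq e(\sigma)$, but the final step --- that the hypothesis $X\in B(T)$, i.e.\ $X\not\subseteq e(\sigma)$, ``rules this out'' --- is a non sequitur: $X\not\subseteq e(\sigma)$ may hold solely because $X$ contains odd numbers, while its even part is still contained in $e(\sigma)$. Indeed, the item as literally stated is false. Take $X=\{1\}$ and $I=\emptyset$: since every $e(\sigma)$ consists of even numbers, $X\not\subseteq e(\sigma)$ for all $\sigma\in T$, so $X\in B(T)$; yet $(X\cap2\N)\cup I=\emptyset\subseteq e(\nil)$ with $\nil\in T$, so $\emptyset\notin B(T)$. (Similarly $X=\{0,1\}$ gives $(X\cap2\N)\cup I=\{0\}=e(\nil)$.) What is true --- and is all that the proof of Theorem~\ref{thm:chain degrees} actually uses, where the odd part is $I=\{2n+3\mid 2n+1\in X\}$ --- is the statement with the proviso that $I\neq\emptyset$ or $X\subseteq2\N$: if $I\neq\emptyset$ then $(X\cap2\N)\cup I$ contains an odd number and hence lies in $B(T)$ trivially (this is precisely your observation about the even/odd separation), and if $X\subseteq2\N$ then $(X\cap2\N)\cup I\supseteq X$ and $B(T)$, being open, is an upset. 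Under that proviso your argument closes; without it no proof can exist, and a blind attempt should have flagged the needed side condition rather than asserted a contradiction that is not there.
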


\begin{theorem}\label{thm:chain degrees}
\begin{enumerate}

\item  If $\omega\leq\beta+\alpha=\alpha$
then $Y_{\alpha,\beta}$ and $Z_{\alpha,\beta}$
are proper families in $\tbD_\alpha(\PN)$.

\item 
 If $\omega\leq\beta<\gamma<\alpha$
then $Y_{\alpha,\beta}<_W Y_{\alpha,\gamma}$
and $Z_{\alpha,\beta}<_W Z_{\alpha,\gamma}$.

\item  If $\omega\leq\beta,\delta$ and $\beta,\delta<\alpha$ then
$Y_{\alpha,\beta}$ and $Z_{\alpha,\delta}$
are Wadge incomparable.

\item  For $\alpha<\CK$, 
$Y_{\alpha,\beta}$ and $Z_{\alpha,\delta}$
are proper families  in $\tlD_\alpha(\PN)$, 
and the effective counterparts of items 2 and 3  hold.
\end{enumerate}
\end{theorem}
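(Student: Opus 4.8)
The plan is to reduce every assertion to the building blocks $Y_\alpha,Z_\alpha$ of Theorem~\ref{thm:Yalpha}, exploiting that the three ingredients have \emph{disjoint supports}: $\+A_\beta\subseteq\+P(2\N+1)$ lives on the odd coordinates, $Y_\alpha\subseteq\+P(2\N)$ on the even ones, and $B(T)\cap\+B_{\{0\}}$ is pinned down by its even part and by containing $0$. This lets me view each family as an even part glued to an odd part, and move freely between $\PN$ and the subspaces $\+P(2\N),\+P(2\N+1),\+B_{\{0\}}$, each homeomorphic to $\PN$.

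\emph{Item 1, membership.} First I show $\+A_\beta\in\tbD_\beta(\PN)$: it is $D_\beta((U_\delta)_{\delta<\beta})$ intersected with the Scott-closed family $\+P(2\N+1)$, so Lemma~\ref{l:Dalpha operations}(2) applies ($\beta$ being infinite). Since $\beta+\alpha=\alpha$, I glue $\+A_\beta$ to $Y_\alpha\in\tbD_\alpha(\PN)$ using Lemma~\ref{l:Dalpha operations}(3) with the open separator $\+B_{\{1\}}$: by Proposition~\ref{p:Yalphabeta}(3) every member of $\+A_\beta$ contains $1$, while $Y_\alpha$ avoids $1$, so $\+A_\beta\subseteq\+B_{\{1\}}$ and $\+B_{\{1\}}\cap Y_\alpha=\emptyset$; hence $Y_{\alpha,\beta}\in\tbD_\alpha(\PN)$. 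For $Z_{\alpha,\beta}$ I run Lemma~\ref{l:Dalpha operations}(3) once more, adjoining the open set $B(T)\cap\+B_{\{0\}}$ (rank $1$, and $1+\alpha=\alpha$) and taking it as its own separator $L$; it is disjoint from $Y_{\alpha,\beta}$ because $\+A_\beta$ avoids $0$ and $Y_\alpha$ is disjoint from $B(T)$.

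\emph{Item 1, properness, and the easy reductions of Item 2.} To see $Y_{\alpha,\beta}\notin\co\tbD_\alpha$, I restrict to the closed subspace $\+P(2\N)\cong\PN$: there the odd part vanishes and $Y_{\alpha,\beta}\cap\+P(2\N)=Y_\alpha$, which is proper by Theorem~\ref{thm:Yalpha}; since the inclusion is continuous, Fact~\ref{f:inverse image Borel} forbids $Y_{\alpha,\beta}$ from lying in $\co\tbD_\alpha$. For $Z_{\alpha,\beta}$ the same restriction to $\+B_{\{0\}}\cong\PN$ gives $Z_{\alpha,\beta}\cap\+B_{\{0\}}=Z_\alpha\cap\+B_{\{0\}}$, and because membership in $B(T)$ is insensitive to the point $0$ (Proposition~\ref{p:Yalphabeta}(5), using $0\in e(\sigma)$ for all $\sigma$, whence $X\cup\{0\}\in B(T)\iff X\in B(T)$) this carries the proper $\tbD_\alpha$ complexity of $Z_\alpha$. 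The monotone reductions $Y_{\alpha,\beta}\leq_W Y_{\alpha,\gamma}$ and $Z_{\alpha,\beta}\leq_W Z_{\alpha,\gamma}$ (for $\beta<\gamma$) are built by keeping the even coordinates fixed (so $Y_\alpha$, and for $Z$ also $B(T)\cap\+B_{\{0\}}$, are preserved) and, on the odd coordinates, reducing the rank-$\beta$ family $\+A_\beta$ into the rank-$\gamma$ family $\+A_\gamma$; disjoint supports let the two halves be assembled into a single continuous map.

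\emph{Items 2 and 3: the negative parts, and Item 4.} The strict halves of Item 2 and the cross-incomparability of Item 3 are the substance, and I expect them to be the main obstacle. They cannot be settled by coarse invariants: all these families lie in $\tbD_\alpha\setminus\co\tbD_\alpha$, and since $Y_\alpha\subseteq Y_{\alpha,\beta}$ each family may already be hard for many lower classes $\tbD_\gamma$ ($\gamma<\alpha$), so hardness for a fixed $\tbD_\gamma$ does \emph{not} separate the degrees. Instead I follow Selivanov's fine analysis underlying Theorem~\ref{thm:Yalpha}: the marker $\+A_\beta$ raises the Wadge rank strictly with $\beta$, yielding $Y_{\alpha,\beta}<_W Y_{\alpha,\gamma}$ and $Z_{\alpha,\beta}<_W Z_{\alpha,\gamma}$, while any reduction between a $Y$-family and a $Z$-family is pushed — through the even-part restriction $\+P(2\N)$ and the $\+B_{\{0\}}$-restriction isolating $B(T)$ — down to a reduction between $Y_\alpha$ and $Z_\alpha$, contradicting their incomparability in Theorem~\ref{thm:Yalpha}. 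Finally, for $\alpha<\CK$ every ingredient is effective: the defining chains are computable (Proposition~\ref{p:decreasing chains PN}, Corollary~\ref{coro:PN A scattered}), Lemma~\ref{l:Dalpha operations} has an effective reading placing the families in $\tlD_\alpha(\PN)$, and the reductions above are effective by the effective halves of Theorem~\ref{thm:Dalphachains}; hence the effective counterparts of Items 2 and 3 follow, giving Item 4.
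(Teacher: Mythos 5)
Your item~1 membership argument (Lemma~\ref{l:Dalpha operations}(2) for $\+A_\beta$, then two applications of Lemma~\ref{l:Dalpha operations}(3) with separators $\+B_{\{1\}}$ and $B(T)\cap\+B_{\{0\}}$) matches the paper, and your positive reductions in item~2 are in the same spirit as the paper's explicit maps. But your properness argument has a gap. Citing Fact~\ref{f:inverse image Borel} for the inclusion $\+P(2\N)\hookrightarrow\PN$ only yields that $Y_\alpha$ would be in $\co\tbD_\alpha$ \emph{of the subspace} $\+P(2\N)$, whereas Theorem~\ref{thm:Yalpha} asserts properness in the space $\PN$; these are different statements. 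For the closed subspace $\+P(2\N)$ and infinite $\alpha$ this can be patched (glue the trace back using Lemma~\ref{l:Dalpha operations}(2) and then (3) with $L=\PN\setminus\+P(2\N)$ open), but for $Z_{\alpha,\beta}$ your strategy breaks down: $\+B_{\{0\}}$ is an \emph{open} subspace, the trace you obtain is $Z_\alpha\cap\+B_{\{0\}}$, not $Z_\alpha$ (the sets $e(\sigma)\setminus\{0\}$ witness the difference), and no cited result transfers properness through an open-subspace restriction --- the gluing pattern of Lemma~\ref{l:Dalpha operations}(3) is exactly reversed (the high-complexity piece sits inside the open set). The paper avoids all of this by exhibiting explicit continuous reductions in the full space: $f(X)=(X\cap2\N)\cup\{2n+3\mid 2n+1\in X\}$ reduces $Y_\alpha$ to $Y_{\alpha,\beta}$, and $X\mapsto f(X)$ or $f(X)\cup\{0\}$ according as $X\notin B(T)$ or $X\in B(T)$ reduces $Z_\alpha$ to $Z_{\alpha,\beta}$; properness then follows from Theorem~\ref{thm:Yalpha} and Fact~\ref{f:inverse image Borel} with no subspace issues.

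The more serious gap is that the negative halves of items~2 and~3 --- which you yourself identify as ``the substance'' --- are not proved at all: ``the marker $\+A_\beta$ raises the Wadge rank strictly'' restates the claim, and the plan to push a putative reduction between composite families ``down to a reduction between $Y_\alpha$ and $Z_\alpha$'' does not work, because a continuous reduction $\PN\to\PN$ need not respect the even/odd decomposition, so there is nothing to restrict. The paper's actual arguments are of a different nature. For strictness in item~2: if $f$ reduced $Y_{\alpha,\gamma}$ to $Y_{\alpha,\beta}$, the canonical decreasing $Y_{\alpha,\gamma}$-alternating chain $(X_\delta)_{\delta\leq\gamma}$ would map to a $Y_{\alpha,\beta}$-alternating chain; since $Y_{\alpha,\beta}\subseteq\+P(2\N)\cup\+P(2\N+1)$, the image chain is eventually trapped in one of these pieces, and one gets a contradiction from chain lengths: $Y_\alpha$ (finite sets) and $B(T)$ (an upset) admit no infinite decreasing alternating chains, while $\+A_\beta$ admits none longer than $\beta+1<\gamma+1$. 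For item~3: evaluate a putative reduction at the top point $\N$. E.g.\ if $f$ reduces $Y_{\alpha,\beta}$ to $Z_{\alpha,\delta}$, then $f(\N)\notin B(T)\cap\+B_{\{0\}}$, so either $f(\N)$ is finite (contradicting that infinite decreasing alternating chains must map below $f(\N)$) or $0\notin f(\N)$, in which case $f$ actually reduces $Y_{\alpha,\beta}$ to $\+A_\delta\in\tbD_\delta(\PN)$ with $\delta<\alpha$, contradicting the properness established in item~1. None of these ideas (chain-length bounds, the test point $\N$, reduction to the lower class $\+A_\delta$) appear in your proposal, so items~2--4 remain unestablished.
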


\begin{proof}
1. First, we show that $Y_{\alpha,\beta}\in\tbD_\alpha(\PN)$.
Since $\+P(2\N+1)$ is closed in $\PN$,
item 2 of Lemma~\ref{l:Dalpha operations} ensures that
$\+A_\beta\in\tbD_\beta(\PN)$.
Now, $\+A_\beta$ is included in the open family $\+B_{\{1\}}$
which is disjoint from $Y_\alpha$ since all sets in $Y_\alpha\subset\+P(2\N)$.
Item 3 of Lemma~\ref{l:Dalpha operations} ensures that
$Y_{\alpha,\beta}=\+A_\beta\cup Y_\alpha\in\tbD_\alpha(\PN)$.
We now prove that $Z_{\alpha,\beta}\in\tbD_\alpha(\PN)$.
The family $Y_\alpha$ is disjoint from $B(T)$ hence from $B(T)\cap\+B_{\{0\}}$.
Also, $B(T)\cap\+B_{\{0\}}$ is disjoint from $\+A_\beta$
since every set in $B(T)\cap\+B_{\{0\}}$
contains $0$ whereas all sets in $\+A_\beta$ are included in $2\N+1$.
Thus, $B(T)\cap\+B_{\{0\}}$ is disjoint from $Y_{\alpha,\beta}$.
Since $B(T)\cap\+B_{\{0\}}$ is open,
applying again item 3 of Lemma~\ref{l:Dalpha operations}, we see that
$Z_{\alpha,\beta}=(B(T)\cap\+B_{\{0\}})
\cup Y_{\alpha,\beta}$ is in $\tbD_\alpha(\PN)$.
Finally, notice that the map 
$f \colon X \mapsto (X \cap 2 \mathbb{N}) \cup \{ 2n+3 \mid 2n+1 \in X \}$
is continuous and reduces $Y_\alpha$ to $Y_{\alpha,\beta}$.
The map sending 
$X$ to $f(X)$ if $X \notin B(T)$ and 
to $f(X) \cup \{ 0 \}$ if $X \in B(T)$
is also continuous and reduces $Z_\alpha$  to $Z_{\alpha,\beta}$.
Since $Y_\alpha$ and $Z_\alpha$ are outside $\co\tbD_\alpha(\PN)$ 
(cf. Theorem~\ref{thm:Yalpha}) so are $Y_{\alpha,\beta}$ and $Z_{\alpha,\beta}$.

2.  The reducibilities $Y_{\alpha,\beta} \leq_W Y_{\alpha,\gamma}$ and
$Z_{\alpha,\beta} \leq_W Z_{\alpha,\gamma}$
can be both witnessed by a same continuous map $f:\PN\to\PN$,
$$
\quad \left\{\begin{array}{lrcl}
\textit{if }\beta \not\sim \gamma\textit{ then }
&f(X)&=& 
(X \cap 2\N) \cup \{ a_{\delta+1} \mid \delta < \beta \textit{ and }a_\delta\in X \}
                   \cup(X\cap \{ 1 \})
\\
\textit{if }\beta \sim \gamma\textit{ then }
&f(X)&=& 
(X \cap 2\N) \cup (X \cap (\{ a_\delta \mid \delta < \beta \} \cup \{ 1 \})) \cup M,
\\&\multicolumn{3}{l}{\textit{\hspace{1cm }
where $M = \{ a_\beta \}$ if $X$ meets $2\N+1$
and $M = \emptyset$ otherwise.}}
\end{array}\right.
\qquad
$$
We now show that the reducibilities are strict.
Towards a contradiction, suppose that there is a reduction $f$
from $Y_{\alpha,\gamma}$ to $Y_{\alpha,\beta}$.
Then the decreasing $Y_{\alpha,\gamma}$-alternating chain
$(X_\delta)_{\delta\leq\gamma}$
would be mapped onto a $Y_{\alpha,\beta}$-alternating chain
$(f(X_\delta))_{\delta\leq\gamma}$
(recall that by item 4 of Proposition~\ref{p:Yalphabeta},
$X_\beta=\{a_\delta\mid\delta\geq\beta\}\cup\{1\}$ ).
Let $\varepsilon\in\{0,1\}$ be least such that
$f(X_\varepsilon)\in Y_{\alpha,\beta}$.
Since $Y_{\alpha,\beta}\subset\+P(2\N)\cup\+P(2\N+1)$,
we have $f(X_\varepsilon)\in\+P(2\N)\cup\+P(2\N+1)$.
The chain being strictly decreasing, we see that
the family $\{f(X_\delta)\mid \varepsilon\leq\delta\leq\gamma\}$
is included either in $\+P(2\N)$ or in $\+P(2\N+1)$. 
Now, $Y_{\alpha,\beta}\cap\+P(2\N)=Y_\alpha$
contains no infinite decreasing chain.
Thus, $\{f(X_\delta)\mid \varepsilon\leq\delta\leq\gamma\}\subseteq\+P(2\N+1)$.
Since the chain $(f(X_\delta))_{2\leq\delta\leq\gamma}$
is also $Y_{\alpha,\beta}$-alternating and has length $\gamma+1$
(since $\gamma$ is infinite),
this contradicts the fact that $Y_{\alpha,\beta}\cap\+P(2\N+1)=\+A_\beta$
does not contain alternating chains longer than $\beta+1$.
To see that $Z_{\alpha,\beta}<_W Z_{\alpha,\gamma}$,
observe that $B(T)$ is an upset, hence it has no infinite decreasing
alternating chain.
Then argue as for $Y_{\alpha,\gamma}\not<_W Y_{\alpha,\beta}$.

3. Let us check that $Y_{\alpha,\beta}\not\leq_W Z_{\alpha,\delta}$.
Towards a contradiction, suppose that $f$ reduces $Y_{\alpha,\beta}$
to $Z_{\alpha,\delta}$.
Since $\N\notin Y_{\alpha,\beta}$ we have
$f(\N)\notin Z_{\alpha,\beta}$.
In particular, $f(\N)\notin B(T)\cap\+B_{\{0\}}$.
Hence, either $f(\N)\notin B(T)$ or $0\notin f(\N)$.
Suppose $f(\N)\notin B(T)$. Then $f(\N)\subseteq e(\sigma)$
for some $\sigma\in T$. In particular, $f(\N)$ is finite hence has finitely
many subsets.
This contradicts the fact that the infinite decreasing
$Y_{\alpha,\beta}$-alternating chains are mapped by f to infinite decreasing
$Z_{\alpha,\delta}$-alternating chains below $f(\N)$.
If, instead, $0 \notin f(N)$, then $0 \notin f(X)$ (for all $X$)
hence $f$ reduces $Y_{\alpha,\beta}$ to $Y_{\alpha,\delta}=\+A_\delta\cup Y_\alpha$
hence to $\+A_\delta$
(since all sets in $Y_\alpha$ contain $0$ whereas no $f(X)$ does).
This is already a contradiction since $\delta<\alpha$,
$Y_{\alpha,\beta}$ is a proper $\tlD_\alpha(\PN)$ set by item 1, 
and $A_\delta$ is a
$\tlD_\delta(\PN)$ set by the proof of item 1.
Finally, we check that $Z_{\alpha,\delta}\not\leq_W Y_{\alpha,\beta}$.
Since $\N\in Z_{\alpha,\delta}$ (it is in $B(T)$) we have
$f(\N)\in Y_{\alpha,\beta}$.
If $f(\N)\in Y_\alpha$ then $f(\N)$ is finite.
This contradicts the existence of infinite alternating decreasing chains.
Thus, $f(\N)\in \+A_\beta$ hence $f(\N)$ is included in $2\N+1$.
Since $f$ is increasing, $f$ takes values in $\+P(2\N+1)$.
Thus, $f$ reduces $Z_{\alpha,\delta}$ to $A_\beta$.
But $\beta<\alpha$, $Z_{\alpha,\delta}$ is a proper $\tlD_\alpha(\PN)$ set by item 1, 
and $A_\beta$ is a $\tlD_\beta(\PN)$ set by the proof of item 1,
 the same contradiction as before.

4. Since $\alpha<\CK$, the tree $T$ in Definition~\ref{def:BT} can be taken computable.
Then $Y_\alpha$ is a computable subfamily of $\+P_{<\omega}\N$
and $B(T)$ is an effective open family.
We can take a computable enumeration $(a_\delta)_{\delta<\alpha}$
as in Definition~\ref{def:Yalphabeta}.
Thus, the $Y_{\alpha,\beta}$'s and $Z_{\alpha,\gamma}$'s are in $\tlD_\alpha(\PN)$.
To see that for $\beta<\gamma$,
$Y_{\alpha,\beta}$ is effectively reducible to $Y_{\alpha,\gamma}$, and
$Z_{\alpha,\beta}$ is effectively reducible to $Z_{\alpha,\gamma}$,
observe that the map $f$ defined
in the proof of item 2 is effective.
\end{proof}

\section{Wadge Hardness in $\PN$ Versus Wadge Hardness in Other Spaces}
\label{s:PN versus}

Due to the universality property of $\PN$ (cf. Proposition~\ref{p:Baire PN universal}),
Wadge hardness relative to a Borel class of $\PN$  is stronger
than Wadge hardness relative to the same Borel class of many other spaces.

\subsection{Simple Wadge Complete Sets in $\PN$ for $\tbSigma^0_2$}

\begin{definition}\label{def:S2} 
Let\
$\+S_2=\{X\in\PN \mid \exists n\ (2n\in X\wedge 2n+1\notin X)\}$.
\end{definition}

\begin{proposition}\label{p:Si} 
The family $\+S_2$ is Wadge complete for $\tbSigma^0_2(\PN)$
and effectively Wadge complete for $\Sigma^0_2(\PN)$.
\end{proposition}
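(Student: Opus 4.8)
The plan is to produce an explicit continuous reduction of an arbitrary $\tbSigma^0_2(\PN)$ family to $\+S_2$ and then to verify that it is effective when the source family is in $\Sigma^0_2(\PN)$. I would first record that $\+S_2$ sits in the target class: writing
$$\+S_2=\bigcup_{n\in\N}\big(\+B_{\{2n\}}\setminus\+B_{\{2n+1\}}\big)$$
exhibits it as a countable union of differences of basic open families, so $\+S_2\in\tbSigma^0_2(\PN)$; since the index map $n\mapsto(\{2n\},\{2n+1\})$ is computable, in fact $\+S_2\in\Sigma^0_2(\PN)$.

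For hardness, let $\+X\in\tbSigma^0_2(\PN)$ and fix a representation $\+X=\bigcup_{n\in\N}(\+U_n\setminus\+V_n)$ with every $\+U_n,\+V_n$ open. The heart of the argument is the map $f:\PN\to\PN$ given by
$$f(X)=\{2n\mid X\in\+U_n\}\cup\{2n+1\mid X\in\+V_n\}.$$
The parity bookkeeping is exactly what makes this work: the even slot $2n$ records entry into $\+U_n$ and the odd slot $2n+1$ records entry into $\+V_n$, so that the monotone existential test defining $\+S_2$ reproduces the union of differences defining $\+X$. Concretely, $f(X)\in\+S_2$ iff some $n$ has $2n\in f(X)$ and $2n+1\notin f(X)$, i.e. $X\in\+U_n$ and $X\notin\+V_n$, i.e. $X\in\+X$; thus $f$ reduces $\+X$ to $\+S_2$.

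The step requiring care, and the only genuine obstacle, is the Scott continuity of $f$, which uses the openness of the $\+U_n,\+V_n$ in two ways. For monotonicity, note open sets are upsets: if $X\subseteq Y$ and $2n\in f(X)$ then $X\in\+U_n$ gives $Y\in\+U_n$, hence $2n\in f(Y)$, and symmetrically for odd slots, so $f(X)\subseteq f(Y)$. For preservation of directed suprema, use the defining property of Scott-open sets: if $X=\sqcup_i X_i$ with $\{X_i\}$ directed and $2n\in f(X)$, then $X\in\+U_n$ forces $X_i\in\+U_n$ for some $i$, whence $2n\in f(X_i)$; the odd case is identical with $\+V_n$. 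Therefore $f(\sqcup_i X_i)=\bigcup_i f(X_i)$, $f$ is continuous, and combined with $\+S_2\in\tbSigma^0_2(\PN)$ this gives Wadge completeness.

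For the effective statement, suppose $\+X\in\Sigma^0_2(\PN)$, so $\+U_n=\bigcup_{i\in I_n}O_i$ and $\+V_n=\bigcup_{j\in J_n}O_j$ come with computably enumerable presentations. Then for a finite set $A$ the relation $A\in\+U_n$ holds iff $A\in O_i$ for some $i\in I_n$, which is computably enumerable, and likewise for $\+V_n$; hence $\{(A,m)\mid A\in\FIN,\ m\in f(A)\}$ is computably enumerable. By Example~\ref{ex:PN effective} this makes $f$ domain-effective, and since $\PN$ is $\omega$-algebraic with the canonical presentation by compact elements, Fact~\ref{f:effective algebraic domain} upgrades this to topology-effectiveness. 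Thus $f$ is an effective reduction and $\+S_2$ is effectively Wadge complete for $\Sigma^0_2(\PN)$.
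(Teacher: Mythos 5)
Your proof is correct and takes essentially the same route as the paper: the reduction is identical, since the paper's $f(X)=\{2n\mid\exists A\in\+A_n\ A\subseteq X\}\cup\{2n+1\mid\exists B\in\+B_n\ B\subseteq X\}$ is exactly your $f(X)=\{2n\mid X\in\+U_n\}\cup\{2n+1\mid X\in\+V_n\}$ once each open $\+U_n,\+V_n$ is written via a generating family of finite sets, and the parity bookkeeping and effectivization are the same. Your added details (the explicit Scott-continuity check and the passage from domain-effectiveness to topology-effectiveness via Fact~\ref{f:effective algebraic domain}) merely spell out what the paper compresses into ``observe that $f$ is continuous'' and ``so that $f$ is effective.''
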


\begin{proof}
From its definition it is clear that $\+S_2$ is $\Sigma^0_2(\PN)$.
Consider a $\tbSigma^0_2(\PN)$ family
$\+X=\bigcup_{n\in\N} \{Z\mid\exists A\in\+A_n\ Z\supseteq A\}
                \setminus \{Z\mid\exists B\in\+B_n\ Z\supseteq B\}$,
where the $\+A_n$'s and $\+B_n$'s are families of finite sets.
Define $f:\PN\to\PN$ as
$$
f(X)\ =\ \{2n\mid\exists A\in\+A_n\ A\subseteq X\}
\  \cup\ \{2n+1\mid\exists B\in\+B_n\ B\subseteq X\}.
$$
Then, $X\in\+X \ \Leftrightarrow \ 
\exists n\ (2n\in f(X)\wedge 2n+1\notin f(X)) 
\ \Leftrightarrow \ f(X)\in\+S_2$.
Observe that $f$ is continuous hence it is  a reduction of $\+X$ to $\+S_2$.
In case $\+X$ is $\Sigma^0_2(\PN)$ then
the $\+A_n$, $\+B_n$ are uniformly computably enumerable
so that $f$ is effective.
\end{proof}

\subsection{Alternating Increasing Chains for $\tbSigma^0_2(\PN)$ Families}

As stated in Remark~\ref{rk:no increasing alternating},
in $\omega$-continuous domains  there are no increasing
$A$-alternating chains of length $\omega$ when $A$ is $\tbDelta^0_2$.
This is not  true anymore for $A$ in $\tbSigma^0_2$.

\begin{proposition}\label{p:long chain}
For every ordinal $\alpha<\omega_1$, there exists in $\PN$ an
$\+S_2$-alternating  increasing chain with length $\alpha$.
If $\alpha<\CK$ then the sequence can be taken computable.
\end{proposition}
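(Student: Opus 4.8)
The plan is to construct the chain $(X_\beta)_{\beta<\alpha}$ directly, using each pair of coordinates $\{2m,2m+1\}$ as a single ``switch'' for membership in $\+S_2$: placing $2m$ in a set while $2m+1$ stays out turns membership on (it creates the witness $n=m$ of Definition~\ref{def:S2}), and later adding $2m+1$ turns it off again. Only even ordinals will own a switch. Concretely, I would fix an injection $\gamma\mapsto m_\gamma$ from the countable set $E=\{\gamma<\alpha \mid \gamma\textit{ even}\}$ into $\N$ and set, for every $\beta<\alpha$,
\[
X_\beta=\{\,2m_\gamma \mid \gamma\in E,\ \gamma<\beta\,\}\ \cup\ \{\,2m_\gamma+1 \mid \gamma\in E,\ \gamma+1<\beta\,\}.
\]
Equivalently, $X_\beta$ is built by recursion: $X_0=\emptyset$; at a step $\gamma\to\gamma+1$ with $\gamma$ even one adds $2m_\gamma$ (creating a witness); at a step $\gamma+1\to\gamma+2$ one adds $2m_\gamma+1$ (neutralising it); and at a limit $\lambda$ one takes the union of the earlier sets.

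Then I would verify the three required properties. Since $2m_\gamma\in X_\beta\iff\gamma<\beta$ and $2m_\gamma+1\in X_\beta\iff\gamma+1<\beta$, the conditions defining $X_\beta$ are monotone in $\beta$, and strictness follows because for $\beta<\delta$ at least one new coordinate (either some $2m_\gamma$ with $\beta\le\gamma<\delta$, or the neutraliser $2m_{\gamma'}+1$ at an isolated successor step) enters between stage $\beta$ and stage $\delta$. For the alternation, observe that a switch $\gamma$ is \emph{active} in $X_\beta$ (that is $2m_\gamma\in X_\beta$ while $2m_\gamma+1\notin X_\beta$) exactly when $\gamma<\beta\leq\gamma+1$, i.e. when $\beta=\gamma+1$; hence $X_\beta\in\+S_2$ if and only if $\beta$ is odd, so membership flips at every successor.

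The delicate point, which I expect to be the crux, is the behaviour at limits: I must check that $X_\lambda\notin\+S_2$ for every limit $\lambda<\alpha$ (consistently with $\lambda$ being even). This holds because every switch is neutralised at the step immediately after its creation, so if $2m_\gamma\in X_\lambda$ then $\gamma<\lambda$, whence $\gamma+1<\lambda$ and therefore $2m_\gamma+1\in X_\lambda$ as well; no witness survives to the union. This is exactly where the hypothesis $\+S_2\in\tbSigma^0_2$ but $\+S_2\notin\tbPi^0_2$ is essential: by Remark~\ref{rk:no increasing alternating} an infinite increasing alternating chain is impossible for a $\tbDelta^0_2$ set because the supremum is forced to agree in \emph{both} directions, whereas for $\+S_2$ only the ``$\exists$'' direction is constrained, so limits are forced to fall \emph{outside} $\+S_2$ and the neutralisation scheme makes them do so. The parity convention (limits, being even, lie outside $\+S_2$) is thus dictated by the topology rather than chosen.

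For the effective part I would argue that, when $\alpha<\CK$, fixing a computable presentation $(\N,R)$ of $\alpha$ whose parity function is computable --- as is standard for the effective difference hierarchy and is exactly the bookkeeping already used in Proposition~\ref{p:decreasing chains PN} --- and taking $m_\gamma=\rho^{-1}(\gamma)$ (the $R$-code of $\gamma$, automatically injective on $E$), the closed form for $X_\beta$ above becomes a computable membership relation $\{(n,i)\mid i\in X_{\rho(n)}\}$: deciding whether $2m$ or $2m+1$ lies in $X_{\rho(n)}$ only requires testing the parity of $\rho(m)$ and comparing $\rho(m)$ (or $\rho(m)+1$) with $\rho(n)$ in $R$. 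By Definition~\ref{def:effective domain} (cf. Example~\ref{ex:PN effective}) this is precisely the statement that the $\alpha$-sequence $(X_\beta)_{\beta<\alpha}$ is computable, which finishes the effective claim.
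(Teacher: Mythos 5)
Your construction is correct and is essentially the paper's own proof: both use the pairs $\{2m,2m+1\}$ as switches attached injectively to ordinals, switched on at one successor step and neutralised at the next.

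One side remark you make is wrong, however, and worth flagging. You claim that limit stages are \emph{forced} to fall outside $\+S_2$, so that the parity convention is ``dictated by the topology rather than chosen.'' The paper's own chain refutes this: there, the element indexed by a limit $\lambda$ is \emph{not} the union of its predecessors but that union together with a fresh un-neutralised witness $2a_\lambda$, so it lies \emph{inside} $\+S_2$. What the $\tbSigma^0_2$ argument of Remark~\ref{rk:no increasing alternating} forces out of $\+S_2$ is the supremum $\bigcup_{\beta<\lambda}X_\beta$ itself; a chain of length $\alpha$ is under no obligation to place that supremum at position $\lambda$. Since your chain does take unions at limits, your direct verification there is valid and the proof stands; only the claim of necessity is mistaken. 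On the effective part, note that testing $\rho(m)+1<\rho(n)$ requires the successor relation of the chosen computable copy of $\alpha$ to be decidable, not just its parity function; such copies exist for every $\alpha<\CK$ (e.g.\ via Kleene notations), which is the same unstated bookkeeping the paper's terse proof relies on.
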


\begin{proof}
Let $(a_\beta)_{\beta<\alpha}$ be an injective enumeration of $\N$.
Let
\smallskip\\
\centerline{$X_{2\beta}=\{2a_\gamma,2a_\gamma+1\mid\gamma<\beta\}\cup\{2a_\beta\}
\text{ and } X_{2\beta+1}=X_{2\beta}\cup\{2a_\beta+1\}$.}
\smallskip
Clearly, $(X_\beta)_{\beta<\alpha}$ is increasing,
and for each $\beta<\alpha$,  
$X_{2\beta}\in\+S_2$ but  $X_{2\beta+1}$ is not.
\end{proof}

\begin{remark}
As a corollary, we see that the  space $\mix$,
where all increasing chains have length at most $\omega+1$, 
contains no Wadge hard set for $\tbSigma^0_2(\PN)$.
\end{remark}
The above proposition can be  improved.
We first define a  notion of alternating chain.

\begin{definition}\label{def:alt chain}
Let $(E,\leq)$ be a linear ordering and $A\subseteq E$.
Let $(P,\sqsubseteq)$ be a partially ordered set and $H\subseteq P$.
\begin{enumerate}
\item
$A$ is an alternation on $E$ if, for all $a,b\in E$, the following conditions are satisfied.

\begin{itemize}
\item  If $b$ is successor to $a$
then $a\in A$ if and only if~$b\notin A$.

\item  If $a$ has no predecessor nor successor then $a\notin A$.
\end{itemize}
\item 
An $(H,A)$-alternating $E$-chain in $P$ is an $E$-sequence $(x_a)_{a\in E}$ of elements of $P$
that satisfies the following conditions: for all $a,b\in E$,
\begin{itemize}
\item  $a\leq b$ (in the linear order $E$) if and only if $x_a\sqsubseteq x_b$ (in the partial order $P$),

\item  $x_a\in H$ if and only if $a\in A$.
\end{itemize}
\end{enumerate}
\end{definition}

We now define a very rich  linear order with respect to alternation.

\begin{definition} \label{def:worst}
Let $(\+R,\leq)$ be the totally ordered set obtained from $\R$
by replacing each rational number by a chain of type $\Z$.
Formally, $\+R=(\R\setminus \Q)\cup(\Q\times \Z)$
with the following order:
for $x \in \R\setminus \Q$, $q,q'\in\Q$, $z,z'\in\Z$,
$$
x<(q,z) \iff x<q, \ \
(q,z)< x\iff q<x, \ \
(q,z)<(q',z') \iff q<q'  \mbox{ or }  (q=q' \mbox{ and } z<z')\ .
$$
\end{definition}

\begin{remark}\label{rk:worst}
The above linear order is 
very rich in the following sense.
In a linear order $(E,\leq)$ let $\sim_E$ 
be the equivalence 
relation such that $x\sim_E y$
if and only if there are finitely many elements between $x$ and $y$
(i.e. we identify an element with its successor if there is one).
If $(E,\leq)$ has countably many successor pairs and its quotient is embedded in $\R$
then $(E,\leq)$ is embedded in $\+R$. 
\end{remark}

\begin{proposition}\label{p:worst}
Le $P$ be an $\omega$-continuous domain and $H\subseteq P$.
\begin{enumerate}
\item
If $H$ is Wadge hard for $\tbSigma^0_2(\PN)$
then $P$ contains an $(H,\+A)$-alternating $\+R$-chain
for every alternation $\+A$ on $\+R$ (cf. Definition~\ref{def:worst}).
In particular, the set $P\setminus H$ has cardinality $2^{\aleph_0}$. 
\item
Suppose $P$ is an $\omega$-continuous domain
and fix some presentation $(p_i)_{i\in\N}$ of $P$.
If $H\subseteq P$ is effectively Wadge hard for $\Sigma^0_2(\PN)$
then,
for every alternation $\+A$ on $\+R$ which is computable
as a subset of $\Q\times\Z$,
$P$ contains an  $(H,\+A)$-alternating $\+R$-chain $(u_\xi)_{\xi\in\+R}$ such that
the sets
$$
\{(q,n,i)\mid q\in\Q, n\in\Z, p_i\ll u_{(q,n)}\}
\text{ \ and \ }
\{(q,i)\mid q\in\Q, \forall x\in\R\setminus\Q\ (q<x\Rightarrow p_i\ll u_x)\}
$$ 
are both computably enumerable.
In particular, the set $P\setminus H$ has cardinality $2^{\aleph_0}$. 
\end{enumerate}
\end{proposition}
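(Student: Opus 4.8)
The plan is to produce the required alternating chain first inside $\PN$ for the complete set $\+S_2$ of Definition~\ref{def:S2}, and then transport it to $P$ by a Wadge reduction. Since $\+S_2$ is Wadge complete for $\tbSigma^0_2(\PN)$ (Proposition~\ref{p:Si}), Wadge hardness of $H$ yields a continuous $f\colon\PN\to P$ with $\+S_2=f^{-1}(H)$, so it suffices to build a strict order embedding $\xi\mapsto X_\xi$ of $\+R$ into $(\PN,\subseteq)$ with $X_\xi\in\+S_2\iff\xi\in\+A$; then $u_\xi:=f(X_\xi)$ is the candidate chain. For the construction I would fix a bijective enumeration $(q_k)_{k\in\N}$ of $\Q$ and split $\N$ into pairwise disjoint infinite blocks $B_k$, one per $q_k$, each organised as a $\Z$-indexed sequence of $\+S_2$-pairs $\{2m,2m+1\}$. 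Writing $\pi\colon\+R\to\R$ for the projection collapsing each $\Z$-chain, the contribution of $B_k$ to $X_\xi$ is empty when $\pi(\xi)<q_k$, is all of $B_k$ (all pairs complete, hence balanced) when $\pi(\xi)>q_k$, and when $\xi=(q_k,z)$ consists of all pairs below $z$ completed together with the single even half-pair at level $z$ if $(q_k,z)\in\+A$ and the complete pair otherwise. With $X_\xi=\bigcup_k(\textrm{contribution of }B_k)$, within each $\Z$-chain the set grows by at least one element per step while toggling the presence of an unbalanced pair, so $X_\xi\in\+S_2$ exactly when $\xi$ marks a $\Z$-chain element, i.e. exactly when $\xi\in\+A$; at irrational positions every block is empty or complete, so $X_\xi\notin\+S_2$, matching that $\+A$ contains no irrational.

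Next I would check $\xi\mapsto X_\xi$ is a strict order embedding: monotonicity is clear since each block contribution is increasing along $\+R$, and for strictness, if $\pi(\xi)<\pi(\xi')$ a rational $q_k$ strictly between has empty block in $X_\xi$ and complete block in $X_{\xi'}$, while if $\pi(\xi)=\pi(\xi')$ the two points share a $\Z$-chain and the set strictly grows between consecutive levels. The transport is then order-theoretic. As $f$ is monotone, $\xi\le\xi'$ gives $u_\xi\sqsubseteq u_{\xi'}$. For strictness and the converse I use that any $\xi<\xi'$ in $\+R$ are separated by a point of opposite $\+A$-parity: an adjacent pair already has opposite membership (Definition~\ref{def:alt chain}), and otherwise some $\eta$ with $\xi<\eta<\xi'$ can be chosen (a level in an intervening $\Z$-chain) with $\eta\in\+A\iff\xi\notin\+A$. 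Since $u_\zeta\in H\iff\zeta\in\+A$, such $\eta$ forces $u_\xi\neq u_\eta$, whence $u_\xi\sqsubset u_\eta\sqsubseteq u_{\xi'}$; antisymmetry gives $u_\xi\sqsubseteq u_{\xi'}\iff\xi\le\xi'$. Thus $(u_\xi)_{\xi\in\+R}$ is an $(H,\+A)$-alternating $\+R$-chain. The images of the $2^{\aleph_0}$ irrationals are pairwise distinct elements of $P\setminus H$, and $|P|\le2^{\aleph_0}$ since $P$ embeds into $\PN$ (Fact~\ref{f:PN}); hence $|P\setminus H|=2^{\aleph_0}$.

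For the effective part I would make the same chain computable, which is possible because $\Q$, $\Z$ and the marking $\+A$ are computable, and because $\+S_2\in\Sigma^0_2(\PN)$ is effectively complete (Proposition~\ref{p:Si}), so effective hardness supplies an effective, hence domain-effective (Proposition~\ref{p:effective domain and topo}), reduction $f$ with $\{(A,i):p_i\ll f(A)\}$ c.e.\ (Example~\ref{ex:PN effective}). By interpolation (Fact~\ref{f:interpolation}) and Scott continuity, $p_i\ll u_\xi=f(X_\xi)\iff\exists\textrm{ finite }A\subseteq X_\xi\ (p_i\ll f(A))$. Since each $X_{(q,n)}$ is computable, the first set $\{(q,n,i):p_i\ll u_{(q,n)}\}$ is c.e. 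For an irrational $x$ one has $X_x=\bigcup_{q_k<x}B_k$, so $p_i\ll u_x\iff\exists\textrm{ finite }A\ (A\subseteq X_x\wedge p_i\ll f(A))$; writing $\beta(A)=\max\{q_k:B_k\cap A\neq\emptyset\}$ this reads $\beta(A)<x$, and the second set becomes $\{(q,i):\inf\{\beta(A):A\in E_i\}\le q\}$ with $E_i=\{A:p_i\ll f(A)\}$ c.e.

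The delicate point is exactly this last set: ``$\inf$ of a c.e.\ set of rationals $\le q$'' is a priori only $\Pi^0_2$, and with a black-box reduction the bad case does occur, namely $p_i\ll f(W_{q_k})$ for every rational $q_k>q^\ast$ but not at $q^\ast$ itself (where $W_q=\bigcup_{q_k\le q}B_k$), because a Scott-continuous $f$ need not preserve the infimum $W_{q^\ast}=\bigcap_{x>q^\ast}X_x$. The resolution I propose is to not compose with an arbitrary reduction but to force right-continuity at rationals: design the effective source so that any effective reduction of it to $H$ must already satisfy $p_i\ll f(W_{q^\ast})$ whenever $p_i\ll f(X_x)$ for all irrational $x>q^\ast$. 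Concretely I would nest the cut-encoding, making activation of a block $B_k$ in the $\Sigma^0_2$ structure depend on the lower blocks, so that every finite witness $A\in E_i$ may be replaced by one whose $\beta(A)$ is attained at a cut rational; then $\inf\{\beta(A):A\in E_i\}$ is attained and $\{(q,i):\exists A\in E_i\ \beta(A)\le q\}$ is the desired c.e.\ description. Making this forcing precise, so that it defeats every effective reduction simultaneously, is the main obstacle; once it is in place the second set is c.e., and as before the irrational images give $|P\setminus H|=2^{\aleph_0}$.
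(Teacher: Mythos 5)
Your part 1 is correct and is essentially the paper's own proof. The paper builds the same chain in $\PN$: a block of $\+S_2$-pairs for each rational $q$, indexed by $\Z$, with a phase bit $\varepsilon_q$ playing the role of your ``half-pair at level $z$ iff $(q,z)\in\+A$'' rule, full blocks below a cut and nothing above; it then transports the chain into $P$ through a continuous reduction of $\+S_2$ to $H$, which exists by hardness and Proposition~\ref{p:Si}. If anything you are more careful than the paper: you verify the order-embedding property for arbitrary pairs $\xi<\xi'$ (the paper only spells out strictness between two irrationals), your block bookkeeping silently corrects a slip in the paper's formulas (where $X_x=\bigcup_{q<x}E_q$ is written with $E_q\subseteq 2\N$, i.e.\ without the odd partners, which as stated would put every $X_x$ in $\+S_2$), and you justify the upper bound $|P|\le 2^{\aleph_0}$ via the embedding of Fact~\ref{f:PN}. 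Your argument for the first c.e.\ set in part 2 (topology-effective reduction, hence domain-effective by Proposition~\ref{p:effective domain and topo}; $p_i\ll f(X)$ iff $p_i\ll f(A)$ for some finite $A\subseteq X$, by Fact~\ref{f:interpolation}) is also correct.

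The gap is exactly where you place it: the second c.e.\ set in part 2. Your proposal reduces its membership to the condition $\inf\{\beta(A):A\in E_i\}\le q$, which is a priori only $\Pi^0_2$, and your repair --- engineering the $\Sigma^0_2$ source so that every effective reduction to $H$ has attained infima at the cut rationals --- is left as a sketch whose crucial property is precisely what would need to be proved; you concede this yourself. The difficulty is genuine and not an artifact of your bookkeeping: take $H=\+S_2$, recode $\+S_2$ on the even-indexed pairs of the target and use the odd-indexed pairs $\pi_e=\{4e+2,4e+3\}$ as balanced markers, switched on by effectively open sets $V_e$ that assert, through the block at rational $2^{-j}$, that $W_e$ has at least $j$ elements; the result is a legitimate effective reduction $f$ of $\+S_2$ to $\+S_2$ for which $(0,\pi_e)$ lies in the second set of the chain $u_\xi=f(X_\xi)$ iff $W_e$ is infinite, so that set is $\Pi^0_2$-hard and not c.e. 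Hence no argument treating the reduction as a black box can close part 2. You should know, however, that the paper does not close it either: its entire effectivization is the sentence ``observe that $q\mapsto\varepsilon_q$ is computable and take a computable partition and computable bijections,'' which yields the computable chain in $\PN$ and, implicitly, the first c.e.\ set, but never addresses why the second set is c.e.\ after composing with an arbitrary effective reduction. So your proposal stops short at exactly the point where the paper's proof is silent; the difference is that you identified the problem, but as it stands your proposal (like the paper's argument) does not prove the second c.e.\ claim of part 2.
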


\begin{note}\label{note:worst}
It is known that any Borel subset of a Polish space
is either countable or has cardinality $2^{\aleph_0}$.
The same is true in any quasi-Polish space,
hence in any $\omega$-continuous domain, since its Borel sets
are those of a Polish topology on the same space
(de Brecht 2011).
Since the Cantor space $\cantor$ is $T_1$, any singleton set is closed.
Hence, any set with countable complement is in $\tbPi^0_2(\cantor)$
and it cannot be Wadge hard for $\tbSigma^0_2(\cantor)$.
This easy argument, showing that the complement of any set 
$ A \subseteq \cantor$  which is Wadge hard for 
$\tbSigma^0_2(\cantor)$  has cardinality $2^{\aleph_0}$,
breaks down in the Scott domain $\PN$. If $X\in \PN$, $X\not=\emptyset$ then the
singleton family $\{X\}$ is in $\tbPi^0_2(\PN)$
but it is not closed in $\PN$. Thus,  a family with a countable complement need not
to be in  $\tbPi^0_2(\PN)$.
\end{note}

\begin{proof}[Proof of Proposition~\ref{p:worst}.]
1. For every alternation $\+A$ on $\+R$,
we construct in $\PN$ an $(\+S_2,\+A)$-alternating $\+R$-chain
$(X_\xi)_{\xi\in\+R}$.
For every $q\in\Q$, let $\varepsilon_q=0$ if $(q,0)\in \+A$
and  $\varepsilon_q=1$ if $(q,0)\notin \+A$.
Let $(E_q)_{q\in\Q}$ be a partition of $2\N$ into infinite sets.
For every $q\in\Q$, let $\varphi_q:\Z\to E_q$ be some bijective map.
The $\+R$-chain is defined as follows.
\\
For $q\in\Q, n\in\Z$, $x\in\R\setminus \Q$,
\\
\centerline{$\begin{array}{rcl}
X_{(q,2n+\varepsilon_q)} &=& (\bigcup_{r\in\Q, r<q}  E_r)
\cup
\{\varphi_q(p), \varphi_q(p)+1 \ \mid\ p\in\Z, p<n\}
\\
X_{(q,2n+\varepsilon_q+1)} &=& X_{(q,2n)} \cup  \{\varphi_q(n)\} 
\\
X_x  &=& \bigcup_{q\in\Q, q<x}  E_q
\end{array}$}
\\
It is easy to check that this chain is $(\+S_2,\+A)$-alternating.
Suppose now that $H\subseteq P$ is Wadge hard for
$\tbSigma^0_2(\PN)$. 
Let $f:\PN\to P$ be continuous such that
$f^{-1}(H)=\+S_2$.
Then $f$ maps this $(\+S_2,\+A)$-alternating $\+R$-chain
onto an $(H,\+A)$-alternating $\+R$-chain in~$P$.
If $x,y\in\R\setminus\Q$ and $x<y$ then
$X_x\subset X_{q,0}\subset X_{q,1}\subset X_y$.
Applying the  reduction $f$, we get
$f(X_x)\sqsubseteq f(X_{q,0})\sqsubseteq f(X_{q,1})\sqsubseteq f(X_y)$.
Since $X_{q,0}\in\+S_2\iff X_{q,1}\notin\+S_2$ we have
$f(X_{q,0})\in H\iff f(X_{q,1})\notin H$
hence $f(X_{q,0})\sqsubset f(X_{q,1})$ and $f(X_x)\sqsubset f(X_y)$.
Thus, $P\setminus H$ has cardinality $2^{\omega}$ since it contains the 
pairwise distinct elements $f(X_x)$'s,~$x\in\R\setminus\Q$.
  
2. Observe that the map $q\mapsto\varepsilon_q$ is computable
and take a computable partition $(E_q)_{q\in\Q}$ and a computable sequence
of bijections~$\varphi_q$.
\end{proof}

\subsection{Classical Families Hard for $\tbSigma^0_2(\cantor)$
but Not Hard for $\tbSigma^0_2(\PN)$}

A priori, modulo the obvious
bijection between the Cantor space $\cantor$ and $\PN$,
families of $\cantor$
that are Wadge hard for a class in $\cantor$ may seem good candidates
to be also hard families for the same class  in~$\PN$.
This is not the case.
We state here some counterexamples.
For results about the Wadge hardness of classical $\tbSigma^0_3(\PN)$  families 
for various classes and spaces, we refer the reader to \cite{jsl2}.

\begin{theorem}\label{thm:FIN}
Let $\theta:\PN\to\cantor$ be the bijection  which maps a set to its characteristic function.
Then,
\begin{enumerate}
\item $\theta(\FIN)$ is Wadge $\tbSigma^0_2(\cantor)$-complete
and effectively Wadge $\Sigma^0_2(\cantor)$-complete.
In particular, $\theta(\FIN)$ is in
$\Sigma^0_2(\cantor)\setminus\tbPi^0_2(\cantor)$.

\item 
$\FIN\in\Sigma^0_2(\PN)\setminus\tbPi^0_2(\PN)$.

\item
$\FIN$ is Wadge hard for $\tbSigma^0_2(\cantor)$ 
and for $\tbPi^0_1(\PN)$.
\item
$\FIN$ Wadge reduces no non-empty
$\tbSigma^0_1(\PN)$ family.
\item
No subfamily of $\FIN$ is Wadge hard for $\tbSigma^0_2(\PN)$.
\end{enumerate}
Items  3,4,5 hold with effective versions.
\end{theorem}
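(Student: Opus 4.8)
The plan is to treat the five items by keeping the two ambient topologies strictly apart and exploiting that, although $\theta$ is a bijection, it is not a homeomorphism: the inverse $\theta^{-1}\colon\cantor\to\PN$ \emph{is} continuous (the preimage of $\+B_{\{m\}}$ is the clopen $\{x\mid x(m)=1\}$), whereas $\theta$ is not.

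For item~1, $\theta(\FIN)$ is the set of eventually-$0$ sequences, a countable union of closed sets, hence (effectively) $\Sigma^0_2(\cantor)$. To see $\theta(\FIN)\notin\tbPi^0_2(\cantor)$ I would use Baire category: $\theta(\FIN)$ is countable and dense, so if it were a $G_\delta$ it would be comeager, contradicting that a countable set in a perfect Polish space is meager. For effective completeness within $\cantor$ I would reduce an arbitrary $B\in\Sigma^0_2(\cantor)$ directly: writing $\cantor\setminus B=\bigcap_nU_n$ with $U_n$ open and decreasing, I would build $g\colon\cantor\to\cantor$ by the standard staged approximation, letting $m(x,s)$ be the least $n$ with $x\restriction s\notin U_n^{[s]}$ and emitting a fresh $1$ exactly when $m$ strictly increases; since confirmations are permanent, $m$ stabilizes iff $x\in B$ and tends to $\infty$ iff $x\notin B$, so $g(x)$ is eventually $0$ iff $x\in B$. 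Each output bit is clopen-determined, so $g$ is continuous and is effective when $B$ is, covering both the classical and the effective halves. For item~2, $\FIN=\bigcup_n\+P(\{0,\dots,n\})$ is a countable union of Scott-closed sets, hence (effectively) $\Sigma^0_2(\PN)$; and $\FIN\notin\tbPi^0_2(\PN)$ because its complement (the infinite sets) cannot be $\Sigma^0_2(\PN)$: the increasing chain $\{0,\dots,n\}$ has supremum $\N$, which is infinite, while no term is, contradicting the property recalled in Remark~\ref{rk:no increasing alternating} that a $\Sigma^0_2$ family containing the supremum of an increasing chain contains a final segment of it.

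For item~3 the two hardness claims use different maps. For $\tbSigma^0_2(\cantor)$, with $B$ and $U_n$ as above I would send $x$ to the initial segment $h(x)=\{n\mid x\in U_n\}\in\PN$; since the $U_n$ are open and decreasing, $\{x\mid n\in h(x)\}=U_n$ is open, so $h$ is continuous (and effective), and $h(x)$ is finite iff some $U_n$ is escaped iff $x\in B$, i.e.\ $h^{-1}(\FIN)=B$. For $\tbPi^0_1(\PN)$, given a closed $\+C$ with open complement $\+O=\PN\setminus\+C$, the two-valued Scott-continuous map $f$ with $f(X)=\N$ when $X\in\+O$ and $f(X)=\emptyset$ otherwise (continuous because $\{X\mid n\in f(X)\}=\+O$ is open) satisfies $f^{-1}(\FIN)=\+C$. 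For item~4, reading the statement as $\FIN\not\le_W\+O$ for every nonempty open $\+O$, suppose $f^{-1}(\+O)=\FIN$; applying $f$ to $\{0,\dots,n\}\nearrow\N$ gives $f(\{0\})\in\+O$, and by monotonicity together with $\+O$ being an upset, $f(\N)\supseteq f(\{0\})$ forces $f(\N)\in\+O$, contradicting $\N\notin\FIN$. All the maps above are effective on effective inputs, and the non-reducibility of item~4 holds \emph{a fortiori} for effective reductions.

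Item~5 is the crux, and is where the Scott-space failure of a Wadge hardness theorem becomes visible: $\FIN$ is properly $\Sigma^0_2(\PN)$ yet neither it nor any subfamily is $\Sigma^0_2(\PN)$-hard. Suppose $\+G\subseteq\FIN$ were Wadge hard for $\tbSigma^0_2(\PN)$. Since $\+S_2$ is $\tbSigma^0_2(\PN)$-complete (Proposition~\ref{p:Si}), fix a continuous $f$ with $f^{-1}(\+G)=\+S_2$. I would then take the $\+S_2$-alternating increasing chain of length $\omega+1$ from Proposition~\ref{p:long chain}, whose construction has the key feature that the limit term $X_\omega$ lies in $\+S_2$ and strictly above every even-indexed term $X_{2n}\in\+S_2$, with the odd terms outside $\+S_2$. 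Because consecutive in-$\+S_2$ terms are separated by a term outside $\+S_2$, the images $f(X_{2n})\in\+G\subseteq\FIN$ form a \emph{strictly} increasing chain of finite sets, whose union is infinite; by monotonicity this union is contained in $f(X_\omega)$, forcing $f(X_\omega)$ to be infinite and hence outside $\+G$, contradicting $X_\omega\in\+S_2$. The effective version is identical, using the effective completeness of $\+S_2$. The hardest point to get right is exactly this last argument: one must use a chain that \emph{re-enters} $\+S_2$ above an infinite accumulation of finite images, which is possible only because $\+S_2$, being $\tbSigma^0_2$ but not $\tbDelta^0_2$, carries increasing alternating chains longer than $\omega$; the remaining technical wrinkle is the clopen-determined staged reduction needed for the effective half of item~1.
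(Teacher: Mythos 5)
Your items 1, 2, 3 and 5 are correct, though partly by different routes than the paper. For item 1 you prove the classical completeness of the eventually-zero sequences by a staged reduction, where the paper simply cites Kechris; for the $\tbSigma^0_2(\cantor)$-hardness in item 3 you build the direct map $h(x)=\{n\mid x\in U_n\}$, where the paper composes a reduction in $\cantor$ with the continuous bijection $\theta^{-1}$ -- both sound, and your versions make effectivity explicit. Item 2 is essentially the paper's (Selivanov's) argument, repackaged through Remark~\ref{rk:no increasing alternating}. Item 5 is where you genuinely diverge: the paper invokes Proposition~\ref{p:worst} (the $(H,\+A)$-alternating $\+R$-chains) together with the observation that $\FIN$ contains only chains of length $\omega$, whereas you reduce $\+S_2$ to the hypothetical hard family $\+G\subseteq\FIN$ and push the $(\omega+1)$-length $\+S_2$-alternating increasing chain of Proposition~\ref{p:long chain} through the reduction, forcing the image of the top element to be simultaneously finite (it lands in $\+G$) and infinite (it contains a strictly increasing union of finite sets). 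Both arguments rest on the same phenomenon -- increasing alternating chains of length greater than $\omega$ exist for $\+S_2$, while strictly increasing chains of finite sets have length at most $\omega$ -- but yours is a direct instantiation that avoids the $\+R$-chain machinery.

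Item 4, however, is a genuine gap: you prove the wrong statement. You read it as ``$\FIN\not\leq_W\+O$ for every non-empty open $\+O$'', but the paper's own proof fixes the intended reading unambiguously: ``no non-empty $\tbSigma^0_1(\PN)$ family admits a continuous reduction to $\FIN$'', i.e.\ $\+O\not\leq_W\FIN$ for every non-empty open $\+O$. The point of the item is that $\FIN$, although $\tbPi^0_1(\PN)$-hard by item 3, does not lie above any non-empty open family in the Wadge order. (A further sign that your reading is not the intended one: under it the statement is trivial and needs none of your chain argument, since preimages of open sets under continuous maps are open and $\FIN$ is not open, not being an upset.) The intended statement needs a different argument, namely the paper's: $\FIN$ is closed under subsets and continuous maps $\PN\to\PN$ are increasing, so any family Wadge reducible to $\FIN$ is closed under subsets; but a non-empty open family, being an upset, is closed under subsets only in the degenerate case $\+O=\PN$ (which the paper tacitly excludes -- indeed $\PN\leq_W\FIN$ via a constant map). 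Your item 4 should be replaced by this downward-closure argument; its effective version then follows a fortiori, as you say.
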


\begin{proof}
1. Classical result \cite{kechrisBook},
$\theta(\PN)$ is just the
set of all sequences eventually equal to~$0$.
 
2. We use an argument from \cite{selivanov2005}.
If $\FIN$ were $\tbPi^0_2(\PN)$ then
the family of infinite sets would be $\tbSigma^0_2(\PN)$,
hence a countable union of differences
of open sets $\bigcup_{n\in\N}\+U_n\setminus\+V_n$.
If $X\in\+U_n\setminus\+V_n$ then there exists $A$ finite
such that $A\subset X$ and $A\in\+U_n$.
Since $X\notin\+V_n$ which is an upset,  also $A\notin\+V_n$.
Thus, $\+U_n\setminus\+V_n$ contains a finite set, contradiction.

3. That $\FIN$ is Wadge hard for $\tbSigma^0_2(\cantor)$
follows easily from the fact that the set of sequences
eventually equal to $0$ is $\tbSigma^0_2(\cantor)$-complete,
and the continuity of 
the map $\theta^{-1}:\cantor\to\PN$  (indeed it  is an effective map).
For the second assertion, any family $\+Y$ in $\tbPi^0_1(\PN)$  can be written
$\+Y=\{Z \mid \forall A\in\+A\ A\not\subseteq Z\}$
where $\+A$ is a subfamily of $\FIN$.
Define $f:\PN\to\PN$ such that
$f(Z)=\N$ if $Z$ contains some $A\in\+A$
and $f(Z)=\emptyset$, otherwise.
Then, $f$ is a reduction of $\+X$ to~$\FIN$.

4. Since $\FIN$ is closed under subsets, and continuous reductions
$\PN\to\PN$ are  increasing,
every family $\+X\subset\PN$ which is Wadge reducible to $\FIN$
is closed under subsets.
In particular, no non-empty $\Sigma^0_1(\PN)$ family admits a
continuous reduction 
to~$\FIN$.

5. Observe that  $\FIN$ contains only chains of  length $\omega$ and
apply Proposition~\ref{p:worst}.
\end{proof}
\bigskip
{\bf Acknowledgements}:  The authors are indebted to an anonymous referee
for a huge number of improvements in our original manuscript.


\end{document}